\newtheorem{proposition}{\bf Proposition}[section]
\definecolor{lime}{HTML}{A6CE39}
\DeclareRobustCommand{\orcidicon}{%
	\begin{tikzpicture}
	\draw[lime, fill=lime] (0,0) 
	circle [radius=0.16] 
	node[white] {{\fontfamily{qag}\selectfont \tiny ID}};
	\draw[white, fill=white] (-0.0625,0.095) 
	circle [radius=0.007];
	\end{tikzpicture}
	\hspace{-2mm}
}
\xdef\csname orcid\x\endcsname{\noexpand\href{https://orcid.org/\csname orcidauthor\x\endcsname}{\noexpand\orcidicon}}
\begin{document}
\title{Inertia-gravity waves in geophysical vortices}

\author{J\'er\'emie Vidal$^{1}$ \& Yves Colin de Verdi\`ere$^{2}$}

\address{${}^{1}$ Universit\'e Grenoble Alpes, CNRS, ISTerre, 38000 Grenoble, France\\
${}^{2}$ Universit\'e Grenoble Alpes, CNRS, Institut Fourier, 38000 Grenoble, France\\
\orcidA{ JV, \href{https://orcid.org/0000-0002-3654-6633}{0000-0002-3654-6633}}}

\subject{geophysics, fluid mechanics, mathematical physics}

\keywords{waves, stratification, rotating flows, triaxial ellipsoids}

\corres{J\'er\'emie Vidal\\\email{jeremie.vidal@univ-grenoble-alpes.fr}}

\begin{abstract}
Pancake-like vortices are often generated by turbulence in geophysical flows. 
Here, we study the inertia-gravity oscillations that can exist within such geophysical vortices, due to the combined action of rotation and gravity. 
We consider a fluid enclosed within a triaxial ellipsoid, which is stratified in density with a constant Brunt-V\"ais\"al\"a frequency (using the Boussinesq approximation) and uniformly rotating along a (possibly) tilted axis with respect to gravity.
The wave problem is then governed by a mixed hyperbolic-elliptic equation for the velocity.
As in the rotating non-stratified case considered by Vantieghem (2014, \emph{Proc. R. Soc. A}, \textbf{470}, 20140093, \href{https://doi.org/10.1098/rspa.2014.0093}{doi:10.1098/rspa.2014.0093}), we find that the spectrum is pure point in ellipsoids (i.e. only consists of eigenvalues) with smooth polynomial eigenvectors.
Then, we characterise the spectrum using numerical computations (obtained with a bespoke Galerkin method) and asymptotic spectral theory. 
Finally, the results are discussed in light of natural applications (e.g. for Mediterranean eddies or Jupiter's vortices).
\end{abstract}


\begin{fmtext}
\section{Introduction}
Geophysical flows are often influenced by the action of density stratification and rotation. 
In particular, buoyancy supports the propagation of internal (gravity) waves in stratified fluids  \cite{mowbray1967theoretical}.
Similarly, the Coriolis force sustains inertial waves in homogeneous rotating fluids \cite{greenspan1968theory}.
Combining density stratification and global rotation then gives birth to a new wave family \cite{leblond1981waves}, usually called inertia-gravity waves (IGWs). 
Such various waves, which are ubiquitous in rotating and stratified environments, are often believed to be key for understanding the (turbulent) dynamics of geophysical flows \cite{dauxois2018instabilities,le2023wave}.
\end{fmtext}


\maketitle

\begin{figure}
\centering
\begin{tabular}{cc}
    \begin{tabular}{c}
    \begin{overpic}[height=0.27\textwidth]
{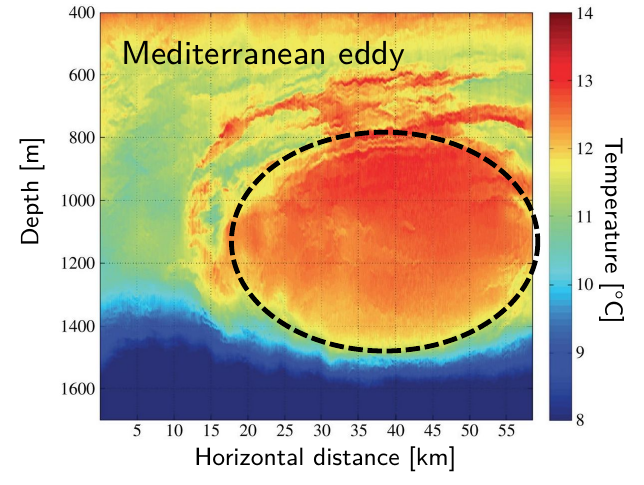}
    \put(-10,0){(a)}
    \end{overpic}
    \end{tabular} &
    \begin{tabular}{c}
    \begin{overpic}[height=0.27\textwidth]
{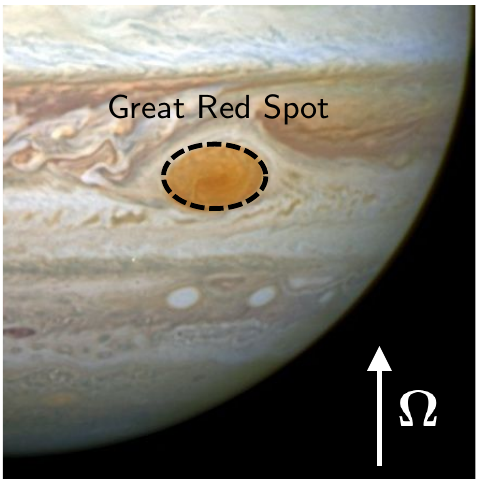}
    \put(-20,0){(b)}
    \end{overpic}
    \end{tabular} \\
\end{tabular}
\caption{(a) Cross-section of the temperature anomaly (acoustic tomography) in a Mediterranean eddy offshore in the Eastern Atlantic. Adapted from figure 3 in \cite{mcwilliams2016submesoscale}. (b) Picture of Jupiter's Great Red Spot (GRS) taken on 21 April 2014 with Hubble. Credits: NASA, ESA and A. Simon (Goddard Space Flight Center). Jupiter's axis of rotation is $\boldsymbol{\Omega}$.}
\label{fig:fig1}
\end{figure}


Another characteristic of rotating stratified fluids is that long-lived coherent vortices are often observed in geophysical conditions.
This is a direct manifestation of the high-Reynolds-number turbulence of geophysical flows. 
Rotating turbulence can sustain vortices with a large vertical extent without stratification (due to a nearly geostrophic balance \cite{hopfinger1993vortices}). 
However, such vortices become strongly flattened when density stratification is strong enough, leading to a pancake-like ellipsoidal shape \cite{billant2001self}. 
Notably, the elliptical shape strongly depends on the interplay between rotational effects, background shear, and the difference between the vortex's stratification and that of the ambient medium \cite{hassanzadeh2012universal,aubert2012universal,lemasquerier2020remote}. 
For instance, Mediterranean eddies are long-lived anticyclones of lenticular shape, with radii $10-100$~km and thicknesses of less than $1$~km  (figure \ref{fig:fig1}a), formed by the accumulation of warm salty water flowing from the Mediterranean Sea in the Atlantic Ocean  \cite{richardson2000census}.
Flattened vortices are also observed in planetary atmospheres. 
The most striking example is Jupiter's Great Red Spot (GRS, figure \ref{fig:fig1}b), which has persisted for more than a century  \cite{marcus1993jupiter}.

The origin of the long-term stability of such (nearly) isolated vortices has thus received much attention in geophysical fluid dynamics. 
For mathematical tractability, the quasi-geostrophic approximation is often employed to model a geophysical vortex by a fluid ellipsoid with spatially uniform potential vorticity \cite{mckiver2015ellipsoidal}. 
It has also been reported that such quasi-geostrophic vortices are often quite stable over time \cite{tsang2015ellipsoidal,facchini2016lifetime}.
However, various hydrodynamic instabilities could exist in geophysical vortices \cite{sipp1999vortices,yim2016stability} Ultimately, these instabilities may sustain (small-scale) bulk turbulence, which would likely affect the long-term vortex stability (because of additional dissipation).
In particular, the strong elliptical shape of geophysical vortices may trigger the so-called elliptical instability \cite{kerswell2002elliptical}. 
This is a parametric instability, due to nonlinear couplings between an elliptical flow and two waves (e.g. inertial or internal gravity waves).
Yet, it remains unclear whether the elliptical instability could be triggered for realistic parameters in rotating stratified fluids \cite{miyazaki1992three,kerswell1993elliptical,le2006thermo,vidal2019fossil}.
Before assessing the relevance of this mechanism for geophysical vortices, a preliminary step is to understand the wave properties in rotating stratified fluids. 

Although IGWs have been extensively studied in unbounded fluids \cite{leblond1981waves}, the properties of the global oscillations in bounded domains (e.g. the inertia-gravity modes, IGMs) are far from being fully understood. 
First insight into the mathematical problem can be obtained by considering the non-stratified regime, which has been extensively examined after Greenspan \cite{greenspan1968theory}. 
For homogeneous fluids viewed in a frame rotating at angular velocity $\boldsymbol{\Omega}$ (with respect to an inertial frame), inertial modes in a bounded domain $V$ are governed by the Poincar\'e problem
(called after Cartan \cite{cartan1922petites} who revisited Poincar\'e's pioneering paper
\cite{poincare1885equilibre}). 
The latter is given by
\begin{subequations}
\label{eq:poincareeq}
\begin{equation}
    \omega^2 \nabla^2 \Phi = (2 \boldsymbol{\Omega} \cdot \nabla)^2 \Phi, \quad \left . \nabla \Phi \boldsymbol{\cdot} \boldsymbol{n} \right |_{\partial V} = \left . \boldsymbol{n} \boldsymbol{\cdot} (\boldsymbol{u} \wedge 2 \boldsymbol{\Omega}) \right |_{\partial V},
    \tag{\theequation a--b}
\end{equation}
\end{subequations}
where $\boldsymbol{u}$ is the incompressible fluid velocity, $\Phi$ is the (reduced) hydrodynamic pressure and $\boldsymbol{n}$ is the unit vector normal to the fluid boundary $\partial V$.
Time dependence was assumed to be $\exp(\mathrm{i} \omega t)$, where $|\omega| < 2 |\boldsymbol{\Omega}|$ is the real-valued angular frequency \cite{greenspan1968theory}.
Equation (\ref{eq:poincareeq}a) is hyperbolic, but boundary condition (\ref{eq:poincareeq}b) is neither of Dirichlet-type or Neumann-type (it is a mixed-typed condition, sometimes called oblique condition). 
Hence, problem (\ref{eq:poincareeq}) is an ill-posed Cauchy problem \cite{rieutord2000wave}.
Nonetheless, it admits smooth polynomial eigenvectors when the geometry is a triaxial ellipsoid \cite{vantieghem2014inertial,ivers2017enumeration}.
Moreover, these eigenvectors form a complete set in the Hilbert subspace of complex-valued divergenceless fields tangent to the boundary with the $\mathrm{L}^2$ norm in ellipsoids \cite{backus2017completeness}.
Given the formal analogy between rotation and stratification \cite{veronis1970analogy}, internal (gravity) modes also obey an ill-posed problem in non-rotating stratified fluids \cite{rieutord1999analogy}. 
In the presence of stratification and rotation, the mathematical problem becomes even more complicated because IGMs obey a mixed hyperbolic-elliptic operator \cite{friedlander1982gafd}.
As for pure inertial \cite{brunet2019linear,favier2024super} and internal (gravity) waves \cite{maas1997observation}, IGWs can also converge after multiple reflections on solid boundaries to wave attractors \cite{pacary2023observation}. 
Attractors are interesting geometrical structures \cite{CdV2020attractors}, which are capable of focusing the wave energy at small length scales.

This work builds upon and extends previous studies of IGMs in confined geometries.
Allen \cite{allen1971some} and Friedlander \& Siegmann \cite{friedlander1982jfm} paid attention to IGMs in spherical (and cylindrical) containers when gravity is constant and parallel to the rotation axis. 
Misaligned cases were later considered in \cite{friedlander1982gafd} with arbitrary gravity fields. 
Here, we aim to study IGMs using a simple model retaining the key ingredients to describe pancake-like geophysical vortices. 
Briefly, we consider a fluid ellipsoid subject to a constant (ambient) gravity field and uniformly rotating along an axis that is tilted with respect to gravity (to account for the full planetary rotation). 
This setup allows us to extend prior results about pure inertial modes in ellipsoids \cite{vantieghem2014inertial,ivers2017enumeration,backus2017completeness}, while preserving 
polynomial eigenvectors with stratification.
Finally, we characterise the wave spectrum using the mathematical theory recently presented in Colin de Verdi\`ere \& Vidal \cite{CdV2023spectrum} for inertial modes in ellipsoids. 
The manuscript is divided as follows. 
We first formulate the wave problem for general fluid volumes in \S\ref{sec:problem}.
Then, we assume that the domain is an ellipsoid in \S\ref{sec:finitedim} to model geophysical vortices. 
The wave spectrum for an ellipsoid is mathematically described and compared to numerical solutions in \S\ref{sec:results}.
The results are discussed in light of geophysical applications in \S\ref{sec:discussion}, and we end the paper in \S\ref{sec:ccl}.
Additional (technical) details are provided in the Electronic Supplementary Material (ESM).

\section{Formulation of the general model}
\label{sec:problem}
\subsection{Primitive fluid dynamics equations}
\begin{figure}
\centering
\begin{tabular}{cc}
    \begin{tabular}{c}
    \begin{overpic}[height=0.25\textwidth]
{./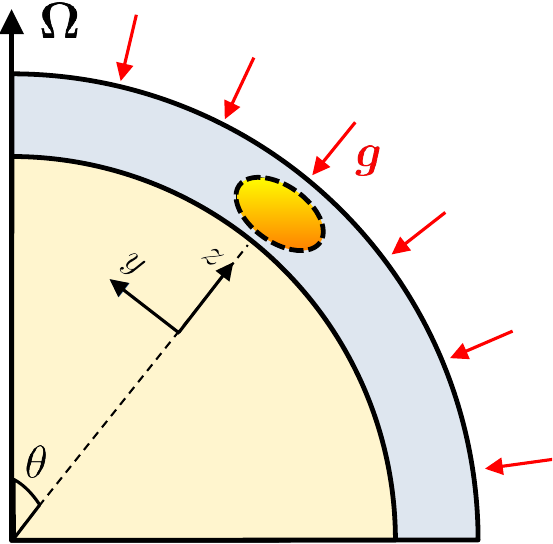}
    \put(-20,0){(a)}
    \end{overpic}
    \end{tabular} & 
    \begin{tabular}{c}
    \begin{overpic}[height=0.25\textwidth]
{./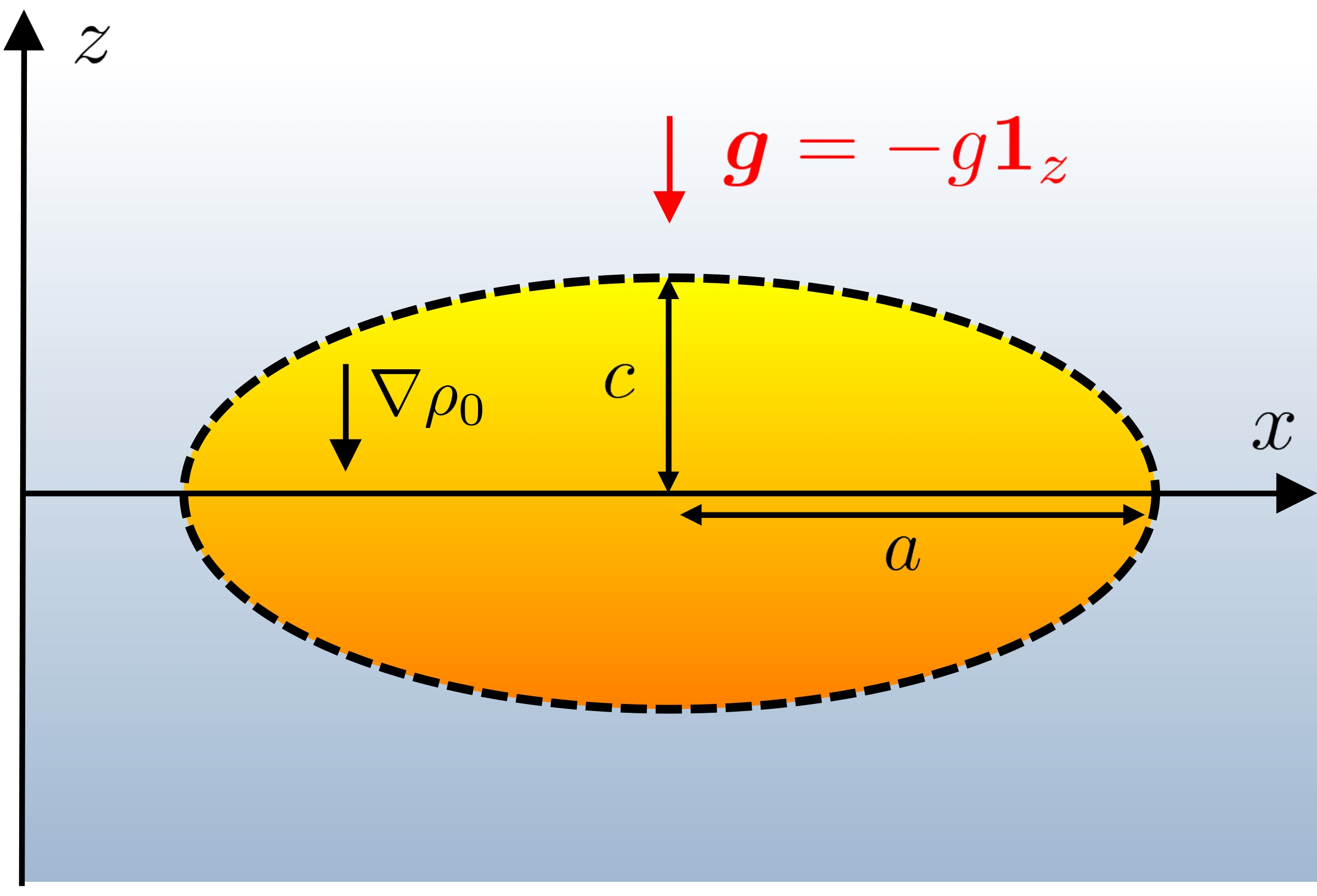}
    \put(-15,0){(b)}
    \end{overpic}
    \end{tabular}
\end{tabular}
\caption{Sketch of the mathematical model for geophysical vortices (not to scale). (a) Meridional cross-section of a vortex centred at colatitude $\theta$, which is embedded within an ocean or an atmosphere (blue region). (b) Front view of a flattened ellipsoidal vortex of semi-axes $a \gg b \gg c$, which is stratified under the uniform planet's gravity $\boldsymbol{g}$ (i.e. the background density $\rho_0$ decreases with increasing $z$).}
\label{fig:fig2}
\end{figure}

As illustrated in figure \ref{fig:fig2}(a), we model a vortex by a fluid domain of volume $V$ and (fictive) boundary $\partial V$, embedded within an ambient fluid (e.g. a planetary atmosphere or ocean) that is rotating at the angular velocity $\boldsymbol{\Omega} =\Omega_s \boldsymbol{1}_\Omega$ with respect to an inertial frame (where $\boldsymbol{1}_\Omega$ is a unit vector).
We work in the frame rotating at angular velocity $\boldsymbol{\Omega}$, and denote the position vector by $\boldsymbol{r} = (x,y,z)^\top$ using the Cartesian coordinates. 
The vortex is internally stratified in density under the action of gravity $\boldsymbol{g}$.
The vortex's stratification is quantified by the Brunt-V\"ais\"al\"a (BV) frequency $N$. 
Homogeneous fluids are such that $N^2=0$, whereas stably stratified fluids are characterised by $N^2 > 0$. 
As a starting point, we employ the Boussinesq approximation to account for density variations \cite{spiegel1960boussinesq}. 
The vortex is differentially rotating with respect to the ambient fluid outside $V$, which can be modelled by a background flow $\boldsymbol{U}_0$.  
Laboratory experiments \cite{aubert2012universal,lemasquerier2020remote} and numerical calculations \cite{hassanzadeh2012universal} show that, for incompressible fluids, $\boldsymbol{U}_0$ is often close to a uniform-vorticity flow. 
The latter can be sought as
$\boldsymbol{U}_0 = \boldsymbol{\omega} \times \boldsymbol{r} + \nabla \Psi$, where $\boldsymbol{\omega}$ is the vortex's rotation vector (either cyclonic or anticyclonic) and $\Psi$ is a scalar ensuring that the flow does not cross $\partial V$ (e.g. \cite{noir2013precession} in an ellipsoid). 
The amplitude of $\boldsymbol{U}_0$ is characterised by the Rossby number $Ro = |\boldsymbol{\omega}|/\Omega_s$, and we are interested in modelling stratified vortices in the regime $|Ro|\ll (N/\Omega_s)^2$.
Thus, we neglect the effects of $\boldsymbol{U}_0$ onto the background density and pressure. 
We further simplify the model by discarding baroclinic (and centrifugal effects), the very weak vortex's self-gravity, and the (weak) lateral variations of the planet's gravity at the size of a vortex.
Hence, the fluid is only subject to the ambient gravity $\boldsymbol{g} = -g(z) \boldsymbol{1}_z$ within $V$, where $\boldsymbol{1}_z$ is the unit vector along the vertical $z-$axis.
We denote by $\rho_*+\rho_0(z)$ the reference density within $V$ (with the mean density $\rho_*$), which is given by the hydrostatic equilibrium $\nabla P_0 =(\rho_*+\rho_0) \boldsymbol{g}$ where $P_0$ is the hydrostatic pressure. 

Now, we seek small-amplitude perturbations around the reference state $[\boldsymbol{U}_0, \rho_0, P_0]$. 
Without diffusion, the velocity $\boldsymbol{v}(\boldsymbol{r},t)$ obeys the linearised Euler equation given in the rotating frame by
\begin{equation}
    \partial_t \boldsymbol{v} + (\boldsymbol{U}_0 \boldsymbol{\cdot} \nabla ) \boldsymbol{v} + (\boldsymbol{v} \boldsymbol{\cdot} \nabla ) \boldsymbol{U}_0 + 2 \boldsymbol{\Omega} \times \boldsymbol{v} = - \nabla \pi + ({\rho}/{\rho_*}) \, \boldsymbol{g}
    \label{eq:Euler}
\end{equation}
together with the incompressible condition $\nabla \boldsymbol{\cdot} \boldsymbol{v} = 0$ for an incompressible fluid, and where $\pi = p/\rho_*$ is the reduced pressure. 
The density $\rho(\boldsymbol{r},t)$ is then governed by
\begin{equation}
 \partial_t \rho /\rho_* + \boldsymbol{U}_0 \cdot \nabla \rho/\rho_* = (\boldsymbol{v} \boldsymbol{\cdot} \boldsymbol{1}_z) \, N^2(z) / g(z),
 \label{eq:Mass}
\end{equation}
where $N^2(z) = \boldsymbol{g} \boldsymbol{\cdot} \nabla (\rho_0/\rho_*)$ is the BV frequency within the Boussinesq approximation.
The background flow has two main effects in the linear theory.
First, it is responsible for a shift in frequency of the wave motions (e.g. see in \cite{lam2023supply} for a discussion in unbounded fluids). 
Second, it can be responsible for the onset of hydrodynamic instabilities. 
In particular, when $\boldsymbol{U}_0$ is an elliptical flow \cite{aubert2012universal,lemasquerier2020remote}, the so-called elliptical instability could develop \cite{kerswell2002elliptical} and be responsible for the transition towards a wave turbulence regime \cite{le2017inertial}. 
For mathematical simplicity, we assume $\boldsymbol{U}_0 = \boldsymbol{0}$ in equations (\ref{eq:Euler})-(\ref{eq:Mass}) to focus on the free wave motions given by
\begin{subequations}
\label{eq:EulerMasswaves}
\begin{equation}
    \partial_t \boldsymbol{v} + 2 \boldsymbol{\Omega} \times \boldsymbol{v} = - \nabla \pi - N(z) \xi \, \boldsymbol{1}_z, \quad \nabla \boldsymbol{\cdot} \boldsymbol{v} = 0, \quad \partial_t \xi = (\boldsymbol{v} \boldsymbol{\cdot} \boldsymbol{1}_z) \, N(z),
    \tag{\theequation a--c}
\end{equation}
\end{subequations}
where we have introduced $\xi = (\rho/\rho_*) \,  g(z)/N(z)$.
The background flow will be reintroduced in future work (to investigate the outcome of the elliptical instability).

Finally, equations (\ref{eq:EulerMasswaves}a-c) are supplemented with boundary conditions (BCs).
We impose that the fields are regular at $\boldsymbol{r}=\boldsymbol{0}$.
Then, realistic conditions would be to couple the dynamics inside the vortex with the exterior flow on $\partial V$. 
This would amount to considering continuity conditions across $\partial V$, and decaying solutions at infinity (e.g for Rankine \cite{park2013instabilities} or Lamb-Oseen \cite{le2008inviscid} vortices).
For mathematical simplicity, we here neglect the coupling with the exterior fluid and assume that $\partial V$ is a rigid boundary.
Consequently, the velocity obeys the no-penetration BC $\left . \boldsymbol{v} \boldsymbol{\cdot} \boldsymbol{n} \right |_{\partial V} = 0$ on $\partial V$, where $\boldsymbol{n}$ is the unit (outward) vector normal to the boundary. 
Without diffusion, no other physical BCs have to be explicitly enforced in the model.
Indeed, the density and pressure BCs automatically follow from the velocity BC. 
In practice, the pressure satisfies a mixed BC that is given by the continuity of the normal component of equation (\ref{eq:EulerMasswaves}a) on $\partial V$ (see the ESM).

\subsection{Mixed hyperbolic-elliptic problem}
We seek modal solutions of equations (\ref{eq:EulerMasswaves}a-c) given by 
\begin{equation}
    \left [ \boldsymbol{v}, \pi, \xi \right ] (\boldsymbol{r},t) = \left [ \boldsymbol{u}, \Phi, \zeta \right ] (\boldsymbol{r}) \exp(\mathrm{i} \omega t),
\end{equation}
where $\left [ \boldsymbol{u}, \Phi, \zeta \right ]$ are complex-valued amplitudes depending on space, and with $\omega \in \mathbb{C}$. 
The wave properties are deeply tied to the nature of the mathematical problem. 
Instead of solving the primitive equations, it is worth considering a master equation that governs the evolution of either $\boldsymbol{u}$ or $\Phi$.  
If we multiply Euler equation (\ref{eq:EulerMasswaves}a) by $\mathrm{i} \omega$ and replace $\mathrm{i} \omega \zeta$ by buoyancy equation (\ref{eq:EulerMasswaves}c), we obtain for $\omega \neq 0$
\begin{subequations}
\label{eq:waveeqdt}
\begin{equation}
    - \omega^2 \boldsymbol{u} + 2 \boldsymbol{\Omega} \times (\mathrm{i} \omega \boldsymbol{u}) + N^2 (z) (\boldsymbol{u} \boldsymbol{\cdot} \boldsymbol{1}_z) \boldsymbol{1}_z = - \mathrm{i} \omega \nabla \Phi, \quad \nabla \boldsymbol{\cdot} \boldsymbol{u} = 0,
    \tag{\theequation a,b}
\end{equation}
\end{subequations}
Equations (\ref{eq:waveeqdt}a,b) and the no-penetration BC $\left . \boldsymbol{u} \boldsymbol{\cdot} \boldsymbol{n} \right |_{\partial V} = 0$ show that the velocity is an appropriate variable to explore the wave properties. 
Note that a similar equation can be obtained for the fluid particle displacement vector \cite{friedlander1985internal}.
On the contrary, non-oscillatory modes with $\omega=0$ require a specific consideration. 
For unstratified fluids with $N=0$, they are called geostrophic modes because they obey the geostrophic balance 
$2 \boldsymbol{\Omega} \times \boldsymbol{u} = - \nabla \Phi$.
When $N(z) > 0$, the steady modes are two-dimensional (2D) such that $\boldsymbol{u} \boldsymbol{\cdot}~\boldsymbol{1}_z=0$, and the density perturbation obeys the diagnostic equation $\rho_* \left (2 \boldsymbol{\Omega} \boldsymbol{\cdot} \nabla \boldsymbol{u}\right ) =~\nabla \zeta \times \boldsymbol{1}_z$ (resulting from the thermal wind balance).

Because of historical reasons (dating back to Poincar\'e \cite{poincare1885equilibre}), the problem is generally recast into a scalar equation for the pressure $\Phi$ \cite{friedlander1982jfm,friedlander1982gafd}. 
As shown in the ESM, the pressure is governed in $V$ by a second-order equation whose higher-order terms are given by
\begin{equation}
[ \omega^2 - N^2(z) ] \nabla^2 \Phi + N^2(z) \left ( \boldsymbol{1}_z \boldsymbol{\cdot} \nabla \right )^2 \Phi - \left ( 2 \boldsymbol{\Omega} \boldsymbol{\cdot} \nabla \right )^2 \Phi + \dots = 0
\label{eq:laplaP}
\end{equation}
when $|\omega| \neq~ \omega_\pm(z)$ with
\begin{equation}
2\omega_{\pm}^2(z) = [ N^2(z) + 4 |\boldsymbol{\Omega}|^2 ] \pm \sqrt{\left [ N^2(z) + 4|\boldsymbol{\Omega}|^2 \right ]^2 - 16 N^2(z) (\boldsymbol{\Omega} \boldsymbol{\cdot} \boldsymbol{1}_z)^2}.
\label{eq:omegapmleblonc}
\end{equation}
The lower-order terms in equation (\ref{eq:laplaP}), which vanish when $N$ is constant, are here omitted for concision but are given in the ESM. 
When $|\omega| = \omega_\pm(z)$, the pressure instead obeys a first-order equation (see the ESM). 
Finally, the pressure obeys a mixed BC on $\partial V$ (i.e. neither a pure Dirichlet nor Neumann BC), which is difficult to take into account for numerical computations.
However, the nature of the mathematical problem can be simply determined from the pressure equation by using microlocal analysis \cite{grubb2008distributions}.
This is a branch of mathematics, which studies the solutions of PDE using (notably) asymptotic geometric techniques. 
Microlocal tools have already proven useful in physics to study wave problems \cite{faure2023topo,venaille2023ray}.
In particular, the nature of equation (\ref{eq:laplaP}) is governed by a scalar quantity that is called the principal symbol.
The latter is obtained from the equation by freezing the variable coefficients and keeping only the highest-degree terms. 
In practice, it amounts to transforming the derivatives as
$\nabla \leftrightarrow~\mathrm{i} \boldsymbol{k}$, where $\boldsymbol{k}=~(k_1,k_2,k_3)^\top$ is a real-valued wave vector at the position $(x,y,z)$.
The principal symbol is then a homogeneous polynomial in $\boldsymbol{k}$, which encapsulates many properties of the spectral problem \cite{CdV2020attractors,colin2020spectral}. 
The problem is elliptic when the principal symbol is invertible, and hyperbolic when the principal symbol vanishes. 
The vanishing condition on the principal symbol gives
\begin{equation}
    |\boldsymbol{k}|^2 \omega^2 - \left [ N^2(z) |\boldsymbol{1}_z \times \boldsymbol{k}|^2 + (2\boldsymbol{\Omega} \boldsymbol{\cdot} \boldsymbol{k})^2 \right ] = 0,
    \label{eq:dispersionIGW}
\end{equation}
which is the dispersion relation of IGWs in unbounded fluids when $N$ is constant \cite{leblond1981waves}.
The wave-like equation is thus hyperbolic when equation (\ref{eq:dispersionIGW}) is satisfied for some non-zero wave vectors $\boldsymbol{k} \in \mathbb{R}^3$. 
Otherwise, it is elliptic (except when $\omega=0$). 
Next, the bounds for the hyperbolic domain can be obtained from equation (\ref{eq:dispersionIGW}).
The latter defines a quadric cone given by $(\boldsymbol{\mathfrak{g}}^\star)^{ij} k_i k_j = 0$ (using Einstein notation) when $\omega \neq 0$, where $\boldsymbol{\mathfrak{g}}^\star$ is the associated metric tensor.
The nature of this quadric cone depends on the eigenvalues of $\boldsymbol{\mathfrak{g}}^\star$, which are given by $\mu_0 = \omega^2 - N^2(z)$ and $\mu_\pm = \omega^2 - \omega_\pm^2(z)$. 
When the eigenvalues are non-zero and of the same sign, the surface is elliptic. 
On the contrary, the surface is hyperbolic when the eigenvalues are non-zero but of different signs. 
The above microlocal description agrees with the analysis presented in Section 3 of Friedlander \& Siegmann \cite{friedlander1982gafd}.
The wave-like equation is thus hyperbolic in $V$ when $\omega_-(z) <|\omega| < \omega_+(z)$, whereas it is elliptic in $V$ when $0 < |\omega| < \omega_-(z)$ or $|\omega| > \omega_+(z)$. 

\subsection{Spectral problem for the velocity}
Further wave properties can be obtained by considering the velocity equation. 
We introduce the Hilbert space $\boldsymbol{\mathcal{V}}$ spanned by complex-valued vector fields that are square-integrable in $V$. 
The complex-valued inner product in this Hilbert space is given by 
\begin{equation}
    \langle \boldsymbol{u},\boldsymbol{v}\rangle := \int_V \boldsymbol{u}^\dagger \boldsymbol{\cdot} \boldsymbol{v} \, \mathrm{d}V,
    \label{eq:innerproduct}
\end{equation}
where ${}^\dagger$ is the complex conjugate.
The associated norm is $||\boldsymbol{u}|| = \langle \boldsymbol{u}, \boldsymbol{u} \rangle ^{1/2}$. 
Then, we denote by $\boldsymbol{\mathcal{V}}^0 \subset \boldsymbol{\mathcal{V}}$ the closed subspace that is orthogonal, with respect to inner product (\ref{eq:innerproduct}), to the space spanned by vector fields made of gradients of smooth functions. 
A smooth element in $\boldsymbol{\mathcal{V}}^0$ is divergenceless and tangent to the boundary \cite{backus2017completeness}, that is
\begin{equation}
\boldsymbol{\mathcal{V}}^0 := \{\boldsymbol{u} \in \boldsymbol{\mathcal{V}} \ | \ \nabla \boldsymbol{\cdot} \boldsymbol{u} = 0 \ \text{in V}, \ \boldsymbol{u} \boldsymbol{\cdot} \boldsymbol{n} = 0 \ \text{on $\partial V$} \}.
\end{equation}
Finally, we introduce the orthogonal projector $\mathbb{L}$ from $\boldsymbol{\mathcal{V}}$ onto  $\boldsymbol{\mathcal{V}}^0$ (called Leray projector after \cite{leray1934mouvement}). 
For any vector $\boldsymbol{e} \in \boldsymbol{\mathcal{V}}$ written as $\boldsymbol{e} = \nabla \Psi + \nabla \times \boldsymbol{A}$ (using a Helmholtz decomposition) for some potentials $[\Psi, \boldsymbol{A}]$, the projected vector is defined by $\mathbb{L}(\boldsymbol{e}):= \boldsymbol{e} - \nabla \Psi$. 
This projector is used in the mathematical analysis of incompressible flows \cite{foias2001navier}, but also in some numerical studies \cite{teed2023solenoidal}.

Equipped with the above definitions, we can seek solutions of equation (\ref{eq:waveeqdt}) with $\boldsymbol{u} \in \boldsymbol{\mathcal{V}}^0$ in ellipsoids. 
We apply the orthogonal projector to the equation to remove the pressure term, and find that $(\omega, \boldsymbol{u})$ obey a quadratic eigenvalue problem (QEP). 
The latter is given by 
\begin{subequations}
\label{eq:QEP}
\begin{equation}
    \boldsymbol{\mathcal{Q}}_\omega (\boldsymbol{u}) = \boldsymbol{0}, \quad 
    \boldsymbol{\mathcal{Q}}_\omega := -\omega^2 \boldsymbol{\mathcal{I}} +  \omega \mathrm{i}\boldsymbol{\mathcal{C}}  + \boldsymbol{\mathcal{K}},
    \tag{\theequation a,b}
\end{equation}
\end{subequations}
where $\boldsymbol{\mathcal{I}}$ is the identity operator, and with the two operators
\begin{subequations}
\label{eq:CKop}
\begin{equation}
\mathrm{i} \boldsymbol{\mathcal{C}} (\boldsymbol{u}) := \mathrm{i} \mathbb{L} [2 \boldsymbol{\Omega} \times \boldsymbol{u}], \quad \boldsymbol{\mathcal{K}} (\boldsymbol{u}) := \mathbb{L} [N^2 (z) (\boldsymbol{u} \boldsymbol{\cdot} \boldsymbol{1}_z) \boldsymbol{1}_z].
\tag{\theequation a,b}
\end{equation}
\end{subequations}
The operator $\mathrm{i} \boldsymbol{\mathcal{C}}$, called the Poincar\'e operator \cite{CdV2023spectrum}, is a bounded self-adjoint operator in $\boldsymbol{\mathcal{V}}^0$ \cite{backus2017completeness}. 
The buoyancy operator $\boldsymbol{\mathcal{K}}$ has properties given by Proposition \ref{theo:operatorK}. 
\begin{proposition}
The operator $\boldsymbol{\mathcal{K}}$ is bounded, self-adjoint and positive in $\boldsymbol{\mathcal{V}}^0$.
\label{theo:operatorK}
\end{proposition}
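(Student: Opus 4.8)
The plan is to treat $\boldsymbol{\mathcal{K}}$ as the composition $\boldsymbol{\mathcal{K}} = \mathbb{L} M$, where $M$ denotes the multiplication operator $M(\boldsymbol{u}) := N^2(z)(\boldsymbol{u}\boldsymbol{\cdot}\boldsymbol{1}_z)\boldsymbol{1}_z$ acting on the full space $\boldsymbol{\mathcal{V}}$, and to establish each property first for $M$ on $\boldsymbol{\mathcal{V}}$ and then transfer it to $\boldsymbol{\mathcal{K}}$ on $\boldsymbol{\mathcal{V}}^0$. The mechanism for this transfer is that $\mathbb{L}$ is an orthogonal projector, hence bounded, self-adjoint, and idempotent, and that it acts as the identity on $\boldsymbol{\mathcal{V}}^0$ (so $\mathbb{L}\boldsymbol{u}=\boldsymbol{u}$ whenever $\boldsymbol{u}\in\boldsymbol{\mathcal{V}}^0$).

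For boundedness, I would first invoke the (physically natural) assumption that the stratification profile is regular on the bounded domain $V$, so that $N^2\in\mathrm{L}^\infty(V)$. Then, for any $\boldsymbol{u}\in\boldsymbol{\mathcal{V}}$, the estimate $||M\boldsymbol{u}||\leq ||N^2||_\infty\,||\boldsymbol{u}\boldsymbol{\cdot}\boldsymbol{1}_z||\leq ||N^2||_\infty\,||\boldsymbol{u}||$ follows because extracting the vertical component cannot increase the $\mathrm{L}^2$ norm. Since $||\mathbb{L}||=1$, boundedness of $\boldsymbol{\mathcal{K}}=\mathbb{L}M$ is immediate.

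For self-adjointness, the key computation is to verify directly from inner product (\ref{eq:innerproduct}) that $M$ is self-adjoint on $\boldsymbol{\mathcal{V}}$: both $\langle M\boldsymbol{u},\boldsymbol{v}\rangle$ and $\langle\boldsymbol{u},M\boldsymbol{v}\rangle$ reduce to $\int_V N^2(z)\,(\boldsymbol{u}^\dagger\boldsymbol{\cdot}\boldsymbol{1}_z)(\boldsymbol{v}\boldsymbol{\cdot}\boldsymbol{1}_z)\,\mathrm{d}V$, using only that $N^2$ and $\boldsymbol{1}_z$ are real. Then, for $\boldsymbol{u},\boldsymbol{v}\in\boldsymbol{\mathcal{V}}^0$, I would chain the self-adjointness of $\mathbb{L}$ with the fixed-point identities $\mathbb{L}\boldsymbol{u}=\boldsymbol{u}$ and $\mathbb{L}\boldsymbol{v}=\boldsymbol{v}$ to obtain $\langle\boldsymbol{\mathcal{K}}\boldsymbol{u},\boldsymbol{v}\rangle=\langle M\boldsymbol{u},\boldsymbol{v}\rangle=\langle\boldsymbol{u},M\boldsymbol{v}\rangle=\langle\boldsymbol{u},\boldsymbol{\mathcal{K}}\boldsymbol{v}\rangle$.

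For positivity, the same reduction gives the quadratic form $\langle\boldsymbol{\mathcal{K}}\boldsymbol{u},\boldsymbol{u}\rangle=\langle M\boldsymbol{u},\boldsymbol{u}\rangle=\int_V N^2(z)\,|\boldsymbol{u}\boldsymbol{\cdot}\boldsymbol{1}_z|^2\,\mathrm{d}V\geq 0$, since $N^2(z)\geq 0$ for a stably stratified fluid. I do not expect any genuine obstacle here, as the statement is essentially a routine verification: the only points requiring care are the mild regularity hypothesis $N^2\in\mathrm{L}^\infty$ needed for boundedness, and the systematic use of the idempotence and self-adjointness of the Leray projector $\mathbb{L}$ to pass cleanly between $\boldsymbol{\mathcal{V}}$ and its subspace $\boldsymbol{\mathcal{V}}^0$.
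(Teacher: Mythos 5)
Your proof is correct and takes essentially the same route as the paper's: both arguments rest on the orthogonality of the Leray projector (pairing $\boldsymbol{\mathcal{K}}(\boldsymbol{u})$ against elements of $\boldsymbol{\mathcal{V}}^0$ kills the gradient part of $N^2(\boldsymbol{u}\boldsymbol{\cdot}\boldsymbol{1}_z)\boldsymbol{1}_z$), a Cauchy--Schwarz estimate with the constant $\max(N^2)$, and the sign condition $N^2 \geq 0$ for positivity. The only cosmetic difference is that the paper proves boundedness through the self-referencing inequality $||\boldsymbol{\mathcal{K}}(\boldsymbol{u})||^2 = \langle \boldsymbol{\mathcal{K}}(\boldsymbol{u}), N^2 (\boldsymbol{u} \boldsymbol{\cdot} \boldsymbol{1}_z) \boldsymbol{1}_z \rangle \leq \max(N^2)\, ||\boldsymbol{\mathcal{K}}(\boldsymbol{u})||\,||\boldsymbol{u}||$, which is exactly the standard proof that $||\mathbb{L}|| \leq 1$ unfolded inline, whereas you quote $||\mathbb{L}||=1$ (plus idempotence and self-adjointness of $\mathbb{L}$) as a ready-made fact.
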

\begin{proof}
We must show that $\boldsymbol{\mathcal{K}}$ is symmetric, bounded, and positive in $\boldsymbol{\mathcal{V}}^0$.
We verify that $\boldsymbol{\mathcal{K}}$ is symmetric since $\langle \boldsymbol{a}, \boldsymbol{\mathcal{K}}(\boldsymbol{b}) \rangle = \langle \boldsymbol{\mathcal{K}}(\boldsymbol{a}), \boldsymbol{b} \rangle$ for any $(\boldsymbol{a},\boldsymbol{b}) \in \boldsymbol{\mathcal{V}}^0 \times \boldsymbol{\mathcal{V}}^0$. 
Moreover, we have 
\begin{equation*}
||\boldsymbol{\mathcal{K}}(\boldsymbol{u})||^2 = \langle \boldsymbol{\mathcal{K}}(\boldsymbol{u}), N^2 (\boldsymbol{u} \boldsymbol{\cdot} \boldsymbol{1}_z) \boldsymbol{1}_z \rangle - \langle \boldsymbol{\mathcal{K}}(\boldsymbol{u}), \nabla \Phi \rangle = \langle \boldsymbol{\mathcal{K}}(\boldsymbol{u}), N^2 (\boldsymbol{u} \boldsymbol{\cdot} \boldsymbol{1}_z) \boldsymbol{1}_z \rangle
\end{equation*}
since $\langle \boldsymbol{\mathcal{K}}(\boldsymbol{u}), \nabla \Phi \rangle = 0$ using the definition of $\mathbb{L}$.
Using Cauchy-Schwarz inequality, we then obtain $||\boldsymbol{\mathcal{K}}(\boldsymbol{u})||^2 \leq ||\boldsymbol{\mathcal{K}}(\boldsymbol{u})|| \, ||N^2 (\boldsymbol{u} \boldsymbol{\cdot} \boldsymbol{1}_z) \boldsymbol{1}_z|| \leq \max(N^2) ||\boldsymbol{\mathcal{K}}(\boldsymbol{u})|| \, ||\boldsymbol{u}||$, showing that $\boldsymbol{\mathcal{K}}$ is bounded. 
Finally, $\boldsymbol{\mathcal{K}}$ is positive because $\langle \boldsymbol{u}, \boldsymbol{\mathcal{K}}(\boldsymbol{u}) \rangle \geq 0$ if $N^2 \geq 0$.
\end{proof}

We want to characterise the spectrum $\sigma$ of the quadratic spectral form $\boldsymbol{\mathcal{Q}}_\omega$ in ellipsoids, which is the set of complex numbers $\omega$ for which $\boldsymbol{\mathcal{Q}}_\omega$ is not continuously invertible. 
Actually, the spectral problem $\boldsymbol{\mathcal{Q}}_\omega(\boldsymbol{u})=\boldsymbol{0}$ can be converted into a standard eigenvalue problem (SEP) for a bounded self-adjoint operator in $\boldsymbol{\mathcal{V}}^0 \times \boldsymbol{\mathcal{V}}^0$.

\begin{proof}
\label{theo:QEP2SEP}
This is a general result for spectral families of the form $\boldsymbol{\mathcal{Q}}_\omega := -\omega^2 + \omega \boldsymbol{\mathcal{A}} + \boldsymbol{\mathcal{B}}$ acting on a Hilbert space $\boldsymbol{\mathcal{H}}$, where $[\boldsymbol{\mathcal{A}},\boldsymbol{\mathcal{B}}]$ are bounded self-adjoint operators and with $\boldsymbol{\mathcal{B}} \geq 0$ \cite{barston1967eigenvalueII,valette1989etude}.
When $\omega\neq 0$, the QEP $\boldsymbol{\mathcal{Q}}_\omega(\boldsymbol{u})=\boldsymbol{0}$ can be recast as a SEP for the operators $\boldsymbol{\mathcal{L}}$ or $\boldsymbol{\mathcal{T}}$  given by 
\begin{subequations}
\label{eq:selfadjointT}
\begin{equation}
    \boldsymbol{\mathcal{L}} := \begin{pmatrix}
0 & \boldsymbol{\mathcal{I}} \\
\boldsymbol{\mathcal{B}} & \boldsymbol{\mathcal{A}} \end{pmatrix}, \quad \boldsymbol{\mathcal{T}}:= \begin{pmatrix}
    \boldsymbol{\mathcal{A}} & \boldsymbol{\mathcal{B}}^{1/2} \\
    \boldsymbol{\mathcal{B}}^{1/2} & 0 \\
  \end{pmatrix},
    \tag{\theequation a,b}
\end{equation}
\end{subequations}
which are both acting on the Hilbert space $\boldsymbol{\mathcal{H}} \times \boldsymbol{\mathcal{H}}$ equipped with the inner product 
$\langle \langle \boldsymbol{z},\boldsymbol{\zeta} \rangle \rangle = \langle \boldsymbol{z}_1, \boldsymbol{\zeta}_1 \rangle + \langle \boldsymbol{z}_2, \boldsymbol{\zeta}_2 \rangle$ with $\boldsymbol{z}=(\boldsymbol{z}_1, \boldsymbol{z}_2)^\top$ and $\boldsymbol{\zeta}=(\boldsymbol{\zeta}_1, \boldsymbol{\zeta}_2)^\top$. 
Here, $\boldsymbol{\mathcal{T}}$ is a bounded self-adjoint operator (where $\boldsymbol{B}^{1/2}$ is the square root of $\boldsymbol{B}$, which is also self-adjoint \cite{wouk1966note}), whereas $\boldsymbol{\mathcal{L}}$ has no particular symmetries. 
The spectra of $\boldsymbol{\mathcal{Q}}_\omega$, $\boldsymbol{\mathcal{L}}$ and $\boldsymbol{\mathcal{T}}$ are identical outside $\omega=0$.
\end{proof}
\noindent The spectrum $\sigma(\boldsymbol{\mathcal{Q}}_\omega)$ is thus the disjoint union of the point spectrum given by
\begin{equation}
    \sigma_P (\boldsymbol{\mathcal{Q}}_\omega) := \{ \omega \in \mathbb{C} \ | \ \exists \, \boldsymbol{u} \neq \boldsymbol{0}  \ \, \boldsymbol{\mathcal{Q}}_\omega (\boldsymbol{u}) = \boldsymbol{0} \},
\end{equation}
which is the subset spanned by the eigenvalues of QEP (\ref{eq:QEP}), and the remaining continuous spectrum (i.e. the set of $\omega$ for which $\boldsymbol{\mathcal{T}}-\omega \boldsymbol{\mathcal{I}}$ is injective and has a dense range, but is not surjective).
This mathematical distinction is of direct physical interest to IGMs.
Indeed, the point spectrum is associated with (regular) square-integrable eigenvectors in diffusionless fluids.
On the contrary, a non-empty continuous spectrum is often characterised by (almost) periodic attracting trajectories obtained from ray tracing techniques \cite{CdV2020attractors}.  
These geometrical structures, called attractors \cite{CdV2020attractors}, are associated with nearly singular (i.e. non-square-integrable) velocity fields when diffusion is vanishingly small \cite{dintrans1999gravito}. 
Given the symmetries of the different operators, some general properties of the spectrum are given in Proposition \ref{theo:propQEP}.

\begin{proposition} 
\label{theo:propQEP}
The spectrum of QEP (\ref{eq:QEP}) is real-valued and even.
Moreover, the spectrum is bounded by $|\omega| \leq \omega_{\max}$ with $\omega_{\max} \leq \Omega_s + \sqrt{\Omega_s^2 + \max(N^2)}$. 
Finally, two eigen-pairs $(\omega_1, \boldsymbol{u}_1)$ and $(\omega_2, \boldsymbol{u}_2)$ with $\omega_1 \neq \omega_2$ are not orthogonal with respect to inner product (\ref{eq:innerproduct}), but satisfy the two integral properties \begin{subequations}
\label{eq:orthogonalityIGM}
\begin{equation}
    \omega_1 \omega_2 \langle \boldsymbol{u}_2, \boldsymbol{u}_1 \rangle = - \langle \boldsymbol{u}_2, \boldsymbol{\mathcal{K}}(\boldsymbol{u}_1) \rangle, \quad (\omega_2 + \omega_1) \langle \boldsymbol{u}_2, \boldsymbol{u}_1 \rangle = \langle \boldsymbol{u}_2, \mathrm{i} \boldsymbol{\mathcal{C}}(\boldsymbol{u}_1)  \rangle.
    \tag{\theequation a,b}
\end{equation}
\end{subequations}
\end{proposition}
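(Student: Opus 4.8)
The plan is to treat the three assertions separately, exploiting in each case the self-adjointness of $\mathrm{i}\boldsymbol{\mathcal{C}}$ and of $\boldsymbol{\mathcal{K}}$ (the latter from Proposition \ref{theo:operatorK}), together with the reduction of QEP (\ref{eq:QEP}) to the bounded self-adjoint operator $\boldsymbol{\mathcal{T}}$ established above. For realness I would simply invoke that reduction: since the spectra of $\boldsymbol{\mathcal{Q}}_\omega$ and $\boldsymbol{\mathcal{T}}$ coincide away from $\omega=0$, and $\boldsymbol{\mathcal{T}}$ is bounded and self-adjoint on $\boldsymbol{\mathcal{V}}^0 \times \boldsymbol{\mathcal{V}}^0$, the spectrum is contained in $\mathbb{R}$ (and $\omega=0$ is trivially real). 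For evenness I would use that $\boldsymbol{\mathcal{C}}$, $\boldsymbol{\mathcal{K}}$ and $\mathbb{L}$ commute with complex conjugation (they are real operators), so that the antilinear conjugation map $J:\boldsymbol{u}\mapsto\boldsymbol{u}^\dagger$ satisfies $J\,\boldsymbol{\mathcal{Q}}_\omega=\boldsymbol{\mathcal{Q}}_{-\omega}\,J$ for real $\omega$, the sign flip coming from $J\,\mathrm{i}=-\mathrm{i}\,J$. Since $J$ is a bijection, $\boldsymbol{\mathcal{Q}}_\omega$ fails to be invertible $\iff$ $\boldsymbol{\mathcal{Q}}_{-\omega}$ does, giving $\omega\in\sigma\iff-\omega\in\sigma$; at the level of eigenvectors this reads $(\omega,\boldsymbol{u})\mapsto(-\omega,\boldsymbol{u}^\dagger)$.

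For the bound I would restrict to real $\omega$ (justified by the realness above) and test the equation against $\boldsymbol{u}$. For a unit eigenvector, $\langle\boldsymbol{u},\boldsymbol{\mathcal{Q}}_\omega(\boldsymbol{u})\rangle=0$ yields the scalar quadratic $\omega^2-c\,\omega-k=0$ with $c:=\langle\boldsymbol{u},\mathrm{i}\boldsymbol{\mathcal{C}}(\boldsymbol{u})\rangle$ and $k:=\langle\boldsymbol{u},\boldsymbol{\mathcal{K}}(\boldsymbol{u})\rangle$, both real. I would then bound $|c|\leq\|\mathrm{i}\boldsymbol{\mathcal{C}}\|\leq 2\Omega_s$ (from $\|\mathbb{L}\|\leq1$ and $|2\boldsymbol{\Omega}\times\boldsymbol{u}|\leq2\Omega_s\,|\boldsymbol{u}|$) and $0\leq k=\int_V N^2\,|\boldsymbol{u}\boldsymbol{\cdot}\boldsymbol{1}_z|^2\,\mathrm{d}V\leq\max(N^2)$, so that $|\omega|\leq\frac{1}{2}\left(|c|+\sqrt{c^2+4k}\right)\leq\Omega_s+\sqrt{\Omega_s^2+\max(N^2)}$. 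To cover the full spectrum (including any continuous part), I would instead note that $\boldsymbol{\mathcal{Q}}_\omega$ is self-adjoint for real $\omega$ and that the same quadratic estimate makes its form uniformly sign-definite once $|\omega|>\omega_{\max}$, whence $\boldsymbol{\mathcal{Q}}_\omega$ is boundedly invertible there and $\omega\notin\sigma$.

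For the integral relations I would write the eigen-equations $\boldsymbol{\mathcal{Q}}_{\omega_1}(\boldsymbol{u}_1)=\boldsymbol{0}$ and $\boldsymbol{\mathcal{Q}}_{\omega_2}(\boldsymbol{u}_2)=\boldsymbol{0}$, pair the first with $\boldsymbol{u}_2$ and the second with $\boldsymbol{u}_1$, and use the self-adjointness of $\mathrm{i}\boldsymbol{\mathcal{C}}$ and $\boldsymbol{\mathcal{K}}$ together with $\omega_1,\omega_2\in\mathbb{R}$ to move the operators onto $\boldsymbol{u}_1$. Subtracting the two resulting scalar identities cancels the $\boldsymbol{\mathcal{K}}$-term and leaves a common factor $(\omega_2-\omega_1)$; dividing by it (legitimate since $\omega_1\neq\omega_2$) produces (\ref{eq:orthogonalityIGM}b). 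Substituting this back into either identity eliminates the $\mathrm{i}\boldsymbol{\mathcal{C}}$-term and yields (\ref{eq:orthogonalityIGM}a). The failure of ordinary orthogonality is then just the observation that these identities express the cross terms through $\langle\boldsymbol{u}_2,\boldsymbol{u}_1\rangle$ rather than forcing it to vanish.

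I expect the only genuinely delicate point to be the passage from the point-spectrum statements (the algebra above) to the whole spectrum for the realness and boundedness claims, which is why I would lean on the self-adjoint reductions (the operator $\boldsymbol{\mathcal{T}}$, and the self-adjointness of $\boldsymbol{\mathcal{Q}}_\omega$ at real $\omega$) rather than on eigenvector manipulations alone. Everything else reduces to testing the equation against the eigenvectors and using the self-adjointness and positivity already recorded for $\mathrm{i}\boldsymbol{\mathcal{C}}$ and $\boldsymbol{\mathcal{K}}$.
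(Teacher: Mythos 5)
Your proposal is correct and follows essentially the same route as the paper: realness via the self-adjoint reduction to $\boldsymbol{\mathcal{T}}$, evenness via complex conjugation of the QEP (the map $(\omega,\boldsymbol{u})\mapsto(-\omega,\boldsymbol{u}^\dagger)$), and the bound and integral identities via quadratic-form and pairing manipulations of the eigen-equations. The only difference is one of completeness rather than of method: where the paper simply cites Barston for the min-max bound and the ``vector manipulations'' giving (\ref{eq:orthogonalityIGM}a,b), you carry these out explicitly (the Rayleigh-quotient estimate with $|c|\leq 2\Omega_s$ and $0\leq k\leq\max(N^2)$, and the subtraction argument producing (\ref{eq:orthogonalityIGM}b) then (\ref{eq:orthogonalityIGM}a)), and you correctly extend realness, evenness and boundedness from eigen-pairs to the full spectrum via invertibility of $\boldsymbol{\mathcal{Q}}_\omega$ — a faithful, self-contained expansion of the paper's sketch.
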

\begin{proof}
$\omega \in \mathbb{R}$ result from the fact that the operator $\boldsymbol{\mathcal{T}}$ defined in equation (\ref{eq:selfadjointT}b) is self-adjoint when $\boldsymbol{\mathcal{K}} \geq 0$. 
Then, the complex-conjugate of QEP (\ref{eq:QEP}) gives
\begin{equation*}
    -\omega^2 \boldsymbol{u}^\dagger - \omega \mathrm{i} \boldsymbol{\mathcal{C}}(\boldsymbol{u}^\dagger) + \boldsymbol{\mathcal{K}}(\boldsymbol{u}^\dagger) = \boldsymbol{0},
\end{equation*}
which shows that $(-\omega, \boldsymbol{u}^\dagger)$ is an eigen-pair of the QEP. 
The upper bound of $\omega_{\max}$ can be obtained using a min-max principle \cite{barston1967eigenvalueII}. 
Finally, the orthogonality conditions can be found using vector manipulations of the QEP  \cite{barston1967eigenvalue}. 
\end{proof}

\section{Ellipsoidal model for a constant BV frequency}
\label{sec:finitedim}
We now simplify the general model to consider an idealised configuration of geophysical vortices. 
To get physical insight into the problem, it is useful to consider a shape that is amenable to mathematical analysis.
For simplicity, we assume that the boundary $\partial V$ is a triaxial ellipsoid of semi-axes $[a,b,c]$, where the $c-$axis is aligned with gravity (figure \ref{fig:fig2}b). 
The use of ellipsoids has a long-standing history in the study of geophysical vortices \cite{mckiver2015ellipsoidal}, and is supported by recent fluid dynamic experiments \cite{aubert2012universal,lemasquerier2020remote}.
In the rotating frame, the boundary is expressed using the Cartesian coordinates as $F(x,y,z) = 1$ with
$F(x,y,z) := (x/a)^2 + (y/b)^2 + (z/c)^2$, such that the unit normal is $\boldsymbol{n} :=~ \nabla F / ||\nabla F||$. 
Moreover, geophysical vortices with $N \gg 2 \Omega_s$ are usually strongly flattened with $a,b \gg c$ (e.g. Jupiter's GRS is expected to have a depth of only a few hundreds of km \cite{lemasquerier2020remote}). 
In such vortices, the ambient gravity is expected to weakly varies with depth. 
Thus, we also assume that $\boldsymbol{g} = -g\boldsymbol{1}_z$ is uniform and consider a background density $\rho_* + \rho_0(z)$ that varies linearly with $z$ as
\begin{equation}
\rho_0 (z) /\rho_* = -(N^2/g) \, z,
\label{eq:rho0LN}
\end{equation}
where $N$ is a constant BV frequency. 
With these assumptions, the eigensolutions can be characterised using finite-dimensional arguments in ellipsoids. 

\subsection{Polynomial description}
Since an ellipsoid is a quadratic surface and expression (\ref{eq:rho0LN}) is a polynomial, we can seek eigenvectors using polynomial functions in the Cartesian coordinates. 
We denote by $\mathcal{P}_n: \mathbb{R}^3 \to \mathbb{C}$ the space spanned by the scalar Cartesian monomials $x^i y^j z^k$ of degree $i+j+k$ smaller or equal to $n$ with $n\geq 1$.
Its dimension is $N(\mathcal{P}_n) = (n+1)(n+2)(n+3)/6$. 
Then, we denote by $\boldsymbol{\mathcal{V}}_n$ the space of vectors in ellipsoids with coefficients in $\mathcal{P}_n$ and define the subspace $\boldsymbol{\mathcal{V}}_n^0 := \boldsymbol{\mathcal{V}}_n \cap \boldsymbol{\mathcal{V}}^0$ of dimension $N(\boldsymbol{\mathcal{V}}_n^0) = n(n+1)(2n+7)/6$, whose elements are divergenceless vector fields that are tangent to the boundary $\partial V$.
Finally, we introduce the subspaces $\boldsymbol{\mathcal{W}}_n^0 :=~ (\boldsymbol{
\mathcal{V}}_{n-1}^0)^\perp \cap \boldsymbol{\mathcal{V}}_n^0$ of dimension $N(\boldsymbol{\mathcal{W}}_n^0) = n(n+2)$, which are spanned by vectors in $\boldsymbol{\mathcal{V}}_n^0$ of degree $n$ that are orthogonal to polynomial vectors of degree $\leq n-1$. 
These polynomial spaces are key to characterise the spectrum because of Proposition \ref{theorem:invariancePcPk}.

\begin{proposition}
The spaces $\boldsymbol{\mathcal{V}}^0_n$ with $n \geq 1$ are invariant by the action of $\boldsymbol{\mathcal{C}}$ and $\boldsymbol{\mathcal{K}}$ in triaxial ellipsoids, that is $\boldsymbol{\mathcal{C}}_{|\boldsymbol{\mathcal{V}}^0_n} \subseteq \boldsymbol{\mathcal{V}}^0_n$ and $\boldsymbol{\mathcal{K}}_{|\boldsymbol{\mathcal{V}}^0_n} \subseteq \boldsymbol{\mathcal{V}}^0_n$.
\label{theorem:invariancePcPk}
\end{proposition}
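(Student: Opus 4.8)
The plan is to show directly that each operator maps a polynomial vector field of degree $\leq n$ that lies in $\boldsymbol{\mathcal{V}}^0_n$ back into $\boldsymbol{\mathcal{V}}^0_n$. Since $\boldsymbol{\mathcal{V}}^0_n = \boldsymbol{\mathcal{V}}_n \cap \boldsymbol{\mathcal{V}}^0$, I must verify two things for the image: first, that it is again polynomial of degree $\leq n$ (membership in $\boldsymbol{\mathcal{V}}_n$), and second, that it remains divergenceless and tangent to $\partial V$ (membership in $\boldsymbol{\mathcal{V}}^0$). The subtlety is entirely in the Leray projector $\mathbb{L}$ appearing in the definitions (\ref{eq:CKop}a,b), since the ``raw'' algebraic parts of the operators are manifestly polynomial but a priori $\mathbb{L}$ could raise the degree or destroy polynomiality.

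First I would handle the degree bookkeeping of the algebraic (pre-projection) parts. For any $\boldsymbol{u} \in \boldsymbol{\mathcal{V}}^0_n$, the field $2\boldsymbol{\Omega} \times \boldsymbol{u}$ is a polynomial vector field of the same degree $\leq n$, since $\boldsymbol{\Omega}$ is a constant vector and cross-multiplication only recombines components linearly. Likewise $N^2(\boldsymbol{u}\boldsymbol{\cdot}\boldsymbol{1}_z)\boldsymbol{1}_z$ has degree $\leq n$, because $N$ is constant (this is exactly the constant-BV assumption of \S\ref{sec:finitedim}, via (\ref{eq:rho0LN})) and projecting onto $\boldsymbol{1}_z$ with the inner product $\boldsymbol{u}\boldsymbol{\cdot}\boldsymbol{1}_z$ extracts a single scalar component without changing degree. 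Crucially, if $N$ were not constant this step would already fail, which explains why the proposition is stated only in the present setting.

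Next, and this is where I expect the main obstacle, I must show that $\mathbb{L}$ preserves the polynomial space $\boldsymbol{\mathcal{V}}_n$ in a triaxial ellipsoid. The key claim is that for any polynomial vector field $\boldsymbol{e}$ of degree $\leq n$, the Leray projection $\mathbb{L}(\boldsymbol{e}) = \boldsymbol{e} - \nabla\Psi$ is again polynomial of degree $\leq n$, where $\Psi$ solves the Neumann problem $\nabla^2\Psi = \nabla\boldsymbol{\cdot}\boldsymbol{e}$ in $V$ with $\partial_n \Psi = \boldsymbol{e}\boldsymbol{\cdot}\boldsymbol{n}$ on $\partial V$. The argument I would use exploits the fact that the boundary is the zero set of the quadratic $F-1$ with $F(x,y,z)=(x/a)^2+(y/b)^2+(z/c)^2$: on an ellipsoid the relevant Neumann problem with polynomial data admits a polynomial potential $\Psi$ whose degree is controlled by that of the data, because the ellipsoidal harmonics/polynomial structure closes under the Laplacian and the quadratic boundary constraint. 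Equivalently, I would invoke the known result (from \cite{backus2017completeness,ivers2017enumeration} and the construction underlying the dimension count $N(\boldsymbol{\mathcal{V}}_n^0) = n(n+1)(2n+7)/6$) that $\boldsymbol{\mathcal{V}}_n$ decomposes orthogonally into polynomial gradients and the polynomial solenoidal-tangent fields $\boldsymbol{\mathcal{V}}_n^0$, so that $\mathbb{L}$ restricted to $\boldsymbol{\mathcal{V}}_n$ is exactly the orthogonal projection onto $\boldsymbol{\mathcal{V}}_n^0$ within this finite-dimensional space and therefore maps $\boldsymbol{\mathcal{V}}_n$ into $\boldsymbol{\mathcal{V}}_n^0 \subseteq \boldsymbol{\mathcal{V}}_n$.

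With that closure property in hand, the proposition follows quickly: applying $\mathbb{L}$ to the degree-$\leq n$ algebraic images produces elements of $\boldsymbol{\mathcal{V}}_n^0$ by construction, so $\boldsymbol{\mathcal{C}}_{|\boldsymbol{\mathcal{V}}^0_n} \subseteq \boldsymbol{\mathcal{V}}^0_n$ and $\boldsymbol{\mathcal{K}}_{|\boldsymbol{\mathcal{V}}^0_n} \subseteq \boldsymbol{\mathcal{V}}^0_n$. The divergence-free and tangency conditions are automatic because the image of $\mathbb{L}$ lies in $\boldsymbol{\mathcal{V}}^0$ by definition of the Leray projector. The one point I would be careful to justify rigorously, rather than assert, is the degree non-increase under $\mathbb{L}$ on the ellipsoid; I would either cite the explicit polynomial basis construction for $\boldsymbol{\mathcal{V}}_n^0$ in \cite{ivers2017enumeration} or give a short self-contained argument that the pressure $\Psi$ solving $\nabla^2\Psi = \nabla\boldsymbol{\cdot}\boldsymbol{e}$ with the mixed ellipsoidal boundary data can be chosen polynomial of degree $\leq n$, using that $\nabla\boldsymbol{\cdot}\boldsymbol{e}$ has degree $\leq n-1$ and that the ellipsoidal Neumann solution operator raises degree by at most one.
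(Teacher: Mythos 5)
Your proposal is correct and takes essentially the same route as the paper: after the (immediate) degree bookkeeping for the algebraic parts with constant $\boldsymbol{\Omega}$ and $N$, the crux in both arguments is that the Leray projector preserves polynomial degree on a triaxial ellipsoid because the Neumann-type problem defining $\nabla\Psi$ with polynomial data admits a unique polynomial solution $\Psi \in \mathcal{P}_{n+1}$ — the paper secures exactly this step by citing Theorem 2.4 of \cite{axler2018neumann}, together with \cite{backus2017completeness,CdV2023spectrum} for the invariance under $\boldsymbol{\mathcal{C}}$. The only difference is which reference is invoked for that polynomial-solvability lemma, which is cosmetic.
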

\begin{proof}
The invariance of $\boldsymbol{\mathcal{V}}^0_n$ under the action of 
$\boldsymbol{\mathcal{C}}$ has been demonstrated elsewhere \cite{backus2017completeness,CdV2023spectrum}. Here, we can use similar arguments to show that it is invariant by the action of $\boldsymbol{\mathcal{K}}$. 
Briefly, we must prove that the Leray projector $\mathbb{L}$ is well-defined on $\boldsymbol{\mathcal{V}}_n^0$.
This is ensured by Theorem 2.4 in \cite{axler2018neumann}, which shows that there is a unique scalar $\Psi \in \mathcal{P}_{n+1}$ in the definition of $\mathbb{L}$. 
\end{proof}

We deduce from Proposition \ref{theorem:invariancePcPk} that $\boldsymbol{\mathcal{V}}^0_n$ is invariant by the action of $\boldsymbol{\mathcal{Q}}_\omega$.
Moreover, since $\oplus_{n\in \mathbb{N}^*} \boldsymbol{\mathcal{V}}_n^0$ is dense in $\boldsymbol{\mathcal{V}}^0$ \cite{CdV2023spectrum}, it shows that $\sigma(\boldsymbol{\mathcal{Q}}_\omega)$ is pure point in ellipsoids (i.e. $\sigma = \sigma_P$). 
The continuous spectrum is thus empty, and equation (\ref{eq:waveeqdt}) admits square-integrable eigenvectors in $\boldsymbol{\mathcal{V}}_n^0$. 
From a physical viewpoint, an empty continuous spectrum means that there are no singular velocity fields associated with wave attractors.
This is a difference with other geometries, in which wave attractors can exist even when $N$ is constant (e.g. in a trapezoid \cite{pacary2023observation}).
However, this does not preclude the existence of (almost) periodic wave trajectories using ray tracing techniques. 
The classical example is the pure inertial wave problem within a sphere.
Periodic trajectories have been found in this geometry using ray theory \cite{rabitti2014inertial} but, since the continuous spectrum is also empty in a full sphere\cite{ivers2015enumeration,backus2017completeness}, these structures are not associated with singular global modes for vanishingly small diffusion.  
Here, because the spectrum is pure point, we can solve the eigenvalue problem by restricting $\boldsymbol{u}$ to $\boldsymbol{\mathcal{V}}_n^0$ in ellipsoids.

\subsection{Galerkin algorithm}
An accurate numerical description can be developed by 
seeking $\boldsymbol{u} \in \boldsymbol{\mathcal{V}}_n^0$ as $\boldsymbol{u} = \sum_{j\geq 1} \alpha_j \boldsymbol{e}_j$, where $\{\alpha_j\}_{j\geq 1}$ are complex-valued coefficients, and $\{\boldsymbol{e}_j\}_{j\geq 1}$ are real-valued basis vector elements of $\boldsymbol{\mathcal{V}}_n^0$ \cite{lebovitz1989stability} normalised such that $||\boldsymbol{e}_j|| = 1$. 
Then, we substitute the above expansion into the QEP $\boldsymbol{\mathcal{Q}}_\omega(\boldsymbol{u})=\boldsymbol{0}$, and project the entire equation on every $\boldsymbol{e}_i$ using inner-product (\ref{eq:innerproduct}) to cancel out the spatial dependence (Galerkin method \cite{lebovitz1989stability}).
This gives the finite-dimensional QEP
\begin{equation}
\left ( -\omega^2 \boldsymbol{M}_n + \omega \mathrm{i} \boldsymbol{C}_n + \boldsymbol{K}_n \right ) \boldsymbol{\alpha} = \boldsymbol{0}
\label{eq:QEPVn0}
\end{equation}
where $\boldsymbol{\alpha}=(\alpha_1,\alpha_2,\dots)^\top$ is the eigenvector, and where $[\boldsymbol{M}_n,\boldsymbol{C}_n,\boldsymbol{K}_n]$ are three real-valued matrices of size $N({\boldsymbol{\mathcal{V}}_n^0}) \times N({\boldsymbol{\mathcal{V}}_n^0})$ and of elements 
$M_{n,ij} = \langle \boldsymbol{e}_i, \boldsymbol{e}_j \rangle$, $C_{n,ij} = \langle \boldsymbol{e}_i, 2 \boldsymbol{\Omega} \times \boldsymbol{e}_j \rangle$ and $K_{n,ij} =~ \langle \boldsymbol{e}_i, N^2 (\boldsymbol{e}_j \boldsymbol{\cdot} \boldsymbol{1}_z) \boldsymbol{1}_z \rangle$. 
The matrix $\boldsymbol{M}_n$ is positive definite but non-diagonal (because the chosen basis elements are not mutually orthogonal by construction in triaxial ellipsoids \cite{lebovitz1989stability}).
The matrix $\boldsymbol{C}_n$ is antisymmetric, and $\boldsymbol{K}_n$ is a real-valued positive matrix. 
Further properties of $[\boldsymbol{C}_n, \boldsymbol{K}_n]$ are given below in Proposition \ref{theo:dimkerCK}.

\begin{proposition}
\label{theo:dimkerCK}
$\dim (\ker \boldsymbol{C}_n) = \lceil n/2 \rceil$ and $\dim (\ker \boldsymbol{K}_n) = n(n+1)(n+2)/6$.
\end{proposition}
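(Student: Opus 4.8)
The plan is to handle the two kernels separately, after first reducing each \emph{matrix} kernel to the kernel of the corresponding \emph{operator} restricted to $\boldsymbol{\mathcal{V}}_n^0$. Since $\boldsymbol{e}_i\in\boldsymbol{\mathcal{V}}^0$ and $\mathbb{L}$ is the orthogonal projector onto $\boldsymbol{\mathcal{V}}^0$, one has $C_{n,ij}=\langle\boldsymbol{e}_i,\boldsymbol{\mathcal{C}}(\boldsymbol{e}_j)\rangle$ and $K_{n,ij}=\langle\boldsymbol{e}_i,\boldsymbol{\mathcal{K}}(\boldsymbol{e}_j)\rangle$, so that $\boldsymbol{C}_n=\boldsymbol{M}_n\widehat{\boldsymbol{C}}$ and $\boldsymbol{K}_n=\boldsymbol{M}_n\widehat{\boldsymbol{K}}$, where $\widehat{\boldsymbol{C}},\widehat{\boldsymbol{K}}$ are the matrices of $\boldsymbol{\mathcal{C}},\boldsymbol{\mathcal{K}}$ acting inside $\boldsymbol{\mathcal{V}}_n^0$ (well defined by Proposition \ref{theorem:invariancePcPk}). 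As $\boldsymbol{M}_n$ is positive definite, $\ker\boldsymbol{C}_n\cong\ker(\boldsymbol{\mathcal{C}}_{|\boldsymbol{\mathcal{V}}_n^0})$ and $\ker\boldsymbol{K}_n\cong\ker(\boldsymbol{\mathcal{K}}_{|\boldsymbol{\mathcal{V}}_n^0})$, and it suffices to count polynomial fields in these operator kernels.

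For $\boldsymbol{\mathcal{C}}$, the condition $\boldsymbol{\mathcal{C}}(\boldsymbol{u})=\boldsymbol{0}$ means $2\boldsymbol{\Omega}\times\boldsymbol{u}=\nabla p$; taking the curl and using $\nabla\boldsymbol{\cdot}\boldsymbol{u}=0$ yields $(\boldsymbol{\Omega}\boldsymbol{\cdot}\nabla)\boldsymbol{u}=\boldsymbol{0}$, i.e. the kernel is the geostrophic (columnar) subspace of divergenceless, boundary-tangent fields invariant along $\boldsymbol{\Omega}$. Counting its polynomial representatives of degree $\leq n$ gives $\lceil n/2\rceil$; this enumeration is exactly the one already performed for pure inertial modes in ellipsoids \cite{vantieghem2014inertial,CdV2023spectrum}, which I would import (equivalently, one describes the geostrophic fields by a potential constant on the geostrophic contours and counts its polynomial representatives by degree).

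For $\boldsymbol{\mathcal{K}}$, with $N$ constant, $\boldsymbol{\mathcal{K}}(\boldsymbol{u})=\mathbb{L}[N^2(\boldsymbol{u}\boldsymbol{\cdot}\boldsymbol{1}_z)\boldsymbol{1}_z]=\boldsymbol{0}$ holds iff $u_z\boldsymbol{1}_z$ is a gradient, i.e. $u_z=u_z(z)$. The first key step is to upgrade this to $u_z\equiv0$: integrating $\nabla\boldsymbol{\cdot}\boldsymbol{u}=0$ over the cap $V\cap\{z>z_0\}$ and using no-penetration on $\partial V$ shows that the horizontal flux $u_z(z_0)\,|V\cap\{z=z_0\}|$ vanishes, hence $u_z\equiv0$ since the slice area is positive. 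Thus $\ker(\boldsymbol{\mathcal{K}}_{|\boldsymbol{\mathcal{V}}_n^0})$ is the space of horizontal ($u_z=0$), divergence-free, boundary-tangent polynomial fields of degree $\leq n$. Introducing a polynomial stream function $\psi$ with $u_x=\partial_y\psi,\ u_y=-\partial_x\psi$ (defined modulo functions of $z$, $\deg\psi\leq n+1$), the tangency $\tfrac{x}{a^2}u_x+\tfrac{y}{b^2}u_y=0$ on $\partial V$ becomes $\boldsymbol{\chi}\psi\in(F-1)\mathcal{P}_{n-1}$, with $\boldsymbol{\chi}:=\tfrac{x}{a^2}\partial_y-\tfrac{y}{b^2}\partial_x$, the factorisation being justified because a polynomial vanishing on $\{F=1\}$ is divisible by $F-1$.

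The final step is the dimension count, which I expect to be the main obstacle. Rescaling $(x,y,z)\mapsto(x/a,y/b,z/c)$ turns $\partial V$ into the unit sphere, sends $F-1\mapsto r^2-1$, and turns $\boldsymbol{\chi}$ into a multiple of the azimuthal generator $L_z=X\partial_Y-Y\partial_X$. Decomposing $\psi$ into azimuthal components $\propto e^{\mathrm{i} m\phi}$, the constraint $L_z\psi\in(r^2-1)\mathcal{P}_{n-1}$ splits per $m$: the axisymmetric part ($m=0$) is free, whereas each $m\neq0$ component must be divisible by $r^2-1$. Counting the admissible $\psi$ and removing the $(n+2)$-dimensional gauge of functions of $z$, the axisymmetric and non-axisymmetric contributions recombine through the identity $a(n+1)-a(n-1)=n+2$ (where $a(M)$ counts the monomials $(X^2+Y^2)^j Z^k$ of degree $\leq M$) to give $\dim\ker\boldsymbol{K}_n=\binom{n+2}{3}=n(n+1)(n+2)/6$. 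The delicate points are genuinely the promotion $u_z(z)\mapsto u_z\equiv0$, which uses boundedness and tangency rather than $\boldsymbol{\mathcal{K}}(\boldsymbol{u})=\boldsymbol{0}$ alone, and the azimuthal bookkeeping producing this clean cancellation; by contrast the Coriolis kernel reduces to an already-established geostrophic enumeration.
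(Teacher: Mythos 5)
Your proposal is correct, and its skeleton matches the paper's: both arguments reduce $\ker \boldsymbol{C}_n$ to the geostrophic enumeration already proved in \cite{CdV2023spectrum}, and both identify $\ker \boldsymbol{K}_n$ with the horizontal ($u_z=0$) divergence-free fields tangent to $\partial V$, counted via stream functions. The difference lies in how the count is closed. The paper's proof is a one-line parametrization: every such field is written as $\boldsymbol{u}=\nabla\times(\Psi\boldsymbol{1}_z)$ with $\Psi=p\,[1-F(x,y,z)]$ and $p\in\mathcal{P}_{n-1}$, so that tangency and the gauge normalization (stream function vanishing on $\partial V$) are built in and the dimension is immediately $\dim\mathcal{P}_{n-1}=n(n+1)(n+2)/6$. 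You instead keep a general stream function, encode tangency as the divisibility condition $\boldsymbol{\chi}\psi\in(F-1)\mathcal{P}_{n-1}$, rescale to the sphere, split into azimuthal components and quotient by the functions-of-$z$ gauge; the bookkeeping is heavier (and your identity $a(n+1)-a(n-1)=n+2$ does close it correctly), but it proves explicitly the surjectivity fact that the paper's ``can be sought as'' leaves implicit, namely that \emph{every} horizontal field in $\boldsymbol{\mathcal{V}}_n^0$ admits a polynomial stream function reducible to the form $p\,[1-F]$ after removing a function of $z$. Your proposal also fills a genuine gap that the paper glosses over: $\boldsymbol{\mathcal{K}}(\boldsymbol{u})=\boldsymbol{0}$ by itself only gives that $u_z\boldsymbol{1}_z$ is a gradient, i.e. $u_z=u_z(z)$, and your flux argument over the slices $V\cap\{z=z_0\}$ (using incompressibility and no-penetration) is exactly what is needed to promote this to $u_z\equiv 0$; the paper simply asserts the identification with 2D motions. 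Likewise, your preliminary reduction $\boldsymbol{C}_n=\boldsymbol{M}_n\widehat{\boldsymbol{C}}$, $\boldsymbol{K}_n=\boldsymbol{M}_n\widehat{\boldsymbol{K}}$, so that matrix kernels coincide with the kernels of $\boldsymbol{\mathcal{C}}_{|\boldsymbol{\mathcal{V}}_n^0}$ and $\boldsymbol{\mathcal{K}}_{|\boldsymbol{\mathcal{V}}_n^0}$, is only implicit in the paper; making it explicit is a small but worthwhile addition since the Galerkin basis is not orthogonal in a triaxial ellipsoid.
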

\begin{proof}
When $N=0$, $\ker (\boldsymbol{C}_n)$ is associated with  geostrophic modes of degree $n$. 
Its dimension was proved in \cite{CdV2023spectrum}. 
When $N\neq 0$, $\ker (\boldsymbol{K}_n)$ is associated with 2D motions such that $\boldsymbol{u} \boldsymbol{\cdot} \boldsymbol{1}_z = 0$.
The latter can be sought as $\boldsymbol{u} = \nabla \times (\Psi \boldsymbol{1}_z)$ where $\Psi = p [1-F(x,y,z)]$ is a stream function and with $p \in \mathcal{P}_{n-1}$. 
The dimension of $\ker(\boldsymbol{K}_n)$ is thus $\dim (\mathcal{P}_{n-1}) = n(n+1)(n+2)/6$. 
\end{proof}

\begin{figure}
\centering
\begin{tabular}{cc}
    \includegraphics[width=0.4\textwidth]{./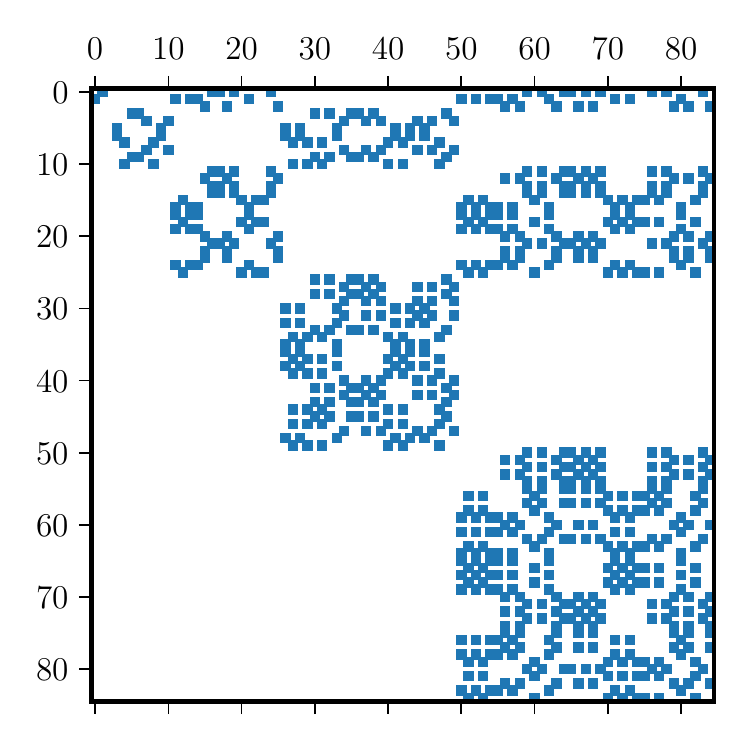} &
    \includegraphics[width=0.4\textwidth]
    {./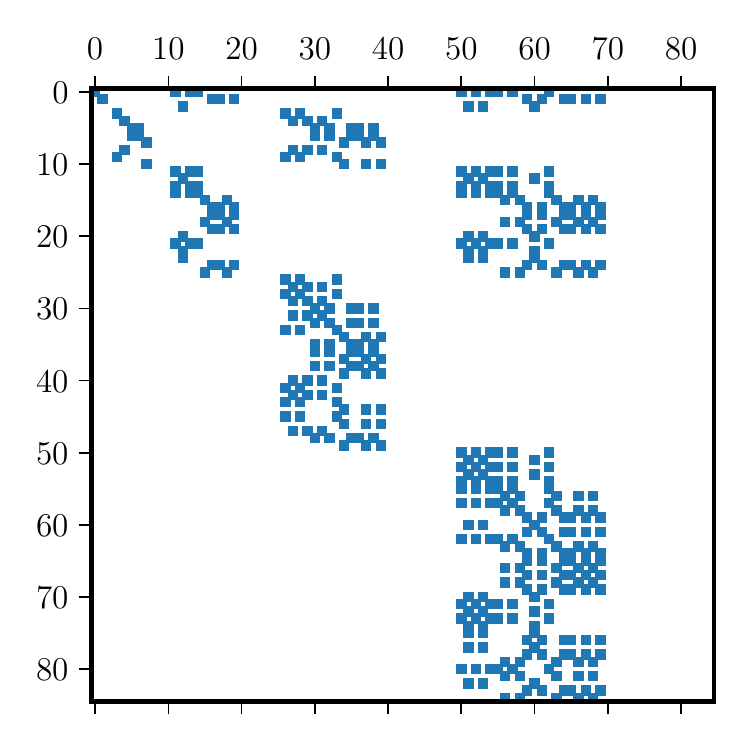} \\
\end{tabular}
\caption{Sparse upper triangular block matrices $\boldsymbol{A}_n$ (\emph{left}) and $\boldsymbol{B}_n$ (\emph{right}) for $n=5$ and $\boldsymbol{1}_\Omega = \boldsymbol{1}_z$.}
\label{fig:sparsity}
\end{figure}

For numerical convenience, we convert the finite-dimensional QEP into an equivalent finite-dimensional SEP when $\omega \neq 0$. 
The finite-dimensional counterpart of the self-adjoint operator defined in equation (\ref{eq:selfadjointT}b) is not well suited for numerical computations (because of the square root). 
Instead, we consider the SEP $\boldsymbol{L}_n \boldsymbol{z} = \omega \boldsymbol{z}$ with $\boldsymbol{z} = (\boldsymbol{\alpha}, \boldsymbol{\beta})^\top$ and the matrix $\boldsymbol{L}_n$ given by
\begin{subequations}
\label{eq:GEPV0n}
\begin{equation}
    \boldsymbol{L}_n := 
    \begin{pmatrix}
    0 & \boldsymbol{I} \\
    \boldsymbol{B}_n & \mathrm{i} \boldsymbol{A}_n \\
    \end{pmatrix}, \quad \boldsymbol{A}_n := \boldsymbol{M}_n^{-1} \boldsymbol{C}_n, \quad \boldsymbol{B}_n := \boldsymbol{M}_n^{-1} \boldsymbol{K}_n,
    \tag{\theequation a--c}
\end{equation}
\end{subequations}
where $\boldsymbol{I}$ is the identity matrix.
As illustrated in figure \ref{fig:sparsity}, the matrices $\boldsymbol{A}_n$ and $\boldsymbol{B}_n$ are sparse and have an upper triangular block structure (which directly results from Proposition \ref{theorem:invariancePcPk}). 
It has a practical consequence in the unstratified case $N=0$.
The eigenvalues of the pure inertial modes, given by $\boldsymbol{A}_n \boldsymbol{\alpha} = \omega \boldsymbol{\alpha}$, can indeed be computed separately for every space $\boldsymbol{\mathcal{W}}_n^0$ by seeking the eigenvalues of every diagonal block of $\boldsymbol{A}_n$.
Yet, since the matrix $\boldsymbol{L}_n$ has not an upper triangular block structure, we cannot compute separately the eigenvalues for every degree $n$ when $N \neq 0$. 

We have modified the bespoke numerical code \textsc{shine}, initially developed for compressible fluids \cite{vidal2020compressible,vidal2020acoustic}, to implement the above algorithm for Boussinesq fluids. 
The matrices $[\boldsymbol{M}_n,\boldsymbol{C}_n,\boldsymbol{K}_n]$ are first built analytically.
Since the basis vector elements are not mutually orthogonal in an ellipsoid \cite{lebovitz1989stability}, the matrices become numerically ill-conditioned when the polynomial degree is large. 
Hence, we perform the computations using extended-precision algorithms. 
We can illustrate the Galerkin algorithm by considering the six polynomial solutions $\boldsymbol{u} \in \boldsymbol{\mathcal{V}}_1^0$.
The latter are sought in the form of uniform-vorticity flows as \cite{noir2013precession}
\begin{subequations}
\label{eq:uniformvorticity}
\begin{equation}
    \boldsymbol{u} = \boldsymbol{\alpha} \times \boldsymbol{r} + \nabla \psi, \quad \psi = \alpha_x \frac{c^2-b^2}{b^2+c^2}yz + \alpha_y \frac{a^2-c^2}{a^2+c^2} xz + \alpha_z \frac{b^2-a^2}{a^2+b^2} xy,
    \tag{\theequation a--b}
\end{equation}
\end{subequations}
and with $\boldsymbol{\alpha} = (\alpha_x, \alpha_y, \alpha_z)^\top$.
Next, the block matrices of $\boldsymbol{L}_1$ are given by
\begin{subequations}
\allowdisplaybreaks
\begin{equation}
    \boldsymbol{B}_{1} = \begin{pmatrix}
        \dfrac{N^2 c^2}{b^2+c^2} & 0 & 0 \\
        0 & \dfrac{N^2 c^2}{a^2+c^2} & 0 \\
        0 & 0 & 0 \\
    \end{pmatrix}, \quad 
    \boldsymbol{A}_{1} = \begin{pmatrix}
        0 & - \dfrac{2 \Omega_z a^2}{a^2+c^2} & \dfrac{2 \Omega_y a^2}{a^2+b^2} \\
        \dfrac{2 \Omega_z b^2}{b^2+c^2} & 0 & - \dfrac{2 \Omega_x b^2}{a^2+b^2} \\
        - \frac{2 \Omega_y c^2}{b^2+c^2} & \dfrac{2 \Omega_x c^2}{a^2+c^2} & 0 \\ 
         \end{pmatrix},
\tag{\theequation a,b}
\end{equation}
\end{subequations}
and with $\boldsymbol{\Omega} = (\Omega_x, \Omega_y, \Omega_z)^\top$. 
Finally, the eigenvalue-eigenvector solutions of $\boldsymbol{L}_1$ can be solved analytically to extend the linear solutions obtained by Vantieghem \cite{vantieghem2014inertial} when $N=0$ and $\boldsymbol{1}_ \Omega = \boldsymbol{1}_z$.
However, we must rely on numerical computations to obtain $\boldsymbol{L}_n$ for higher degrees $n$. 

\subsection{Enumeration of the eigenvalues}
\label{sec:VP0}
We can explicitly count the number of non-zero and zero eigenvalues of the QEP when $\boldsymbol{u} \in \boldsymbol{\mathcal{V}}_n^0$. 
Since the spectra of the primitive equations, the one of $\boldsymbol{\mathcal{Q}}_\omega$ and the one of $\boldsymbol{\mathcal{L}}$ are identical when $\omega \neq 0$, their finite-dimensional counterparts have the same number of non-zero eigenvalues. 
\begin{proposition}
    The number of non-zero eigenvalues of QEP (\ref{eq:QEPVn0}) is $n(n+1)(n+5)/3$.
    \label{theo:nnzVn0}
\end{proposition}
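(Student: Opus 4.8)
The plan is to count the eigenvalues of QEP (\ref{eq:QEPVn0}) via the order of vanishing of its characteristic polynomial at $\omega=0$. Set $d := N(\boldsymbol{\mathcal{V}}_n^0) = n(n+1)(2n+7)/6$ and $z := \dim\ker\boldsymbol{K}_n = n(n+1)(n+2)/6$ (Proposition \ref{theo:dimkerCK}). Because $\boldsymbol{M}_n$ is positive definite, the leading coefficient $-\boldsymbol{M}_n$ of the pencil is invertible, so $p(\omega) := \det(-\omega^2\boldsymbol{M}_n + \omega\mathrm{i}\boldsymbol{C}_n + \boldsymbol{K}_n)$ has degree exactly $2d$; hence QEP (\ref{eq:QEPVn0}) has $2d$ eigenvalues counted with algebraic multiplicity, and the number of non-zero ones is $2d - m_0$, where $m_0$ is the multiplicity of the root $\omega=0$ of $p$. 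Since $2d - 2z = n(n+1)(n+5)/3$, it suffices to establish $m_0 = 2z$. The same count also follows from the self-adjoint linearisation $\boldsymbol{T}_n$ of (\ref{eq:selfadjointT}b): being self-adjoint, its zero eigenvalue is semisimple, so one only needs $\dim\ker\boldsymbol{T}_n = 2z$.

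To reduce $m_0$ to a block computation, I would decompose $\boldsymbol{\mathcal{V}}_n^0 = \mathcal{Z}\oplus\mathcal{Z}^\perp$, where $\mathcal{Z} := \ker\boldsymbol{K}_n$ is the $z$-dimensional space of 2D motions and $\mathcal{Z}^\perp$ is its orthogonal complement for inner product (\ref{eq:innerproduct}); set $r := d-z$. In an orthonormal basis adapted to this splitting, $\boldsymbol{\mathcal{K}}$ is block-diagonal, vanishing on $\mathcal{Z}$ and positive definite (hence invertible, call it $\boldsymbol{K}_R$) on $\mathcal{Z}^\perp$, while the Poincar\'e operator reads $\mathrm{i}\boldsymbol{\mathcal{C}} = \begin{pmatrix}\boldsymbol{P} & \boldsymbol{R} \\ \boldsymbol{R}^\dagger & \boldsymbol{S}\end{pmatrix}$ with $\boldsymbol{P} = \Pi_{\mathcal{Z}}\,\mathrm{i}\boldsymbol{\mathcal{C}}\,\Pi_{\mathcal{Z}}$. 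Eliminating the $\mathcal{Z}^\perp$-block of $\boldsymbol{\mathcal{Q}}_\omega = -\omega^2\boldsymbol{I} + \omega\mathrm{i}\boldsymbol{\mathcal{C}} + \boldsymbol{\mathcal{K}}$, which equals $\boldsymbol{D}(\omega) := -\omega^2\boldsymbol{I}_r + \omega\boldsymbol{S} + \boldsymbol{K}_R$ and is invertible at $\omega=0$, by a Schur complement gives $p(\omega) \propto \det\boldsymbol{D}(\omega)\,\det\boldsymbol{\Sigma}(\omega)$ with $\boldsymbol{\Sigma}(\omega) = \omega\boldsymbol{P} - \omega^2[\boldsymbol{I}_z + \boldsymbol{R}\boldsymbol{D}(\omega)^{-1}\boldsymbol{R}^\dagger]$. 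Since $\det\boldsymbol{D}(0) = \det\boldsymbol{K}_R \neq 0$ and $\boldsymbol{I}_z + \boldsymbol{R}\boldsymbol{K}_R^{-1}\boldsymbol{R}^\dagger$ is positive definite, the order of vanishing of $p$ is exactly $2z$ when $\boldsymbol{P}=0$ (and would drop to $2z - \mathrm{rank}\,\boldsymbol{P}$ otherwise). The whole statement thus hinges on proving $\boldsymbol{P}=0$.

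The genuinely non-trivial step, and the one I expect to be the main obstacle, is to show that the Poincar\'e operator vanishes on the 2D motions, i.e. $\langle\boldsymbol{a},\mathrm{i}\boldsymbol{\mathcal{C}}(\boldsymbol{b})\rangle = 0$ for all $\boldsymbol{a},\boldsymbol{b}\in\mathcal{Z}$. By Proposition \ref{theo:dimkerCK} each such field is $\boldsymbol{a} = \nabla\times(\Psi_a\boldsymbol{1}_z)$ with stream function $\Psi_a = p_a[1-F]$, $p_a\in\mathcal{P}_{n-1}$, which vanishes on $\partial V$, and likewise for $\boldsymbol{b}$; in particular both have no vertical component. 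Using that $\mathbb{L}$ is the orthogonal projector onto $\boldsymbol{\mathcal{V}}^0$ and $\boldsymbol{a}\in\boldsymbol{\mathcal{V}}^0$, one gets $\langle\boldsymbol{a},\mathrm{i}\boldsymbol{\mathcal{C}}(\boldsymbol{b})\rangle = \mathrm{i}\langle\boldsymbol{a}, 2\boldsymbol{\Omega}\times\boldsymbol{b}\rangle = 2\mathrm{i}\Omega_z\int_V (b_x a_y - b_y a_x)\,\mathrm{d}V = 2\mathrm{i}\Omega_z\int_V (\partial_x\Psi_b\,\partial_y\Psi_a - \partial_y\Psi_b\,\partial_x\Psi_a)\,\mathrm{d}V$, where only the vertical component of $\boldsymbol{\Omega}$ survives because $\boldsymbol{a},\boldsymbol{b}$ are horizontal. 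The integrand is the divergence $\nabla\boldsymbol{\cdot}(\Psi_b\,\partial_y\Psi_a\,\boldsymbol{1}_x - \Psi_b\,\partial_x\Psi_a\,\boldsymbol{1}_y)$, so by the divergence theorem the integral reduces to a boundary flux proportional to $\Psi_b|_{\partial V} = 0$; hence $\boldsymbol{P}=0$. Equivalently, for $\boldsymbol{T}_n$ the condition $\boldsymbol{\mathcal{K}}^{1/2}\boldsymbol{x}=\boldsymbol{0}$ forces $\boldsymbol{x}\in\mathcal{Z}$, and $\boldsymbol{P}=0$ makes $\mathrm{i}\boldsymbol{\mathcal{C}}\boldsymbol{x}\in\mathrm{range}\,\boldsymbol{\mathcal{K}}^{1/2}$ so that $\boldsymbol{y}$ is determined up to $\ker\boldsymbol{\mathcal{K}}^{1/2}=\mathcal{Z}$, yielding $\dim\ker\boldsymbol{T}_n = z+z = 2z$. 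With $m_0 = 2z$ we conclude that QEP (\ref{eq:QEPVn0}) has $2d - 2z = n(n+1)(n+5)/3$ non-zero eigenvalues.
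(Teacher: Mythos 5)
Your proposal is correct, but it takes a genuinely different route from the paper. The paper never touches the characteristic polynomial of the velocity QEP: it counts dimensions in the primitive-variable space $(\boldsymbol{u},\zeta)\in\boldsymbol{\mathcal{V}}_n^0\times\mathcal{P}_n$, where the projected operator is self-adjoint (hence diagonalizable), and then explicitly constructs the $\omega=0$ eigenspace --- the stream-function flows $\boldsymbol{u}=\nabla\times(\Psi\boldsymbol{1}_z)$, $\Psi=p[1-F]$, $p\in\mathcal{P}_{n-1}$, \emph{plus} the extra $(n+1)$-dimensional freedom $h(z)$ in the pressure, which feeds into $\zeta$ through the vertical momentum balance --- giving the count $(n+1)(n^2+4n+2)/2-(n+1)(n^2+2n+6)/6$. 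You instead stay entirely in velocity space: you identify the answer with $2d-m_0$, where $m_0$ is the order of the zero root of $\det\boldsymbol{\mathcal{Q}}_\omega$, and compute $m_0=2\dim\ker\boldsymbol{K}_n$ by a Schur complement, the crux being your lemma that the compression $\Pi_{\mathcal{Z}}\,\mathrm{i}\boldsymbol{\mathcal{C}}\,\Pi_{\mathcal{Z}}$ of the Poincar\'e operator to the 2D-motion kernel vanishes (only $\Omega_z$ survives for horizontal fields, and the resulting Jacobian integrand is a divergence whose flux dies because $\Psi|_{\partial V}=0$). Both arguments are sound and give the same number, but they buy different things: the paper's route exhibits the physical steady (thermal-wind) solutions and needs no information about the zero root of the QEP, while your route proves, as a by-product, exactly the content of Proposition \ref{theo:defectiveVP0} (algebraic multiplicity $r_L=2m_L=n(n+1)(n+2)/3$ for $\boldsymbol{L}_n$), so you effectively reverse the paper's logical order: the paper deduces Proposition \ref{theo:defectiveVP0} from Proposition \ref{theo:nnzVn0}, whereas you derive Proposition \ref{theo:nnzVn0} from the multiplicity computation. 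Two cosmetic caveats: your parenthetical claim that the multiplicity ``would drop to $2z-\mathrm{rank}\,\boldsymbol{P}$'' if $\boldsymbol{P}\neq 0$ is only a lower-bound heuristic (irrelevant here since you prove $\boldsymbol{P}=0$), and your secondary argument via $\boldsymbol{T}_n$ implicitly uses the identity $\det(\omega\boldsymbol{I}-\boldsymbol{T}_n)=\pm\det\boldsymbol{\mathcal{Q}}_\omega$, which holds but deserves the one-line Schur-complement verification, since the paper only asserts equality of spectra away from $\omega=0$.
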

\begin{proof}
    The number of non-zero eigenvalues can be estimated from the primitive equations
    \begin{equation*}
        \mathrm{i} \omega \boldsymbol{u} + 2 \boldsymbol{\Omega} \times \boldsymbol{u} = - \nabla \pi - \zeta N \boldsymbol{1}_z, \quad \nabla \boldsymbol{\cdot} \boldsymbol{u} = 0, \quad \mathrm{i} \omega \zeta = (\boldsymbol{u} \boldsymbol{\cdot} \boldsymbol{1}_z) \, N.
    \end{equation*}
    Solutions of the latter equations belong to the space spanned by $\boldsymbol{u} \in \boldsymbol{\mathcal{V}}_n^0$ and $\zeta \in \mathcal{P}_{n}$, whose dimension is $N(\boldsymbol{\mathcal{V}}_n^0) + N(\mathcal{P}_{n}) = (n+1)(n^2+4n+2)/2$.
    Next, we count the number of zero eigenvalues. 
    From the density equation, we deduce that $\omega=0$ is associated with 2D flows such that $\boldsymbol{u} \boldsymbol{\cdot} \boldsymbol{1}_z = 0$.
    Such 2D flows $\boldsymbol{u} \in \boldsymbol{\mathcal{V}}_n^0$ can be sought as $\boldsymbol{u} = \nabla \times (\Psi \boldsymbol{1}_z)$ where $\Psi = p [1-F(x,y,z)]$ is a stream function and with $p \in \mathcal{P}_{n-1}$. 
    Assuming that $\boldsymbol{\Omega} = (\Omega_x, \Omega_y, \Omega_z)^\top$, we obtain $ 2 \Omega_z \partial_x \Psi = -\partial_x \pi$ and $2 \Omega_z \partial_y \Psi = - \partial_y \pi$ from the horizontal components of the momentum equation. 
    Then, we deduce that $\pi = -2\Omega_z \Psi + h(z)$, where $h(z)$ is an arbitrary polynomial function of $z$ and of degree $n+1$ (with $n\geq 1$).
    Finally, the density perturbation is obtained from the vertical component of the momentum equation, which gives $\zeta N = 2(\Omega_y \partial_y \Psi + \Omega_x \partial_x \Psi) - \partial_z \pi$. 
    Thus, $\omega = 0$ is associated with an eigenspace of dimension $\dim (\mathcal{P}_{n-1}) + (n+1) = (n+1)(n^2+2n+6)/6$. 
    Therefore, the number of non-zero eigenvalues is given by $n(n+1)(n+5)/3$.
\end{proof}

Equipped with Proposition \ref{theo:nnzVn0}, we can enumerate the zero eigenvalues of the matrix $\boldsymbol{L}_n$, which has $2 N(\boldsymbol{\mathcal{V}}_n^0)$ eigenvalues.
Hence, the algebraic multiplicity of the zero eigenvalue $\omega=0$ is given by $n(n+1)(n+2)/3$. 
However, $\omega=0$ is not only associated with standard eigenvectors. 
The geometric multiplicity $m_L$ of $\omega=0$ is the dimension of $\ker \boldsymbol{L}_n$. 
The null space of the equation $\boldsymbol{L}_n (\boldsymbol{\alpha}, \boldsymbol{\beta})^\top =0$ is spanned by the vector elements such that $\boldsymbol{\beta}=\boldsymbol{0}$ and $\boldsymbol{K}_n \boldsymbol{\alpha}=~0$.
Therefore, we have $\ker\boldsymbol{L}_n=\ker\boldsymbol{K}_n \bigoplus \{0\}$ and $\omega=0$ has the geometric multiplicity $m_L = n(n+1)(n+2)/6$ according to Proposition \ref{theo:dimkerCK}.
However, since $\boldsymbol{L}_n$ is not Hermitian, the eigenvalue $\omega=0$ has an algebraic multiplicity $r_L > m_L$. 
The missing $r_L-m_L = n(n+1)(n+2)/6$ polynomial solutions are associated with generalized eigenvectors, which belong to  $\ker(\boldsymbol{L}_n^2)$. 
\begin{proposition}
\label{theo:defectiveVP0}
The eigenvalue $\omega=0$ of the matrix $\boldsymbol{L}_n$ has the geometric multiplicity $m_L =n(n+~1)(n+2)/6$, and the algebraic multiplicity  $r_L = 2m_L$ (because of generalized eigenvectors). 
\label{prop:kernelK}
\end{proposition}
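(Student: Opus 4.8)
The plan is to obtain the two multiplicities independently and then read off the Jordan structure at $\omega=0$. For the geometric multiplicity I would solve $\boldsymbol{L}_n\boldsymbol{z}=\boldsymbol{0}$ directly. Writing $\boldsymbol{z}=(\boldsymbol{\alpha},\boldsymbol{\beta})^\top$, the first block row of $\boldsymbol{L}_n$ forces $\boldsymbol{\beta}=\boldsymbol{0}$ and the second forces $\boldsymbol{B}_n\boldsymbol{\alpha}=\boldsymbol{0}$; since $\boldsymbol{M}_n$ is invertible this is equivalent to $\boldsymbol{K}_n\boldsymbol{\alpha}=\boldsymbol{0}$. Hence $\ker\boldsymbol{L}_n$ is isomorphic to $\ker\boldsymbol{K}_n\oplus\{\boldsymbol{0}\}$, and Proposition~\ref{theo:dimkerCK} yields $m_L=\dim\ker\boldsymbol{K}_n=n(n+1)(n+2)/6$.

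For the algebraic multiplicity I would argue by counting. The matrix $\boldsymbol{L}_n$ acts on $\boldsymbol{\mathcal{V}}_n^0\times\boldsymbol{\mathcal{V}}_n^0$, so the algebraic multiplicities of all its eigenvalues sum to $2N(\boldsymbol{\mathcal{V}}_n^0)=n(n+1)(2n+7)/3$. Since the spectra of $\boldsymbol{L}_n$ and of the QEP coincide away from the origin, and Proposition~\ref{theo:nnzVn0} fixes the number of non-zero eigenvalues at $n(n+1)(n+5)/3$, subtraction gives the algebraic multiplicity of $\omega=0$ as $r_L=n(n+1)(n+2)/3=2m_L$. This already establishes the stated identity; because $r_L>m_L$ the operator is necessarily defective at the origin.

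To pin down that the defect is resolved at second order — that the missing $m_L$ solutions are honest generalized eigenvectors in $\ker(\boldsymbol{L}_n^2)$ — I would show $\dim\ker(\boldsymbol{L}_n^2)=r_L$. Expanding $\boldsymbol{L}_n^2\boldsymbol{z}=\boldsymbol{0}$, one finds $(\boldsymbol{\alpha},\boldsymbol{\beta})\in\ker(\boldsymbol{L}_n^2)$ iff $\boldsymbol{\beta}\in\ker\boldsymbol{K}_n$ and $\boldsymbol{K}_n\boldsymbol{\alpha}=-\mathrm{i}\boldsymbol{C}_n\boldsymbol{\beta}$. For each admissible $\boldsymbol{\beta}$ the solutions $\boldsymbol{\alpha}$ form a coset of $\ker\boldsymbol{K}_n$ of dimension $m_L$, so $\dim\ker(\boldsymbol{L}_n^2)=2m_L$ precisely when \emph{every} $\boldsymbol{\beta}\in\ker\boldsymbol{K}_n$ is admissible, i.e. when $\boldsymbol{C}_n(\ker\boldsymbol{K}_n)\subseteq\mathrm{range}\,\boldsymbol{K}_n=(\ker\boldsymbol{K}_n)^\perp$ (using that $\boldsymbol{K}_n$ is symmetric). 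Equivalently, the skew form $\langle\boldsymbol{\gamma},2\boldsymbol{\Omega}\times\boldsymbol{u}_\beta\rangle$ must vanish for all fields $\boldsymbol{\gamma},\boldsymbol{u}_\beta\in\ker\boldsymbol{K}_n$.

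This last vanishing is the step I expect to be the main obstacle, and it is where the ellipsoidal geometry enters. By Proposition~\ref{theo:dimkerCK} the elements of $\ker\boldsymbol{K}_n$ are the horizontal flows $\boldsymbol{u}=\nabla\times(\Psi\boldsymbol{1}_z)$ with stream function $\Psi=p\,[1-F]$, $p\in\mathcal{P}_{n-1}$. Because both fields have no vertical component, only $\Omega_z$ survives in the pairing, which reduces to $2\Omega_z\int_V(v_y^\dagger u_x-v_x^\dagger u_y)\,\mathrm{d}V$, i.e. minus $2\Omega_z$ times the volume integral of the horizontal Jacobian $J(\Psi_v^\dagger,\Psi_u)$. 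Slicing the ellipsoid at fixed $z$ and applying Green's theorem on each elliptical cross-section, the Jacobian integral collapses to a contour term carrying a factor of $\Psi$; since $1-F$ vanishes on the boundary of every horizontal cross-section, $\Psi$ vanishes there and the contour term is zero. Hence the pairing vanishes, every $\boldsymbol{\beta}$ is admissible, and $\dim\ker(\boldsymbol{L}_n^2)=2m_L=r_L$. Combined with the $m_L$ Jordan chains counted by the geometric multiplicity, this forces every Jordan block at $\omega=0$ to have size exactly two, completing the proof.
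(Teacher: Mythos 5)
Your proof of the two multiplicities actually stated in the proposition is correct and coincides with the paper's own argument: the paper likewise gets the geometric multiplicity by solving $\boldsymbol{L}_n(\boldsymbol{\alpha},\boldsymbol{\beta})^\top=\boldsymbol{0}$ blockwise, so that $\ker\boldsymbol{L}_n=\ker\boldsymbol{K}_n\oplus\{\boldsymbol{0}\}$ and $m_L=n(n+1)(n+2)/6$ by Proposition \ref{theo:dimkerCK}, and it gets the algebraic multiplicity by subtracting the $n(n+1)(n+5)/3$ non-zero eigenvalues of Proposition \ref{theo:nnzVn0} from the $2N(\boldsymbol{\mathcal{V}}_n^0)$ eigenvalues of $\boldsymbol{L}_n$, giving $r_L=n(n+1)(n+2)/3=2m_L$.

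Where you genuinely go beyond the paper is in your last two paragraphs. The paper only \emph{asserts} that the missing $r_L-m_L$ solutions are generalized eigenvectors lying in $\ker(\boldsymbol{L}_n^2)$; it never proves that the Jordan structure at $\omega=0$ closes at order two, and instead disposes of the issue later by appealing to the equivalent self-adjoint formulation (\ref{eq:selfadjointT}b). Your argument supplies the missing verification directly: $\ker(\boldsymbol{L}_n^2)$ is characterised by $\boldsymbol{\beta}\in\ker\boldsymbol{K}_n$ together with $\boldsymbol{K}_n\boldsymbol{\alpha}=-\mathrm{i}\boldsymbol{C}_n\boldsymbol{\beta}$, so its dimension reaches $2m_L$ exactly when $\boldsymbol{\gamma}^\dagger\boldsymbol{C}_n\boldsymbol{\beta}=0$ for all $\boldsymbol{\gamma},\boldsymbol{\beta}\in\ker\boldsymbol{K}_n$ (using $\mathrm{range}\,\boldsymbol{K}_n=(\ker\boldsymbol{K}_n)^\perp$ for the real symmetric $\boldsymbol{K}_n$). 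Since by Proposition \ref{theo:dimkerCK} the kernel fields are the horizontal flows $\nabla\times(\Psi\boldsymbol{1}_z)$ with $\Psi=p\,[1-F]$, the pairing collapses to $-2\Omega_z$ times the volume integral of the Jacobian $J(\Psi_v^\dagger,\Psi_u)$, and your slice-by-slice Green's theorem argument is valid because $\Psi$ vanishes on the boundary of every horizontal cross-section (where $F=1$). This is a clean, self-contained finite-dimensional proof that every Jordan block at $\omega=0$ has size exactly two, i.e. that the associated temporal solutions grow at most linearly in $t$ --- a fact the paper states but leaves unproved.
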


The vector space associated with $\omega=0$ is thus partly spanned by generalized eigenvectors, which here satisfy $\boldsymbol{L}_n \boldsymbol{z} \neq \boldsymbol{0}$ and $\boldsymbol{L}_n^2 \boldsymbol{z} = \boldsymbol{0}$. 
Back in the temporal domain, they correspond to solutions that are not steady but linear in $t$.
However, these admissible solutions are mathematical artefacts due to the formulation of the quadratic problem in terms of the matrix $\boldsymbol{L}_n$.
Indeed, QEP (\ref{eq:QEPVn0}) can be converted into an equivalent self-ajoint SEP \cite{barston1967eigenvalue}, which is the finite-dimensional counterpart of the self-adjoint operator defined in equation (\ref{eq:selfadjointT}b).
This shows that, from a physical viewpoint, $\omega=0$ is only associated with standard eigenvectors. 
Therefore, we will not further elaborate on the (artificial) generalized eigenvectors of $\boldsymbol{L}_n$ below. 
Following our prior investigation of the non-stratified problem \cite{CdV2023spectrum}, we can combine polynomial computations for fixed values of $n$ and spectral theory for $n \to \infty$ to study the eigensolutions in ellipsoids. 

\section{Point spectrum in ellipsoids}
\label{sec:results}
\begin{figure}
\centering
\includegraphics[width=0.8\textwidth]{./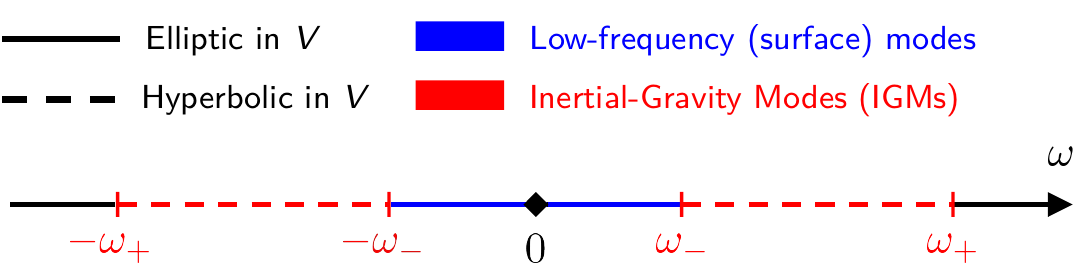}
\caption{Sketch of the bounded pure point spectrum in triaxial ellipsoids (not to scale) when $N$ is constant.  
The spectrum is symmetric with respect to $0$ and bounded by $|\omega| \leq \omega_{+}$ (Proposition \ref{theo:propsigma3}). 
The velocity equation is either hyperbolic or elliptic in volume. 
The spectrum is spanned by IGMs and low-frequency (surface) modes.} 
\label{fig:fig4}
\end{figure}

We have shown that $\sigma(\boldsymbol{\mathcal{Q}}_\omega)$ is pure point in ellipsoids when $N$ is constant.
For this configuration, we can further characterise the eigensolutions. 
In particular, the spectrum can be divided into two subsets.
If an eigenvalue is isolated in the spectrum and of finite multiplicity (when $n \to \infty$), it belongs to the discrete spectrum $\sigma_D$.
Otherwise, it belongs to the essential spectrum $\sigma_E := \sigma \backslash \sigma_D$ (i.e. the complement to the discrete spectrum).
Since $\oplus_{n\in \mathbb{N}^*} \boldsymbol{\mathcal{V}}_n^0$ is dense in $\boldsymbol{\mathcal{V}}^0$, we conclude from Proposition \ref{theo:defectiveVP0} that the eigenvalue $\omega=0$ has an infinite multiplicity when $n \to~\infty$ (i.e.  
$\omega=0$ belongs to $\sigma_E$).
Now, it remains to characterise the other eigensolutions.

When $N$ is constant, the frequencies $\omega_{\pm}$ defined in equation (\ref{eq:omegapmleblonc}) are constant in the volume. 
Moreover, it is possible to find a better estimate of $\omega_{\max}$ (compared to the one given in Proposition \ref{theo:propQEP}). 
In the aligned case, it was shown in \cite{friedlander1982jfm} that $\omega_{\max} = \omega_+$ with $\omega_+ =\max (N, 2 \Omega_s)$ .
Thus, $\sigma_3$ is empty in ellipsoids when $\boldsymbol{\Omega}$ and $ \boldsymbol{1}_z$ are collinear. 
We also show below in Proposition \ref{theo:propsigma3} that $\omega_{\max}=\omega_+$, even when $\boldsymbol{\Omega}$ and $\boldsymbol{1}_z$ are not collinear. 

\begin{proposition} 
\label{theo:propsigma3}
When $N$ is constant, the pure point spectrum is bounded by $\omega_+$ in an ellipsoid. 
\end{proposition}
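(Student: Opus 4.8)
The plan is to rule out any eigenvalue with $\omega^2>\omega_+^2$ by an energy argument carried out in Fourier space. First I would take the inner product of velocity equation (\ref{eq:waveeqdt}a) with the eigenvector $\boldsymbol{u}\in\boldsymbol{\mathcal{V}}^0$. Since $\boldsymbol{u}$ is divergenceless and tangent to $\partial V$, the pressure term integrates to zero ($\langle\boldsymbol{u},\nabla\Phi\rangle=0$), leaving the real-valued energy identity
\[
\omega^2\|\boldsymbol{u}\|^2=\mathrm{i}\omega\,\langle\boldsymbol{u},2\boldsymbol{\Omega}\times\boldsymbol{u}\rangle+N^2\!\int_V|\boldsymbol{u}\boldsymbol{\cdot}\boldsymbol{1}_z|^2\,\mathrm{d}V,
\]
where $\langle\boldsymbol{u},\boldsymbol{\Omega}\times\boldsymbol{u}\rangle$ is purely imaginary so that the right-hand side is real.

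Next I would extend $\boldsymbol{u}$ by zero to $\mathbb{R}^3$. Because its normal component vanishes on $\partial V$, this extension is divergenceless on all of $\mathbb{R}^3$, so its Fourier transform $\hat{\boldsymbol{u}}(\boldsymbol{k})$ is transverse, $\boldsymbol{k}\boldsymbol{\cdot}\hat{\boldsymbol{u}}=\boldsymbol{0}$, for almost every $\boldsymbol{k}$. Applying Plancherel's theorem to the three terms recasts the identity as $\int_{\mathbb{R}^3}H_\omega(\hat{\boldsymbol{u}}(\boldsymbol{k}))\,\mathrm{d}\boldsymbol{k}=0$, where $H_\omega(\boldsymbol{w})=\omega^2|\boldsymbol{w}|^2-\mathrm{i}\omega\,\boldsymbol{w}^\dagger\boldsymbol{\cdot}(2\boldsymbol{\Omega}\times\boldsymbol{w})-N^2|\boldsymbol{w}\boldsymbol{\cdot}\boldsymbol{1}_z|^2$ is the real Hermitian form obtained by freezing the coefficients of the velocity operator and setting $\nabla\to\mathrm{i}\boldsymbol{k}$.

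The heart of the proof is to show that $H_\omega$ is positive definite on the transverse plane $T_{\boldsymbol{k}}=\{\boldsymbol{w}:\boldsymbol{k}\boldsymbol{\cdot}\boldsymbol{w}=0\}$ whenever $\omega^2>\omega_+^2$. Restricting $H_\omega$ to an orthonormal basis of $T_{\boldsymbol{k}}$ gives a $2\times2$ Hermitian matrix, and a direct computation (using that $\boldsymbol{1}_z\boldsymbol{1}_z^\top$ has rank one, so its restriction to $T_{\boldsymbol{k}}$ has vanishing determinant) yields $\det(H_\omega|_{T_{\boldsymbol{k}}})=\omega^2\bigl(\omega^2-g(\hat{\boldsymbol{k}})\bigr)$ with $g(\hat{\boldsymbol{k}})=N^2|\boldsymbol{1}_z\times\hat{\boldsymbol{k}}|^2+(2\boldsymbol{\Omega}\boldsymbol{\cdot}\hat{\boldsymbol{k}})^2$, i.e. the right-hand side of dispersion relation (\ref{eq:dispersionIGW}). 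Since $g(\hat{\boldsymbol{k}})\leq\omega_+^2$ for every direction (as $\omega_+^2$ is the maximal eigenvalue of the associated quadric), this determinant is strictly positive for all $\boldsymbol{k}\neq\boldsymbol{0}$ once $\omega^2>\omega_+^2$. A positive determinant only forces the two real eigenvalues of $H_\omega|_{T_{\boldsymbol{k}}}$ to share a common sign; to fix that sign I would invoke continuity in $\omega$: on the connected ray $\omega>\omega_+$ the determinant never vanishes, while as $\omega\to+\infty$ the form tends to $\omega^2$ times the identity, which is positive definite, so no eigenvalue can cross zero and $H_\omega|_{T_{\boldsymbol{k}}}$ remains positive definite throughout. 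The range $\omega<-\omega_+$ follows from the evenness of the spectrum established in Proposition \ref{theo:propQEP}.

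Combining the two ingredients closes the argument: if an eigenvalue satisfied $\omega^2>\omega_+^2$, then $H_\omega(\hat{\boldsymbol{u}}(\boldsymbol{k}))>0$ wherever $\hat{\boldsymbol{u}}(\boldsymbol{k})\neq\boldsymbol{0}$, forcing $\int_{\mathbb{R}^3}H_\omega(\hat{\boldsymbol{u}})\,\mathrm{d}\boldsymbol{k}>0$ and contradicting the energy identity unless $\boldsymbol{u}\equiv\boldsymbol{0}$. Hence every eigenvalue obeys $|\omega|\leq\omega_+$, which is the claim. I expect the main obstacle to be the sign-definiteness step rather than the determinant computation: knowing the characteristic determinant is nonzero only rules out $H_\omega|_{T_{\boldsymbol{k}}}$ being singular, and it is the continuity plus large-$\omega$ argument that actually upgrades this to positive definiteness uniformly in $\boldsymbol{k}$. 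A minor point to verify is the Plancherel step, which is legitimate because the zero-extension of $\boldsymbol{u}$ is a compactly supported $\mathrm{L}^2$ field whose divergence vanishes distributionally.
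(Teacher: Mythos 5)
Your proof is correct, but it takes a genuinely different route from the paper's. The paper never works with the velocity-only pencil: it lifts the problem to pairs $\boldsymbol{z}=(\boldsymbol{u},\zeta)^\top$, writes the spectral problem as $\boldsymbol{\mathcal{P}}(\boldsymbol{z})=\mathbb{P}(\boldsymbol{\mathcal{E}}\,\mathbb{P}(\boldsymbol{z}))$ with $\mathbb{P}(\boldsymbol{z})=(\mathbb{L}\boldsymbol{u},\zeta)^\top$ an orthogonal projector and $\boldsymbol{\mathcal{E}}$ a \emph{constant} $4\times4$ Hermitian matrix coupling Coriolis and buoyancy, and then observes that the eigenvalues of $\boldsymbol{\mathcal{E}}$ are exactly $\pm\omega_\pm$, so that $\|\boldsymbol{\mathcal{P}}\|\leq\|\mathbb{P}\|\,\|\boldsymbol{\mathcal{E}}\|\,\|\mathbb{P}\|=\omega_+$; self-adjointness of $\boldsymbol{\mathcal{P}}$ then bounds the \emph{whole} spectrum in essentially one line. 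Your argument instead stays with the velocity equation (\ref{eq:waveeqdt}a), where a crude triangle-inequality estimate only yields the weaker bound $\Omega_s+\sqrt{\Omega_s^2+N^2}$ of Proposition \ref{theo:propQEP}; the ingredient that restores sharpness is incompressibility expressed in Fourier space, i.e. the transversality $\boldsymbol{k}\boldsymbol{\cdot}\hat{\boldsymbol{u}}=0$ of the zero-extension, combined with positive definiteness of the frozen symbol on transverse planes, whose determinant reproduces dispersion relation (\ref{eq:dispersionIGW}) (both your determinant identity and the fact that $\max_{\hat{\boldsymbol{k}}}g(\hat{\boldsymbol{k}})=\omega_+^2$, the largest eigenvalue of $N^2(\boldsymbol{I}-\boldsymbol{1}_z\boldsymbol{1}_z^\top)+4\boldsymbol{\Omega}\boldsymbol{\Omega}^\top$, check out). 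The conceptual difference is instructive: in the paper buoyancy enters $\boldsymbol{\mathcal{E}}$ as an antisymmetric coupling to $\zeta$, so no transversality is needed, whereas eliminating $\zeta$ converts buoyancy into the positive operator $\boldsymbol{\mathcal{K}}$ of norm $N^2$, which overestimates the bound, and the constraint $\boldsymbol{k}\boldsymbol{\cdot}\hat{\boldsymbol{u}}=0$ is precisely what recovers the loss. What each buys: the paper's operator-norm argument is shorter, bounds the full spectrum (not only eigenvalues), and does not presuppose $\omega\in\mathbb{R}$; yours is elementary and self-contained, covers exactly the point spectrum (which is all the proposition claims, and all there is in ellipsoids), and makes transparent the physical content that no eigenfrequency can exceed the supremum of plane-wave IGW frequencies. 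Two small tidying points: state explicitly that $\omega\in\mathbb{R}$ (from Proposition \ref{theo:propQEP}) before treating $H_\omega$ as a Hermitian form, and note that your continuity-in-$\omega$ argument can be replaced by the simpler remark that the trace of $H_\omega|_{T_{\boldsymbol{k}}}$ equals $2\omega^2-N^2|\boldsymbol{1}_z\times\hat{\boldsymbol{k}}|^2>0$ whenever $\omega^2>\omega_+^2\geq N^2$, so positive determinant plus positive trace gives positive definiteness directly.
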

\begin{proof}
We introduce the projector operator $\mathbb{P}$ acting on a pair $\boldsymbol{z}=~(\boldsymbol{u},\zeta)^\top$ by $\mathbb{P} (\boldsymbol{z}) :=(\mathbb{L} \boldsymbol{u}, \zeta )^\top$, where $\mathbb{L}$ is the Leray projector introduced in \S\ref{sec:problem}.
Then, we can reformulate the spectral problem as the spectral properties of the linear operator $\boldsymbol{\mathcal{P}}$ defined by $\boldsymbol{\mathcal{P}} (\boldsymbol{z}) := \mathbb{P}( \boldsymbol{\mathcal{E}} \mathbb{P} (\boldsymbol{z}))$ with
\begin{equation*}
\boldsymbol{\mathcal{E}} = \mathrm{i} \begin{pmatrix}
0 & -2\Omega _3 & 2\Omega _2 & 0 \\
2\Omega_3& 0 &- 2\Omega _1 &0 \\
-2\Omega _2 & 2\Omega _1 &0 & N \\
0 & 0 &-N & 0 \end{pmatrix}.
\end{equation*}
$\boldsymbol{\mathcal{P}}$ is a self-adjoint operator for square-integrable fields.
We have $||\mathbb{P}||=1$, and the eigenvalues of the Hermitian matrix $\boldsymbol{\mathcal{E}}$ are $\pm \omega_\pm$ such that $||\boldsymbol{\mathcal{E}}|| \leq \omega _+ $.
Thus, the spectrum is bounded by $\omega_+$.
\end{proof}

Therefore, we can divide the allowable frequency range into disjoint intervals with 
$I_1 = [-\omega_+, -\omega_-] \cup [\omega_-, \omega_+]$ and $I_2 = ]-\omega_-, \omega_-[ \, \setminus \, \{0\}$. 
The velocity equation is hyperbolic in $V$ when $\omega \in I_1$, and elliptic in $V$ when $\omega \in I_2$.
We also define the associated spectra $\sigma_i = \sigma_P \cap I_i$ for $i = \{1,2\}$ and, when $\boldsymbol{u}$ is restricted to $\boldsymbol{\mathcal{V}}_n^0$, the subsets are denoted by $[\sigma_{1,n}, \sigma_{2,n}]$ such that $\sigma_{i,n-1} \subseteq \sigma_{i,n}$. 
In unbounded fluids, a continuum of IGWs exists with angular frequencies $\omega \in I_1$ given by dispersion relation  (\ref{eq:dispersionIGW}). 
We will show below that the spectrum $\sigma_1 \subset\sigma_E$ is dense in $I_1$ within ellipsoids.
On the contrary, $\sigma_2$ is known to be empty in unbounded fluids (which agrees with prior experimental observations \cite{peacock2005effect}).
This is consistent with the fact that the velocity satisfies an elliptic-type equation when $\omega \in I_2$ (i.e. plane-wave solutions are evanescent in these intervals).
Yet, a countable number of eigenfrequencies can exist in some bounded geometries when $|\omega| < \omega_-$ \cite{allen1971some,friedlander1982jfm}.
The eigensolutions with $\omega \in \sigma_1$ are referred as class-I solutions, and those with $\omega \in \sigma_2$ are called class-II solutions in \cite{allen1971some,friedlander1982jfm}.
We describe the two spectra below.

\subsection{Low-frequency (surface) modes}
\label{subsec:discrete}
\subsubsection{Illustrative examples}
Contrary to the free-space case, it is known that $\sigma_2$ is not empty when $\boldsymbol{\Omega} \propto \boldsymbol{1}_z$ in some bounded geometries (e.g. in cylinders or spheres \cite{friedlander1982jfm,kerswell1993elliptical}).
In triaxial ellipsoids, we can also show that $\sigma_2 \neq \emptyset$ when $\boldsymbol{\Omega} \boldsymbol{\cdot} \boldsymbol{1}_z \neq 0$ (otherwise, $\omega_- = 0$ and $\sigma_1$ fills the entire interval $0 < |\omega|<\omega_+$).
To this end, we first consider the eigenvalues for $\boldsymbol{u} \in \boldsymbol{\mathcal{V}}_1^0$, which  are the roots of the characteristic equation $\det (\boldsymbol{L}_1 - \omega \boldsymbol{I}) = 0$.
The six roots $\omega_{j}$ of the characteristic polynomial are the two degenerate solutions $\omega_{1,2} = 0$ and the four solutions
\begin{subequations}
    \allowdisplaybreaks
    \label{eq:solw36V10}
    \begin{equation}
        \omega_{3,4} = \pm \sqrt{\frac{\beta_4}{2 \Delta} + \frac{\sqrt{(\beta_4/\Delta)^2 - 4 \beta_2/\Delta}}{2}}, \quad \omega_{5,6} = \pm \sqrt{\frac{\beta_4}{2 \Delta} - \frac{\sqrt{(\beta_4/\Delta)^2 - 4 \beta_2/\Delta}}{2}},
        \tag{\theequation a--b}
    \end{equation}
\end{subequations}
where we have introduced the positive coefficients
\begin{align*}
    \allowdisplaybreaks
    \small
    \beta_4 &= N^2 c^2 (a^2 + b^2) (a^2 + b^2 + 2 c^2) + 4 \Omega_x^2 b^2 c^2 (b^2 + c^2) + 4 \Omega_y^2 a^2 c^2 (a^2 + c^2) + 4 \Omega_z^2 a^2 b^2 (a^2 + b^2), \\
    \beta_2 &= c^4 N^2 [ 4 (\Omega_x^2 b^2 + \Omega_y^2 a^2) + N^2 (a^2 + b^2)], \\
    \Delta &= a^4 b^2 + a^4 c^2 + a^2 b^4 + 2a^2 b^2 c^2 + a^2 c^4 + b^4 c^2 + b^2 c^4. 
\end{align*}
The only non-zero solutions in the non-stratified case $N=0$ are $\omega_{3,4} = \pm \sqrt{\beta_4/\Delta}$, which reduce to the eigenvalues of the pure inertial modes in $\boldsymbol{\mathcal{V}}_1^0$ obtained by Vantieghem \cite{vantieghem2014inertial} when $\Omega_x=\Omega_y = 0$. 
Solutions (\ref{eq:solw36V10}) are illustrated in figure \ref{fig:discrete1}(a) for a sphere and a flattened (triaxial) ellipsoid. 
Several points are worth commenting on in the figure.
First, $\omega_3$ and $\omega_5$ smoothly vary as a function of $N$.
We observe that $\omega_3 \in \sigma_{1,1}$ for any value of $N$ but, on the contrary, $\omega_5$ can either belong to $\sigma_{1,1}$ or $\sigma_{2,1}$ when $N$ is varied.
For instance, the transition occurs here at $N/\Omega_s>4$ in a sphere and at a much larger value $N/\Omega_s \gtrsim 25 $ in the pancake-like ellipsoid. 

\begin{figure}
\centering
\begin{tabular}{cc}
    \begin{overpic}[width=0.49\textwidth]
{./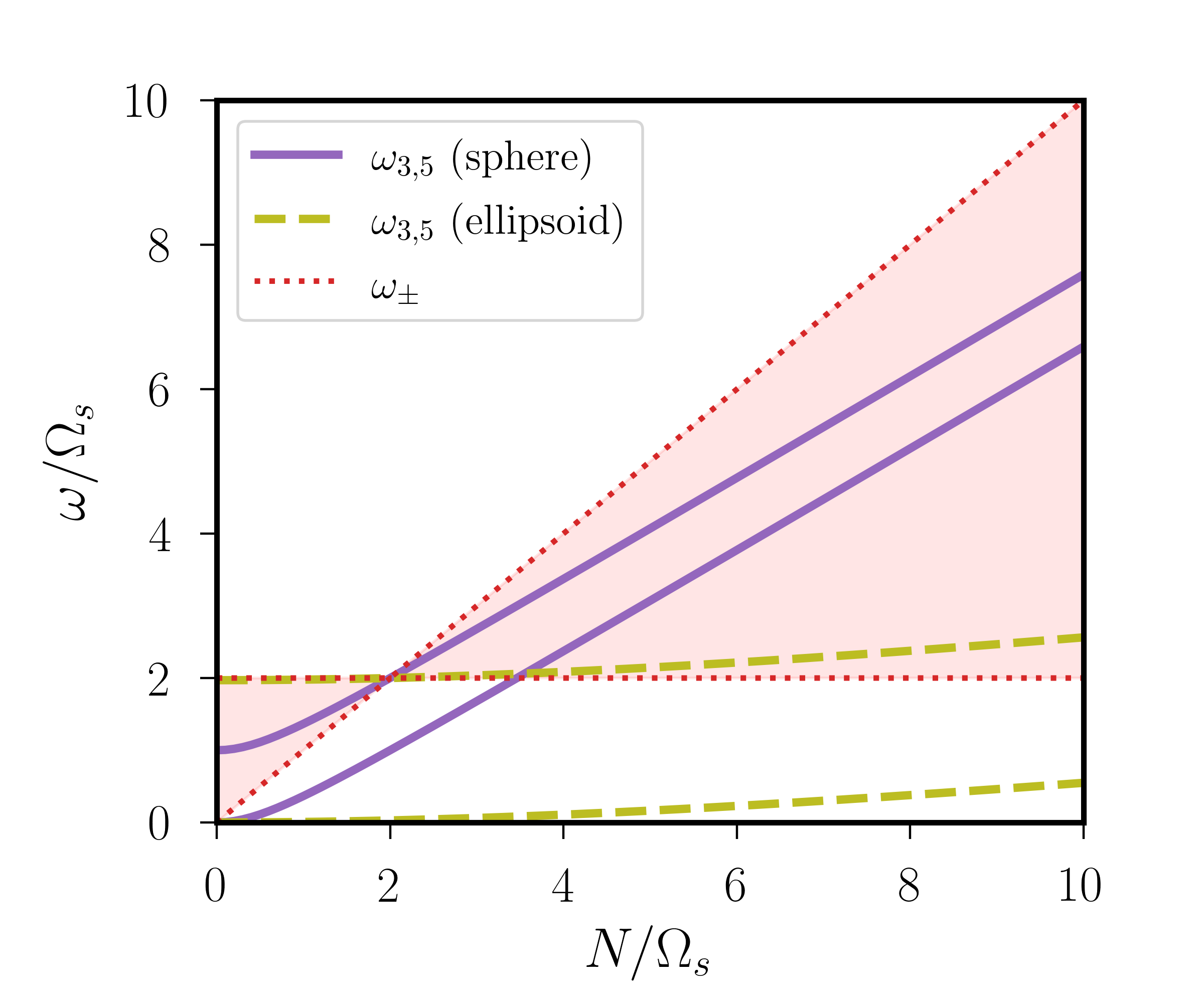}
    \put(-5,70){(a)}    
    \end{overpic} & 
    \begin{overpic}[width=0.49\textwidth]
{./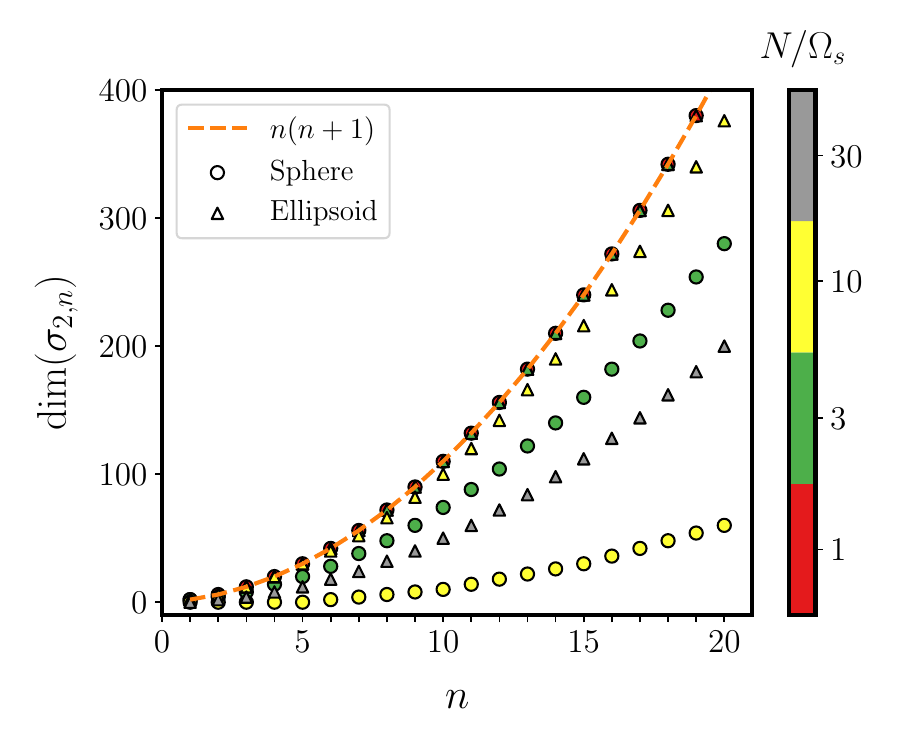}
    \put(-5,70){(b)}    
    \end{overpic}
\end{tabular}
\caption{Spectrum $\sigma_2$ in the aligned case $\boldsymbol{1}_\Omega = \boldsymbol{1}_z$. (a) Eigenvalues $\omega_{3,5} \geq 0 $ in $\sigma_{1,1}$ for a sphere with $b/a=c/a=1$ (solid purple curves) and an ellipsoid with $b/a=0.7$ and $c/a=0.1$ (dashed olive curves). Red region shows $\sigma_2$ in which the problem is hyperbolic. (b) Number of eigenvalues in $\sigma_{1,n}$ for a sphere ($b/a=c/a=1$) and a flattened ellipsoid ($b/a=0.7$ and $c/a=0.1$). Colour bar shows the value of $N/\Omega_s$.}
\label{fig:discrete1}
\end{figure}

Next, we show in figure \ref{fig:discrete1}(b) the evolution of $\dim (\sigma_{2,n})$ as a function of $n$ for some illustrative cases.  
We see that we have always $\sigma_{2,n} \neq \emptyset$ for a sufficiently large degree $n$, and that $\dim(\sigma_{2,n})$ increases when $n$ is increased.
This suggests that $\sigma_2$ has an infinite (but countable) number of eigenvalues when $n \to \infty$. 
However, $\dim(\sigma_{2,n})$ is found to strongly depend on the geometry and the value of $N/\Omega_s$. 
More specifically, $\dim(\sigma_{2,n})$ is reduced when $N$ is increased, but the variation is more pronounced in the sphere than in pancake-like ellipsoids with $c/a \ll 1$.
Finally, the number of eigenvalues seems to be bounded by $n(n+1)$, which corresponds to the number of spherical harmonics $Y_l^m$ of degree $2 \leq l \leq n+1$ and order $0<|m|<l$. 
This shows that $\sigma_2$ is related to solid (spherical) harmonics.

Further insight into the link between $\sigma_2$ and the theory of spherical harmonics can be gained by considering the aligned case $\boldsymbol{1}_\Omega = \boldsymbol{1}_z$ with $N=2\Omega_s$.
For this particular configuration, we have $\omega_-=\omega_+$ from equation (\ref{eq:omegapmleblonc}).
Then, pressure equation (\ref{eq:laplaP}) and its associated BC for eigenvalues $\omega \in \sigma_2$ reduce to
\begin{subequations}
\label{eq:laplaN=2W}
\begin{equation}
    \nabla^2 \Phi=0, \quad \boldsymbol{n} \boldsymbol{\cdot} \nabla \Phi = - \frac{1}{\mathrm{i} \omega} (2 \boldsymbol{\Omega} \times \boldsymbol{n}) \boldsymbol{\cdot} \nabla \Phi \ \, \text{on $\partial V$}
    \tag{\theequation a,b}.
\end{equation}
\end{subequations}
Laplace equation (\ref{eq:laplaN=2W}a) can then be solved using harmonic functions in spherical or ellipsoidal geometries, and the eigenvalues $\omega$ are the roots of a transcendental equation obtained from BC (\ref{eq:laplaN=2W}b). 
In a sphere, the pressure associated with $\boldsymbol{u} \in \boldsymbol{\mathcal{V}}_n^0$ is given in spherical coordinates $(r,\theta,\phi)$ by the solid harmonics $\Phi(\boldsymbol{r}) = r^l Y_l^m (\theta, \phi)$ with $\omega = \pm 2 \Omega_s m/l$, where $Y_l^m$ is the spherical harmonic of degree $l\leq n+1$ and azimuthal order $m<l$. 
The modes with $m>0$ have a negative frequency $\omega < 0$ when $\Omega_s > 0$ (respectively positive $\omega > 0$ if $m<0$).
Given our phase convention $\propto \exp[\mathrm{i} (\omega t + m \phi)]$, these modes have a positive azimuthal phase velocity $c_\phi = - \omega/m >0$ when $\Omega_s > 0$ (i.e. prograde direction). 
The explicit solutions show that the spectrum $\sigma_{2,n}$ is here isomorph to the set of rational numbers $\mathbb{Q}$, such that $\dim (\sigma_{2,n}) =~n(n+1)$ in this case.
Moreover, it can explicitly be shown that there is a one-to-one correspondence between the solutions in the sphere and those in the ellipsoid for this specific configuration \cite{friedlander1982jfm}.
Since $\mathbb{Q}$ is dense in $\mathbb{R}$, $\sigma_2$ belongs to the essential spectrum $\sigma_E$ when $\boldsymbol{1}_\Omega = \boldsymbol{1}_z$ and $N=2\Omega_s$.

\subsubsection{General properties}
The above examples strongly suggest that $\sigma_2 \subset \sigma_E$. 
This can be shown by further looking at the pressure problem. 
Assuming that $|\omega| \neq \omega_\pm$, the pressure is governed by a boundary-value problem given by equation (\ref{eq:laplaP}) in $V$ and a mixed-type BC on $\partial V$. 
This boundary-value scalar problem is elliptic if the equation is elliptic in $V$ (which occurs here when $0<|\omega| < \omega_-$), but also if some invertibility conditions are satisfied by the BC on $\partial V$ \cite{grubb2008distributions} (see the ESM). 
If both requirements were met in $V$ and everywhere on $\partial V$ in a given interval $[\omega_1,\omega_2]$, then the spectrum would be discrete in this interval (see the ESM). 
Otherwise, the spectrum would be essential in such an interval. 
Here, we can prove that $\sigma_2$ belongs to the essential spectrum in an ellipsoid when $\boldsymbol{\Omega}$ and $\boldsymbol{1}_z$ are aligned (see the ESM).
We also conjecture that $\sigma_2 \in \sigma_E$ in the misaligned case. 
Because the spectrum is pure point in ellipsoids, an infinite number of modes would thus exist with $0 < |\omega| < \omega_-$ when $n \to \infty$. 


\begin{figure}
\centering
\includegraphics[width=0.99\textwidth]{./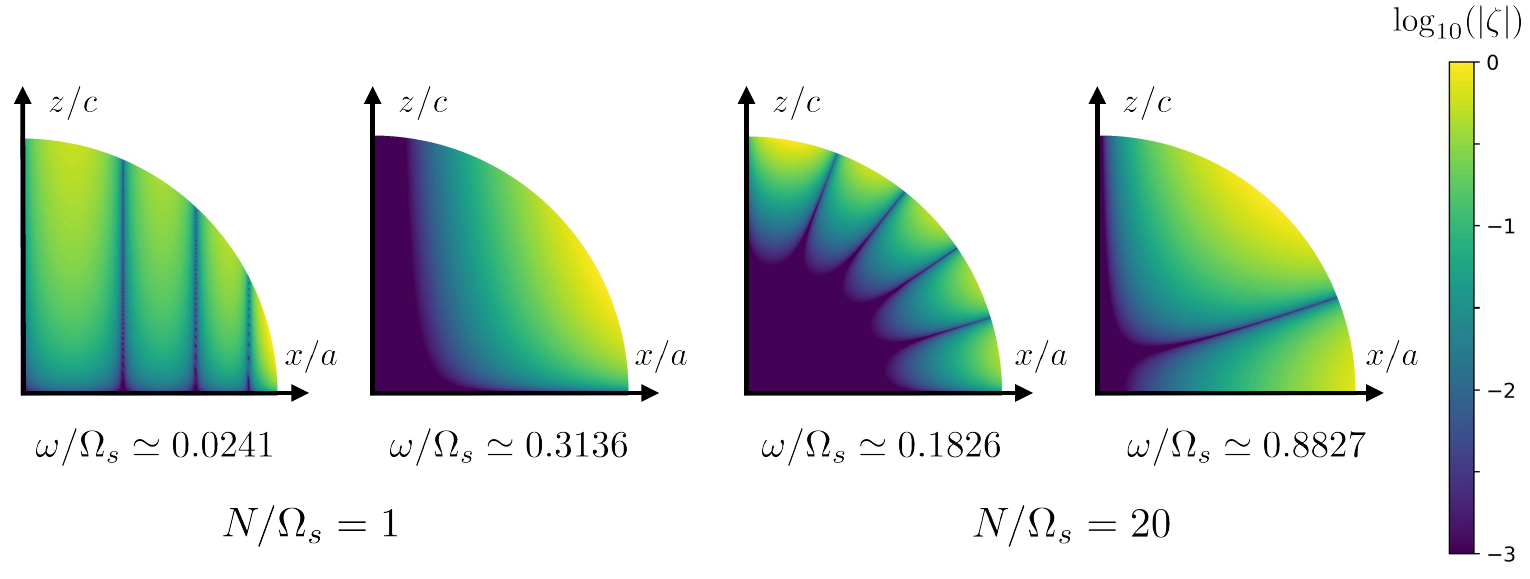} 
\caption{Spatial structure of the low-frequency (surface) modes for $\boldsymbol{u} \in \boldsymbol{\mathcal{V}}_{10}^0$ in the aligned case $\boldsymbol{1}_\Omega = \boldsymbol{1}_z$ as a function of $N/\Omega_s$, in a flattened ellipsoid with $b/a=1$ and $c/a=0.1$. Meridional view of the absolute value of the normalised density perturbation $|\zeta|$ (in logarithmic scale) in the $(Oxz)$ plane (with rescaled axes to have a 1:1 ratio).}
\label{fig:kelvin3d}
\end{figure}

Finally, we explore the properties of the corresponding eigenvectors. 
We remind the reader that these modes only exist when $\omega_- \neq 0$. 
Otherwise, pure inertial modes (when $N=0$), pure internal (gravity) modes (when $\boldsymbol{\Omega}=\boldsymbol{0}$) or pure IGMs (when $\boldsymbol{\Omega} \boldsymbol{\cdot} \boldsymbol{1}_z = 0$) fill the entire interval $0< |\omega| < \omega_+$. 
Two modal families are shown in figure \ref{fig:kelvin3d}.
When $N/\Omega_s \leq 2$, the energy of the lowest-frequency modes with $|\omega| \to 0$ is nearly aligned with the rotation axis.
This structure is reminiscent of geostrophic modes without stratification, and fully agrees with asymptotic analysis when $N/\Omega_s \ll 1$.
As investigated by Allen \cite{allen1971some} when $\boldsymbol{\Omega} \propto \boldsymbol{1}_z$, these modes appear because a weak stratification imposes some spatial restrictions on the steady geostrophic modes without stratification. 
The modes are nearly $z-$independent throughout the fluid (at the leading order), and have a phase travelling in the prograde direction.
Note that this origin seems similar to the appearance of low-frequency Rossby modes in a volume $V$ without closed geostrophic contours \cite{greenspan1968theory}. 
On the contrary, the higher-frequency modes with $|\omega| \lesssim \omega_-$ have an energy that is maximum near the boundary and decays more or less significantly away from the boundary.
These low-frequency (surface) modes seem similar to the (trapped) Kelvin waves in shallow-water models \cite{thomson18801}. 
Indeed, as for the Kelvin waves, the wave energy of these low-frequency (surface) modes is mainly trapped near the boundary and decays away from $\partial V$ (see figure \ref{fig:kelvin3d}). 
Therefore, the presence of a boundary introduces a new subset of low-frequency (surface) modes in a rotating stratified fluid (which differ from the higher-frequency IGMs). 

\subsection{High-frequency IGMs}
\label{subsec:essential}
We investigate IGMs with angular frequencies belonging to $\sigma_1$.
A few illustrative modes are shown in figure \ref{fig:IGM3d}.
IGMs are mainly inertial modes modified by gravity when $N \ll 2 \Omega_s$, whereas they are gravity modes more or less affected by rotation when $N \gg 2 \Omega_s$. 
In between these two limits, rotation and stratification can have competing effects.
Moreover, some of their characteristics are directly controlled by the dispersion relation of IGWs.
For example, the modes exhibit a pancake-like structure when $\omega/\Omega_s \to~2$ and $\boldsymbol{1}_\Omega=\boldsymbol{1}_z$.
From dispersion relation (\ref{eq:dispersionIGW}), the angle $\vartheta$ between $\boldsymbol{1}_z$ and $\boldsymbol{k}$ is such that $\vartheta \to 0$ when $\omega/\Omega_s \to 2$.
Since the group velocity is orthogonal to $\boldsymbol{k}$ for IGWs \cite{leblond1981waves}, the energy propagates in the horizontal direction (in agreement with the observed pancake-like structure).
On the contrary, we have $\theta \to \pi/2$ for high-frequency IGWs with $\omega \to N$, such that the energy varies in the vertical direction.
IGMs with $\omega \in \sigma_1$ are thus the direct counterparts in ellipsoids of IGWs in unbounded fluids. 

\begin{figure}
\centering
\includegraphics[width=0.99\textwidth]{./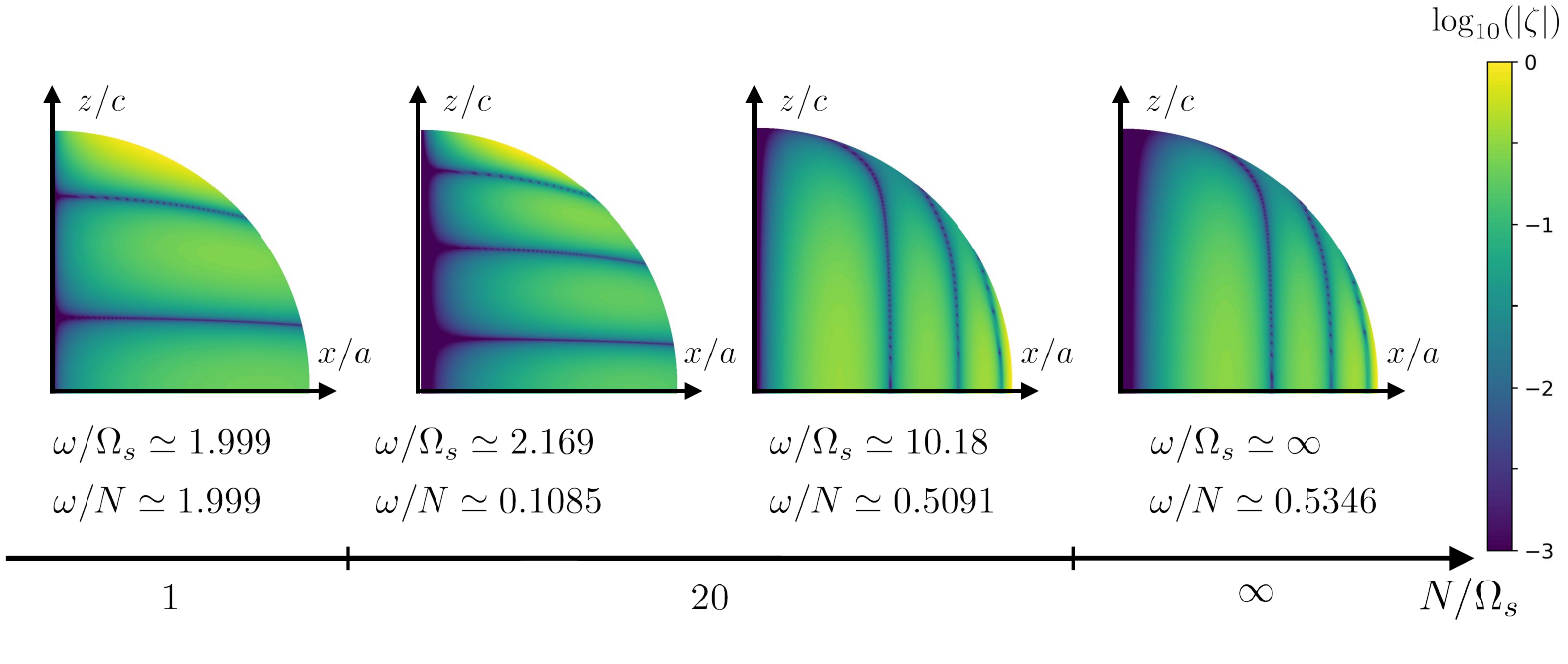} 
\caption{Spatial structure of IGMs for $\boldsymbol{u} \in \boldsymbol{\mathcal{V}}_{10}^0$ in the aligned case $\boldsymbol{1}_\Omega = \boldsymbol{1}_z$ as a function of $N/\Omega_s$, in a flattened ellipsoid with $b/a=1$ and $c/a=0.1$. Meridional view of the absolute value of the normalised density perturbation $|\zeta|$ (in logarithmic scale) in the $(Oxz)$ plane (with rescaled axes to have a 1:1 ratio).}
\label{fig:IGM3d}
\end{figure}

Such as IGWs without boundaries, IGMs are dense in this frequency interval when $n \to \infty$.
Yet, the eigenfrequencies are likely not equiprobable within this interval in ellipsoids. 
The frequency density is worth obtaining for physical applications, since it will strongly constrain the likelihood of obtaining (resonant) IGMs at a given frequency in geophysical vortices. 
The asymptotic density of the eigenvalues in $\sigma_1$ can be obtained using mathematical analysis when $n \to \infty$, and then compared to the density obtained from the polynomial computations at fixed values of $n$. 
We define the probability measure $\pi_n$ of the eigenvalues $\{\omega_j\}$ of $\boldsymbol{\mathcal{Q}}_{\omega} |_{\boldsymbol{\mathcal{W}}_n^0}$ given by 
$\pi_n:= (1/d_n) \sum_{j=1}^{d_n} \delta (\omega_j)$ with $d_n = 2 N(\boldsymbol{\mathcal{W}}_n^0)$, where $\delta$ is the Dirac function. 
As explained above, the eigenvalues of SEP (\ref{eq:GEPV0n}) cannot be easily separated in different subsets for every degree $n$.
Therefore, we rather consider the joint repartition of the eigenvalues of $\boldsymbol{\mathcal{Q}}_\omega |_{\boldsymbol{\mathcal{V}}_n^0}$. 
Naturally, the two repartitions have the same asymptotic distribution when $n \to \infty$.
Asymptotic properties of the essential spectrum of PDEs can generally be investigated using microlocal analysis.
Such an approach usually neglects the boundary effects but, here, it is possible to account for them while keeping a microlocal description.   
The asymptotic behaviour is given in Proposition \ref{theo:dens}. 

\begin{figure}
\centering
\begin{tabular}{cc}
    \includegraphics[width=0.48\textwidth]{./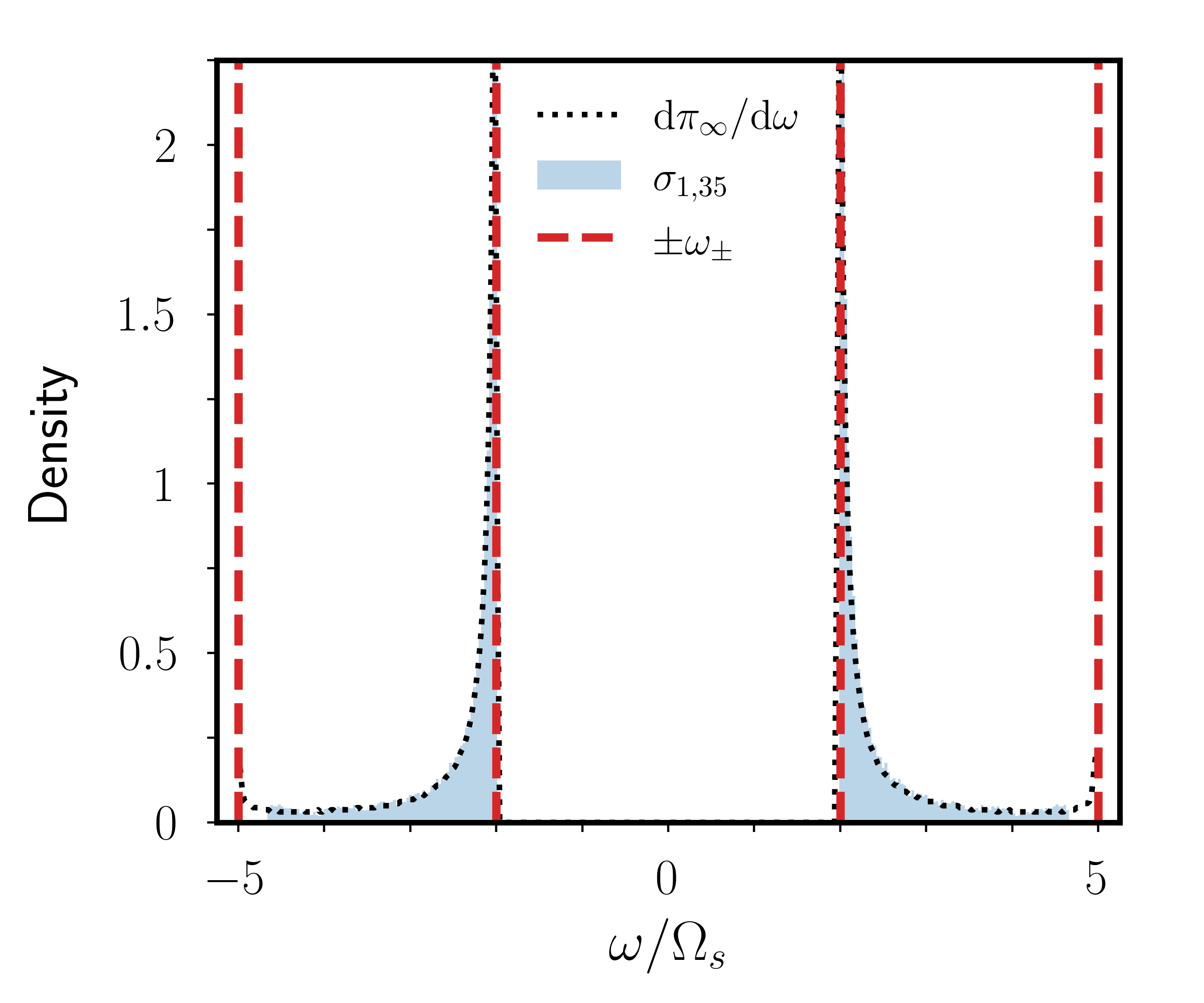} &
    \includegraphics[width=0.48\textwidth]{./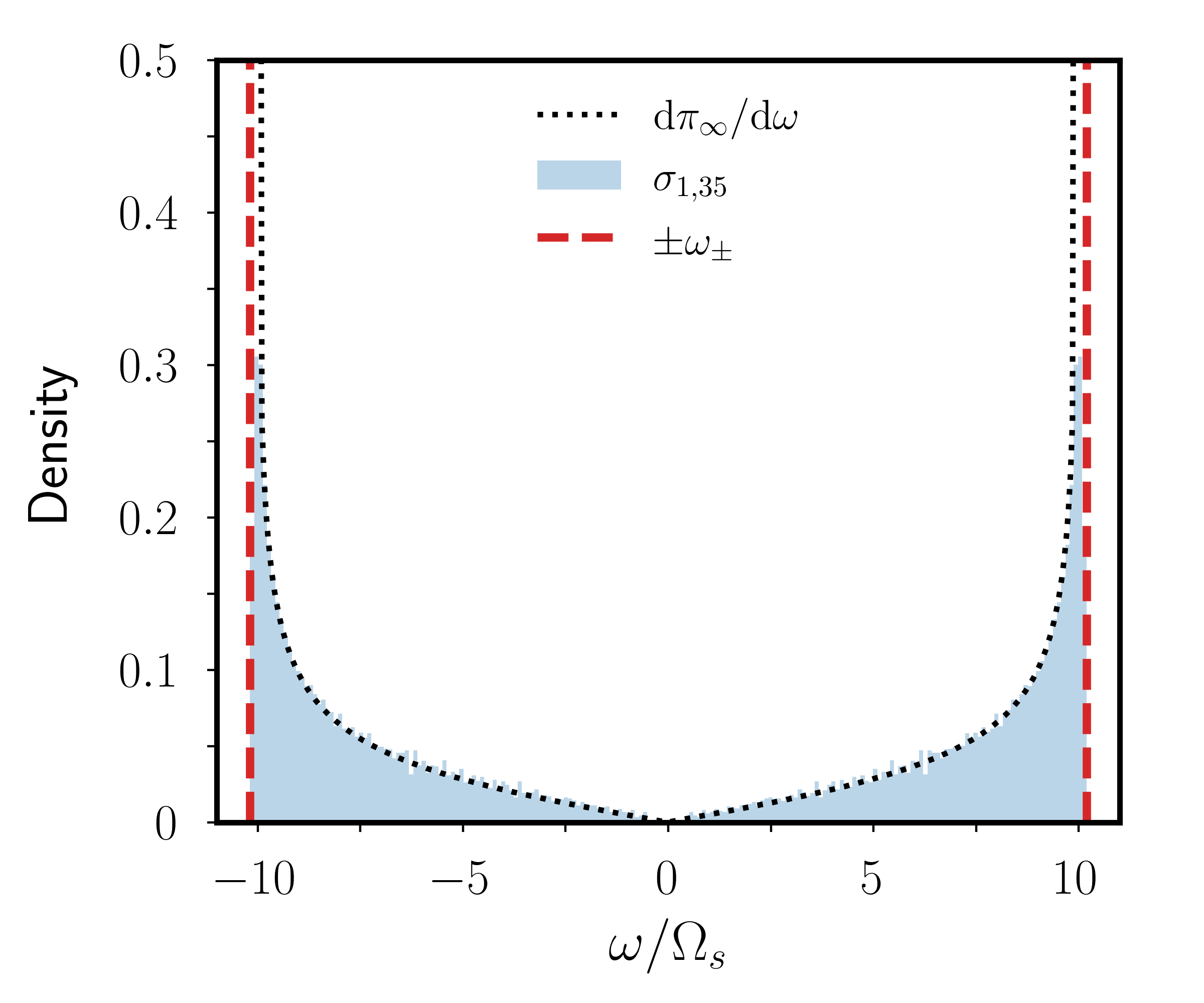} \\
\end{tabular}
\caption{Density of IGMs in $\sigma_1$. Comparison between  asymptotic density $\mathrm{d} \pi_\infty/\mathrm{d}\omega$ given in Proposition \ref{theo:dens} and $\sigma_{2,35}$ obtained from polynomial computations. \emph{Left}: Flattened ellipsoid $b/a=1$ and $ c/a=0.1$ in the aligned case $\boldsymbol{1}_\Omega = \boldsymbol{1}_z$ with $N/\Omega_s = 5$. \emph{Right}: Sphere $b/a=c/a=1$ in the perpendicular case $\boldsymbol{1}_\Omega \boldsymbol{\cdot} \boldsymbol{1}_z = 0$ with $N/\Omega_s = 10$.}
\label{fig:bench}
\end{figure}

\begin{proposition} 
\label{theo:dens} 
The measure $\pi_n$ converges weakly to the asymptotic measure $\pi_\infty$ when $n \to \infty$. 
The asymptotic measure is characterised by a cumulative distribution function given by
\begin{equation*}
\int_{-\infty}^\lambda \mathrm{d} \pi_\infty = \frac{1}{8\pi }\mathrm{Area}(\mathfrak{S}_\lambda \cap S^2 )
\end{equation*}
where $S^2$ is the surface of the unit sphere, $\mathfrak{S}_\lambda$ is the surface in $\mathbb{R}^3$ defined by 
\begin{equation*}
\mathfrak{S}_\lambda:=\{ \widetilde{\boldsymbol{k}} \in \mathbb{R}^3 \ | \ \omega(\widetilde{\boldsymbol{k}}) \leq \lambda \}, \quad \widetilde{\boldsymbol{k}} = ( \sqrt{A_1}k_x,\sqrt{A_2}k_y, \sqrt{A_3}k_z )^\top,
\end{equation*}
with $A_1=1/a^2$, $A_2=1/b^2$ and $A_3=1/c^2$, and where $\omega(\widetilde{\boldsymbol{k}})$ is the plane-wave dispersion relation given by equation (\ref{eq:dispersionIGW}) for the rescaled wave vector $\widetilde{\boldsymbol{k}}$. 
Hence, the asymptotic measure has a probability density function $f_\infty:=~\mathrm{d}\pi_\infty/ \mathrm{d}\omega$ that is even and non-vanishing when $\omega_-^2 \leq \omega^2 \leq \omega_+^2$.
\end{proposition}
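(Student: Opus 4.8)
The plan is to prove the weak convergence of $\pi_n$ by the method of moments, identifying the limit as the push-forward of a uniform measure on the direction sphere under the plane-wave dispersion relation. Since the spectrum is uniformly bounded by $\omega_+$ (Proposition \ref{theo:propsigma3}), all the $\pi_n$ are supported in the fixed compact interval $[-\omega_+,\omega_+]$, and weak convergence is then equivalent to the convergence of every moment. Writing $\boldsymbol{\mathcal{T}}$ for the self-adjoint realisation of the QEP in equation (\ref{eq:selfadjointT}b), whose spectrum coincides with that of $\boldsymbol{\mathcal{Q}}_\omega$ away from $\omega=0$, the $p$-th moment reads $\int\omega^p\,\mathrm{d}\pi_n = d_n^{-1}\,\mathrm{Tr}\bigl(\boldsymbol{\mathcal{T}}^{p}|_{\boldsymbol{\mathcal{W}}_n^0}\bigr)$, the restriction being legitimate because $\boldsymbol{\mathcal{C}}$ and $\boldsymbol{\mathcal{K}}$ preserve each $\boldsymbol{\mathcal{W}}_n^0$ (a consequence of Proposition \ref{theorem:invariancePcPk} together with self-adjointness). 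The whole statement thus reduces to the asymptotic evaluation of these normalised traces as $n\to\infty$.

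First I would remove the ellipsoidal anisotropy through the linear map $\widetilde{\boldsymbol{r}}=(x/a,y/b,z/c)^\top$, which sends the ellipsoid onto the unit ball and preserves the polynomial degree, hence the filtration $\boldsymbol{\mathcal{V}}_n^0$ and the top-degree spaces $\boldsymbol{\mathcal{W}}_n^0$. Under this map $\nabla$ rescales so that a plane wave carries the wave vector $\widetilde{\boldsymbol{k}}=(\sqrt{A_1}k_x,\sqrt{A_2}k_y,\sqrt{A_3}k_z)^\top$ of the statement. In these variables $\boldsymbol{\mathcal{T}}$ is a matrix pseudodifferential operator of order zero whose only non-trivial ingredient is the Leray projector $\mathbb{L}$; computing the roots of the symbol-level quadratic form $-\omega^2+\omega\,\sigma_{\mathrm{i}\boldsymbol{\mathcal{C}}}(\widetilde{\boldsymbol{k}})+\sigma_{\boldsymbol{\mathcal{K}}}(\widetilde{\boldsymbol{k}})=0$ returns precisely $\pm\omega(\widetilde{\boldsymbol{k}})$, the two IGW branches of dispersion relation (\ref{eq:dispersionIGW}). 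Because this symbol is homogeneous of degree zero in $\widetilde{\boldsymbol{k}}$, it depends only on the direction $\widetilde{\boldsymbol{k}}/|\widetilde{\boldsymbol{k}}|\in S^2$.

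The key analytic input is then a Szeg\H{o}-type limit theorem for the compression of a zeroth-order operator to polynomials of high degree. On the ball the spaces $\boldsymbol{\mathcal{W}}_n^0$ are carried by vector solid harmonics of degree close to $n$, which equidistribute as $n\to\infty$, so the microlocal measure attached to the compression is the normalised uniform measure on $S^2$. I would then show that $d_n^{-1}\,\mathrm{Tr}\bigl(\boldsymbol{\mathcal{T}}^{p}|_{\boldsymbol{\mathcal{W}}_n^0}\bigr)$ converges, for every $p$, to the average over $S^2$ of the $p$-th power of the symbol eigenvalues $\pm\omega(\widetilde{\boldsymbol{k}})$. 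Recognising the limit as the $p$-th moment of the push-forward of the uniform surface measure under $\pm\omega$ identifies $\pi_\infty$, whose cumulative distribution function is $\tfrac{1}{8\pi}\,\mathrm{Area}(\mathfrak{S}_\lambda\cap S^2)$; the prefactor collects the sphere area $4\pi$ and the doubling $d_n=2N(\boldsymbol{\mathcal{W}}_n^0)$. The parity of $f_\infty$ follows from the $\omega\mapsto-\omega$ symmetry of the spectrum (Proposition \ref{theo:propQEP}), while its support is the hyperbolic band because the range of $\omega(\widetilde{\boldsymbol{k}})$ over real directions is exactly $[\omega_-,\omega_+]$.

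I expect the main obstacle to be the rigorous justification of this Szeg\H{o}/Weyl limit in the present constrained, bounded setting: $\boldsymbol{\mathcal{T}}$ is not a classical pseudodifferential operator on a closed manifold but is built from the Leray projection onto divergence-free fields tangent to $\partial V$, so one must verify that the projector and the tangency constraint leave the leading symbol unchanged and contribute only $O(1/n)$ corrections to the normalised traces. The boundary is the delicate point, since the very definition of $\boldsymbol{\mathcal{W}}_n^0$ encodes the condition $\boldsymbol{u}\cdot\boldsymbol{n}=0$; controlling this boundary layer while retaining a purely bulk microlocal description of the density is exactly where the non-stratified analysis of \cite{CdV2023spectrum} would be extended, the only genuinely new ingredient being the buoyancy term $N^2|\boldsymbol{1}_z\times\boldsymbol{k}|^2$ in the symbol.
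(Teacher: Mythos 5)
Your proposal correctly sets up the reduction to trace asymptotics on the invariant spaces $\boldsymbol{\mathcal{W}}_n^0$, and the identification of the limit as the push-forward of the uniform measure on $S^2$ under the dispersion relation matches the statement. But there is a genuine gap at the central analytic step: you run the Szeg\H{o}/Weyl argument on the self-adjoint linearisation $\boldsymbol{\mathcal{T}}$ of equation (\ref{eq:selfadjointT}b) and assert that, after rescaling to the ball, ``$\boldsymbol{\mathcal{T}}$ is a matrix pseudodifferential operator of order zero whose only non-trivial ingredient is the Leray projector.'' This is false, and it is precisely the obstruction the paper's proof is organised around. The off-diagonal blocks of $\boldsymbol{\mathcal{T}}$ are $\boldsymbol{\mathcal{B}}^{1/2}=\boldsymbol{\mathcal{K}}^{1/2}$, and the operator square root does not stay in the class of zeroth-order pseudodifferential operators here: the principal symbol of $\boldsymbol{\mathcal{K}}$, namely $\boldsymbol{u}\mapsto N^2 P_{\boldsymbol{k}}\bigl[(\boldsymbol{u}\boldsymbol{\cdot}\boldsymbol{1}_z)\boldsymbol{1}_z\bigr]$ restricted to $\boldsymbol{k}^\perp$, is a degenerate (rank-one) non-negative matrix, and $\boldsymbol{\mathcal{K}}$ itself has an infinite-dimensional kernel (the horizontal 2D flows of Proposition \ref{theo:dimkerCK}). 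Since $t\mapsto\sqrt{t}$ is not smooth at $t=0$ and $0$ lies in the spectrum of both the operator and its symbol, no functional-calculus argument gives $\boldsymbol{\mathcal{K}}^{1/2}$ as a pseudodifferential operator with symbol $\sigma_{\boldsymbol{\mathcal{K}}}^{1/2}$; your symbol-level quadratic formula for the eigenvalue branches $\pm\omega(\widetilde{\boldsymbol{k}})$ therefore has no operator behind it, and the trace asymptotics for $\boldsymbol{\mathcal{T}}^p$ cannot be justified this way.

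The paper's proof avoids exactly this trap: it works with the non-self-adjoint linearisation $\boldsymbol{\mathcal{L}}$ of equation (\ref{eq:selfadjointT}a), whose blocks ($\boldsymbol{\mathcal{I}}$, $\boldsymbol{\mathcal{K}}$, $\mathrm{i}\boldsymbol{\mathcal{C}}$) are all genuine zeroth-order operators built from multiplications and the Leray projector, accepting the loss of self-adjointness (which is shown to be harmless in the trace formula). Relatedly, the paper establishes the asymptotics of $\sum_j f(\omega_j)$ only for $f\in C_0(\mathbb{R}\setminus 0)$, i.e. test functions vanishing near $\omega=0$, because $\ker\boldsymbol{\mathcal{L}}$ carries generalized eigenvectors unrelated to the physical steady solutions; your method of moments, using $f(\omega)=\omega^p$, does not respect this exclusion, so even after repairing the symbol issue you would need an argument that the degenerate behaviour at $\omega=0$ (asymptotically half of the eigenvalues) does not pollute the trace asymptotics. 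Your closing paragraph correctly flags the boundary as delicate, but the ``genuinely new ingredient'' relative to \cite{CdV2023spectrum} is not merely the buoyancy term in the symbol: it is the quadratic structure of the eigenvalue problem and the consequent need to choose a linearisation compatible with microlocal analysis.
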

\begin{proof}
This is the most technical part of the paper. 
The proof closely follows the strategy presented in Colin de Verdi\`ere \& Vidal \cite{CdV2023spectrum} for the non-stratified problem, so we only outline the key steps below.
The proof can be extended to matrix operators that only involve pseudo-differential operators of degree $0$.
This condition is not satisfied by the operator $\boldsymbol{\mathcal{T}}$ defined in equation (\ref{eq:selfadjointT}b).
Thus, the starting point is to consider the matrix operator $\boldsymbol{\mathcal{L}}$ defined in equation (\ref{eq:selfadjointT}a).
The fact that this operator is not self-adjoint is not a problem in the analysis (i.e. in the trace formula). 
Then, we apply the tools of microlocal analysis in the presence of boundary effects. 
They allow us to get the asymptotics of $\sum_{j=1}^{d_n} f(\omega_j)$ when $n \to \infty$ for any function $f\in C_0 (\mathbb{R}\setminus 0)$ (i.e. the set of real-valued continuous functions vanishing near zero).
The analysis heavily relies on the principal symbol of the operator $\boldsymbol{\mathcal{L}}$ when $\omega \neq 0$.  
This gives the asymptotic measure in the interval $I_1$ (where the velocity equation is hyperbolic). 
Note that we avoid the value $\omega=0$ because $\ker \boldsymbol{\mathcal{L}}$ is not clearly related to the steady solutions of the original physical problem (due to the existence of generalized eigenvectors). 
\end{proof}

Proposition \ref{theo:dens} shows that $\sigma_1$ is dense in $I_1$ when $n\to\infty$. 
Next, we compare in figure \ref{fig:bench} the asymptotic density with the density obtained from polynomial computations at a fixed degree $n=35$.
The asymptotic density is normalised such that $\int_{-\infty}^\infty \mathrm{d} \pi_\infty = 1$ but, since Proposition \ref{theo:dens} heavily relies on the dispersion relation of IGWs, it is assumed that $\mathrm{d} \pi_\infty/\mathrm{d} \omega = 0$ when $\omega \in I_2$. 
To have a correct comparison, we have thus removed the spectra $\sigma_{2,n}$ before normalising the density.
An excellent agreement is then obtained with the asymptotic density (even for the moderate polynomial degree $n=35$ considered here). 
We have also represented a particular configuration for which $\boldsymbol{1}_\Omega \boldsymbol{\cdot} \boldsymbol{1}_z = 0$, such that $\omega_- = 0$ from expression (\ref{eq:omegapmleblonc}). 
In this case, IGMs are dense in the entire interval $]0,\omega_+[$. 
Note that a similar agreement is found for other configurations (not shown).
Hence, Proposition \ref{theo:dens} can be used to obtain the density of $\sigma_1$. 

The asymptotic behaviour is further illustrated in figure \ref{fig:essential2} for different parameters and geometries.  
The density of pure inertial modes is known to be uniform within the frequency interval $]-2\Omega_s,2\Omega_s[$ for a sphere \cite{CdV2023spectrum}.
Yet, the density of pure internal gravity modes without rotation is not uniform in the sphere (it is larger for the high-frequency modes than for the low-frequency modes). 
Between these two limits, the density is very close to the non-rotating case when $N/\Omega_s \gg 1$.
On the contrary, the density of the high-frequency IGMs with $|\omega|/\omega_{\max}\to~ 1$ approaches the density of the non-stratified modes when $N/\Omega_s \leq 2$. 
We also observe that the geometry does modify the spectral density. 
When $N/\Omega_s < 2$, the density is maximal near $|\omega| \to \omega_{-}$ in elongated (i.e. prolate) ellipsoids with $c/a \gg 1$, whereas it is maximal near $|\omega| \to \omega_+$ in flattened (i.e. oblate) ellipsoids with $c/a \ll 1$. 
Instead, low-frequency IGMs with $|\omega| \to \omega_-$ are favoured in flattened ellipsoids when $N/\Omega_s > 2$. 

\begin{figure}
\centering
\begin{tabular}{cc}
    \includegraphics[width=0.49\textwidth]{./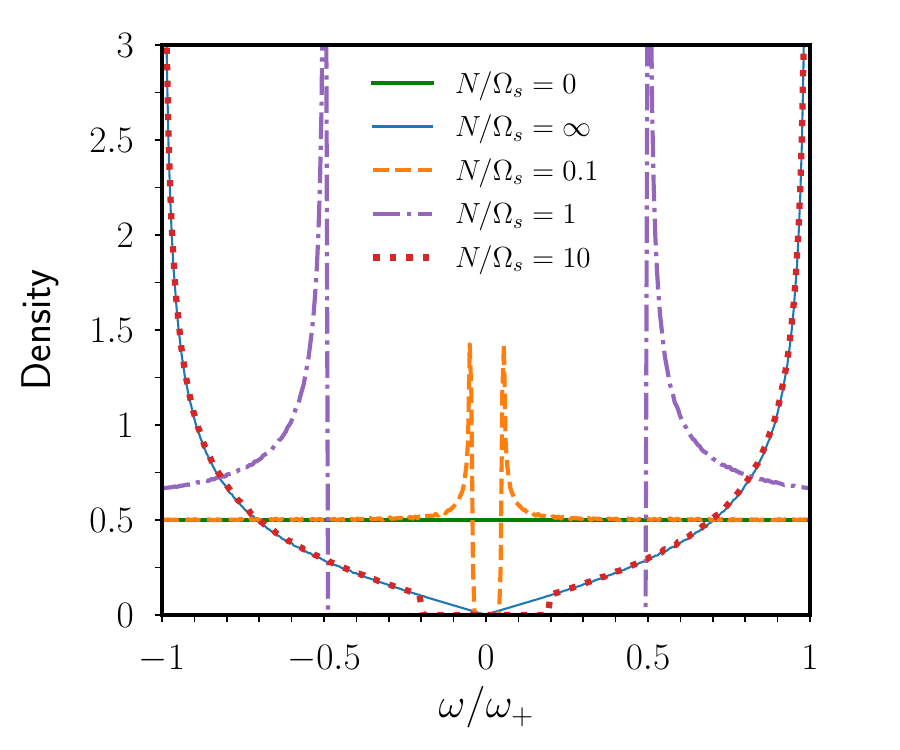} &
    \includegraphics[width=0.49\textwidth]
    {./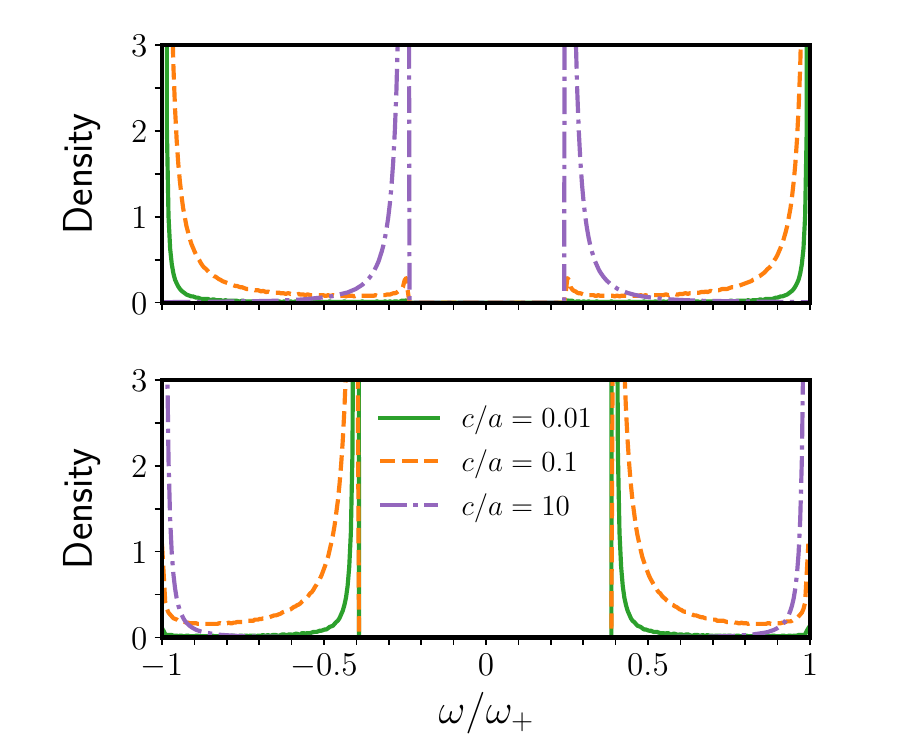} \\
\end{tabular}
\caption{Asymptotic density $\mathrm{d} \pi_\infty/\mathrm{d}\widetilde{\omega}$ of IGMs with normalised eigenvalues $\widetilde{\omega}:=\omega/\omega_{+}$, given by Proposition \ref{theo:dens}. \emph{Left}: Sphere $b/a=c/a=1$ in the aligned case $\boldsymbol{1}_\Omega = \boldsymbol{1}_z$.  \emph{Right}: Axisymmetric ellipsoids with $b/a=1$ and varying $c/a$ ratios. Top panel computed for $N/\Omega_s=0.5$, and bottom one for $N/\Omega_s=5$.}
\label{fig:essential2}
\end{figure}

Finally, the situation $\omega_+=\omega_-$ occurring when $\boldsymbol{1}_\Omega=\boldsymbol{1}_z$ and $N/\Omega_s = 2$ deserves a specific consideration. 
The spectrum $\sigma_1$ is then reduced to the degenerate eigenvalue $\omega/\Omega_s=\pm2$, whose multiplicity can be determined.
Indeed, we know that $\dim(\sigma_{2,n}) = n(n+1)$ from the explicit solutions of equations (\ref{eq:laplaN=2W}a,b).
Because the total number of non-zero eigenvalues is fixed for a degree $n$ (Proposition \ref{theo:nnzVn0}), we have $\dim(\sigma_{1,n})=n(n+1)(n+~2)/3$ when $\omega_+=\omega_-$.

\section{Physical discussion}
\label{sec:discussion}
The idealised model with a constant $N$ has allowed us to identify the key properties of the bounded oscillations in rotating stratified ellipsoids. 
Next, physical insight into geophysical vortices could be obtained by plugging realistic parameters into the model.
Typical geophysical estimates for some large-scale vortices are given in table \ref{table:vortex}. 
These vortices are strongly stratified with $N/\Omega_s \gg 1$ and, consequently, they usually have a very small aspect ratio $c/a \leq 10^{-2}$. 
However, the value of $N$ is still subject to strong uncertainties.
Indeed, the remote determination of the ellipsoidal geometry is only sensitive to the difference between the BV frequency within the vortex and that of the ambient fluid \cite{hassanzadeh2012universal,aubert2012universal,lemasquerier2020remote}.
The detection of IGWs (or IGMs) might be another way to estimate the internal stratification of such vortices.
Various mechanisms could indeed excite such wave motions within a vortex (e.g. as interactions with the surrounding fluid, or bulk turbulence \cite{lanchon2023internal}). 
The bounds given by formula (\ref{eq:omegapmleblonc}), which depend on 
the orientation of $\boldsymbol{\Omega}$ and on the ratio $N/\Omega_s$, can be simplified in the relevant regime $N/\Omega_s \gg1$. 
It gives
\begin{subequations}
\begin{equation}
    \omega_+^2 \simeq N^2 + 4 \left [ |\boldsymbol{\Omega}|^2 - (\boldsymbol{\Omega} \boldsymbol{\cdot} \boldsymbol{1}_z)^2 \right ], \quad
    \omega_-^2 \simeq 4 |\boldsymbol{\Omega} \boldsymbol{\cdot} \boldsymbol{1}_z|^2.
    \tag{\theequation a,b}
\end{equation}
\end{subequations}
We see that the upper bound approaches $\omega_+^2 \to N^2$ for realistic values $N/\Omega_s \gg 1$. 
Detecting IGMs within a vortex may thus yield an estimate of the vortex's stratification.

A striking property of the global modes is that they admit, under our assumptions, polynomial solutions. 
This behaviour directly results from Proposition \ref{theorem:invariancePcPk}, and IGMs in other configurations can be very different. 
To illustrate this point, we consider a sphere of radius $R$ that is stratified under the radial gravity $\boldsymbol{g} = -(g/R) \, \boldsymbol{r}$. 
The background hydrostatic density is then $\nabla \rho_0 = -(\beta/R) \, \boldsymbol{r}$ (where $\beta>0$ is a constant).
The BV frequency is thus given by $N^2 = N_0^2 (r/R)^2$ using the Boussinesq approximation, where $N_0 = \sqrt{\beta g/\rho_*}$ is the constant value of $N$ at $r=R$. 
Then, the IGMs obey a mixed hyperbolic-elliptic equation that changes of type when \cite{friedlander1982gafd,dintrans1999gravito}
\begin{equation}
    s^2/(\omega^2-4\Omega_s^2) + z^2/\omega^2 = 1/N_0^2,
    \label{eq:turningSPHERE}
\end{equation}
where $s = \sqrt{x^2+y^2}$ is the cylindrical radius measured from the rotation axis $\boldsymbol{\Omega}=\Omega_s \boldsymbol{1}_z$. 
For a given angular frequency $\omega$, equation (\ref{eq:turningSPHERE}) defines a turning surface that separates regions where the equation is hyperbolic and elliptic in $V$. 
Such turning surfaces strongly impact the spatial structure of IGMs, which are mainly trapped in the hyperbolic region \cite{dintrans1999gravito}.
Moreover, IGMs with $|\omega| \to 2 \Omega_s < N$ become trapped in the equatorial waveguide when $N/\Omega_s \gg 1$ \cite{stewartson1976waves}. 
Finally, such turning surfaces are also likely responsible for a non-empty continuous spectrum without diffusion \cite{dintrans1999gravito}. 
On the contrary, the eigenmodes in a stratified ellipsoid with a constant $N$ are always smooth and can penetrate deep inside the volume.

\begin{table}
\caption{Typical estimates for some Jovian vortices and Mediterranean eddies (obtained from \cite{aubert2012universal,lemasquerier2020remote}). Rotation is usually described using the $f-$plane approximation with the Coriolis parameter $f = 2 \boldsymbol{\Omega} \boldsymbol{\cdot} \boldsymbol{1}_z = 2 \Omega_z \cos (\theta)$, where $\theta$ the mean colatitude of the vortex (figure \ref{fig:fig2}).}
\centering
\begin{tabular}{lccccc}
\hline\vspace{1em}
{} & $f$ [rad.s$^{-1}$] & $N/f$ & $a$ [km] & $b$ [km] & $c$ [km] \\
Jupiter's GRS & $\sim 1.4 \times 10^{-4}$ & $\mathcal{O}(10^2)$ & $\sim 9000$ & $\sim 5000$ & $\sim 100$ \\
Jupiter's Oval BA & $\sim 1.9 \times 10^{-4}$ & $\mathcal{O}(10^2)$ & $\sim 3000$ & $\sim 3000$ & $\sim 50$\\
Mediterranean eddies & $\sim 8.5 \times 10^{-5}$ & $10-50$ & $10-10^2$ & $10-10^2$ & $0.5 - 1$ \\
\hline
\end{tabular}
\label{table:vortex}
\end{table}

We have also shown the existence of low-frequency (surface) modes when $0 < |\omega| < \omega_-$.
Their properties are very similar to those of the (trapped) Kelvin waves in rotating shallow-water models \cite{thomson18801}, which owe their existence to topological features \cite{tauber2019bulk,venaille2021wave}. 
Our study brings here a complementary viewpoint.
Indeed, we show that such low-frequency modes can exist in a bounded geometry with rigid walls if the pressure problem fails to be elliptic (due to the BC, see the ESM).
Since this situation is expected to be rather generic, such low-frequency (surface) modes could also exist in more complicated models of stratified vortices.

Finally, we turn our attention to the lifetime of geophysical vortices. 
Without bulk motions, the slow temporal evolution of a stratified vortex is governed by diffusion processes that operate on the slow time scale \cite{facchini2016lifetime,le2021numerical} $\tau_\nu = (a^2/\nu) (2 \Omega_s/N)^2$, where $\nu$ is the kinematic viscosity of the fluid within the vortex. 
Yet, it is known that bulk dissipation can be significantly increased in the presence of IGMs shaped by wave attractors \cite{rieutord1999analogy,dintrans1999gravito}. 
Hence, we could wonder how much energy could be dissipated if some IGMs were forced inside an elliptical vortex.
Without stratification, a pure inertial mode $(\omega, \boldsymbol{u})$ satisfies in an ellipsoid the intriguing integral property  \cite{vantieghem2014inertial,ivers2017enumeration}
\begin{equation}
    \int_V \boldsymbol{u}^\dagger \boldsymbol{\cdot} \nabla^2 \boldsymbol{u} \, \mathrm{d} V = \int_V \boldsymbol{u}^\dagger \boldsymbol{\cdot} \mathbb{L}(\nabla^2 \boldsymbol{u}) \, \mathrm{d} V = 0.
    \label{eq:viscousintegral}
\end{equation}
This could be interpreted as a vanishing viscous dissipation of the inertial modes in an ellipsoid if there were no boundary layers. 
However, when viscous BCs are properly taken into account on the boundary (either with no-slip or stress-free BCs), the inertial modes have a non-zero viscous dissipation \cite{greenspan1968theory,vidal2023precession}.
From a mathematical viewpoint, formula (\ref{eq:viscousintegral}) is valid in an ellipsoid because (i) $\mathbb{L}(\nabla^2 \boldsymbol{u}) \in \boldsymbol{\mathcal{V}}_{n-2}^0$ when $\boldsymbol{u} \in \boldsymbol{\mathcal{V}}_n^0$ and (ii) two inertial eigenvectors $(\boldsymbol{u}_i, \boldsymbol{u}_j)$ are orthogonal with respect to inner product (\ref{eq:innerproduct}) such that $\langle \boldsymbol{u}_i, \boldsymbol{u}_j \rangle = \delta_{ij}$ (when properly normalised). 
Here, the IGMs do not satisfy property (\ref{eq:viscousintegral}) because they are not mutually orthogonal with respect to inner product (\ref{eq:innerproduct}) as explained in Proposition \ref{theo:propQEP}. 
This shows that the IGMs are (at least) subject to viscous dissipation in the bulk on the slow time scale $1/E$ in a stratified vortex. 
In addition, they will also be affected by thermal diffusion (if the stratification has a thermal origin, otherwise by molecular diffusion).
Since the IGMs are here smooth polynomial solutions, it is possible to account for the leading-order dissipative effects using boundary-layer theory (see the ESM). 
This also shows that the IGMs would be dissipated on a typical time scale that is of the same order as $\tau_\nu$ in the absence of motions.
The presence of IGMs may thus affect the slow temporal evolution of stratified vortices. 
Moreover, small-scale turbulence is often present inside stratified vortices, and IGMs could be involved in nonlinear interactions to generate (weakly) turbulent motions \cite{kerswell2002elliptical,lanchon2023internal,boury2023triadic}. 
Therefore, IGMs may play a role in the transition towards turbulence and the long-term evolution of stratified vortices.

\section{Concluding remarks}
\label{sec:ccl}
We have investigated the normal modes of a diffusionless fluid enclosed in a uniformly rotating ellipsoidal vortex, which is stratified in density under a background gravity.
Indeed, they could be important for the dynamics of pancake-like stratified vortices (which are ubiquitous in geophysical flows). 
We have employed the Boussinesq approximation for simplicity, and shown that the problem can be recast as a mixed hyperbolic-elliptic equation for the fluid velocity. 
Next, we have simplified the model by considering a constant BV frequency.  
This idealised model retains the key ingredients to account for the dynamics of geophysical vortices. 
Under these assumptions, we have proved that the problem admits smooth polynomial eigenvectors in triaxial ellipsoids. 
Given the polynomial form of the eigenvectors, we have characterised the spectrum by using a bespoke numerical algorithm and asymptotic spectral theory following our prior study without stratification \cite{CdV2023spectrum}. 
The normal modes consist of high-frequency IGMs and low-frequency (surface) modes.
Using microlocal analysis, we have shown that the boundary plays a key role in sustaining such low-frequency (surface) modes.
Finally, we have discussed our results in the light of geophysical applications, arguing that accounting for IGMs in the models could be useful to (better) understand the dynamics of stratified vortices. 

Several questions have remained unanswered by our work (even in the linear theory). 
A more in-depth investigation of the properties of the low-frequency (surface) modes will be presented in a mathematical work \cite{CdV2024spectrum}. 
From a physical viewpoint, the model could also be extended to account for additional ingredients (while keeping polynomial solutions).
For instance, a uniform-vorticity background flow $\boldsymbol{U}_0$ will be taken into account to study the effects of the vortex's differential rotation (as measured by the Rossby number $Ro$). 
Large-scale vortices (e.g. Jupiter's vortices or Mediterranean eddies) are expected to be in the low-Rossby regime $|Ro|~\lesssim~\mathcal{O}(10^{-1})$, whereas smaller-scale vortices are in the opposite regime $|Ro| \gtrsim \mathcal{O}(1)$ (e.g. vortices at sub-mesoscales $\lesssim 1-10$~km in the Earth's oceans \cite{mcwilliams2016submesoscale}). 
We have also entirely discarded magnetic effects.
This is a reasonable assumption since, for realistic parameters, magnetic fields are expected to only weakly modify the high-frequency IGMs at the size of a vortex. 
However, magnetic fields are known to affect the low-frequency spectrum in rotating, stratified and electrically conducting fluids \cite{friedlander1987hydromagnetic,kerswell1993elliptical}.
Moreover, there is a class of magnetic fields leaving invariant the three-dimensional polynomial vector spaces in an ellipsoid \cite{malkus1967hydromagnetic,gerick2020pressure}.
Therefore, our ellipsoidal model could be used as a toy model to investigate the properties of these Magneto-Archimedean-Coriolis (MAC) modes, by combining microlocal analysis and polynomial computations. 
Other physical ingredients (e.g. a variable BV frequency) would break the exact polynomial nature of the eigenvectors.
Hence, the spectrum should be investigated using another numerical method.

Finally, future studies could build upon our work to understand the onset of turbulence within geophysical vortices. 
Turbulent mixing and dissipation may indeed (partly) explain why some stratified vortices are remarkably stable over long time scales, while some others have much shorter lifetimes.
Several routes exist for the excitation of turbulence in stratified fluids, but direct applications of our work include the study of elliptical instabilities. 
So far, the flows driven by elliptical instabilities have only received scant attention in stratified environments \cite{cebron2010tidal,le2018parametric,vidal2018magnetic,onuki2023breaking}.
The polynomial Galerkin algorithm presented in this work can be extended in the linear theory to determine the onset of the instabilities upon an elliptical flow. 
Therefore, elliptical instabilities will be considered in future work to explore, as a function of the Rossby number, the transition towards turbulence in stratified vortices.

\enlargethispage{20pt}

\dataccess{The paper has an Electronic Supplementary Material (ESM). The source code is available at \url{https://bitbucket.org/vidalje/}.}

\aucontribute{This work is an idea of JV to pursue the interdisciplinary collaboration with YCdV after their joint work \cite{CdV2023spectrum}. JV designed the study and performed the Galerkin computations, whereas YCdV conducted the microlocal analysis. Both authors discussed the results presented in the paper and drafted the manuscript before giving final approval for submission.}

\competing{The authors declare that they have no competing interests.}

\ai{The authors did not use AI-assisted technologies in creating this article.}

\funding{JV received funding from the European Research Council (ERC) under the European Union's Horizon 2020 research and innovation programme (grant agreement No 847433, \textsc{theia} project). The numerical computations were performed using both laboratory resources and using the \textsc{gricad} infrastructure (\url{https://gricad.univ-grenoble-alpes.fr}), which is supported by Grenoble research communities.
A CC-BY public copyright license has been applied by the authors to the present document and will be applied to all subsequent versions up to the Author Accepted Manuscript arising from this submission, in accordance with the grant's open access conditions.}

\ack{JV would like to dedicate this work to the memory of Norman R. Lebovitz (1935-2022) and Stjin Vantieghem (1983-2023), who both revivified the topic and made pioneering contributions. JV acknowledges Bernard Valette and David C\'ebron for fruitful discussions over the years.
YCdV acknowledges Gerd Grubb for her clarification of the definition of a boundary-value elliptic problem.
The authors acknowledge the three anonymous referees, whose valuable comments helped them to
improve the clarity and quality of the manuscript.}


{
\bibliography{./bibwaves}
\bibliographystyle{RS.bst}
}


\end{document}


\title{Electronic Supplementary Material of "Inertia-gravity waves in geophysical vortices"}

\author{J\'er\'emie Vidal\footnote{Universit\'e Grenoble Alpes, CNRS,  ISTerre, 38000 Grenoble, France} \href{https://orcid.org/0000-0002-3654-6633}{\orcidicon} \& Yves Colin de Verdi\`ere\footnote{Universit\'e Grenoble Alpes, CNRS, Institut Fourier, 38000 Grenoble, France}}

\maketitle

\begin{abstract}
We present the derivation of the boundary-value pressure problem governing the normal modes of an incompressible fluid, which is rotating and stratified in density under the Bousinesq approximation. 
Next, we discuss for which conditions this is an elliptic boundary-value problem (using microlocal analysis). 
Applications to ellipsoidal vortices with a constant Brunt-V\"ais\"al\"a frequency are then considered. 
Finally, we introduce a boundary-layer analysis to determine the (leading-order) diffusive decay rates of the normal modes in an ellipsoid.
\end{abstract}

\section{Boundary-value pressure problem}
We consider a diffusionless and incompressible fluid, which is enclosed in a smooth bounded domain $V$ (whose boundary is denoted by $\partial V$). 
The fluid is rotating at the angular velocity $\boldsymbol{\Omega} = (\Omega_x, \Omega_y, \Omega_z)^\top$ with respect to an inertial frame, and is stratified in density under the imposed gravity field $\boldsymbol{g} = -g(z) \boldsymbol{1}_z$ within the Boussinesq approximation. 
At rest, the pressure $P_0$ and the density satisfy the hydrostatic balance $\nabla P_0 = [\rho_*+\rho_0(z)] \boldsymbol{g}$, where $\rho_*$ is the typical (mean) density of the fluid within the Boussinesq approximation. 
Then, we seek small-amplitude oscillatory perturbations (where $\omega \in \mathbb{R}$ is the angular frequency) for the velocity $\boldsymbol{u}\exp(\mathrm{i} \omega t)$, the pressure $\Phi\exp(\mathrm{i} \omega t)$, and the density $\zeta \exp(\mathrm{i} \omega t)$ upon the hydrostatic basic state.  
These perturbations are governed in the rotating frame by the incompressible Euler equations
\begin{subequations}
\label{eq:euler}
\begin{equation}
\mathrm{i} \omega \boldsymbol{u} + 2 \left ( \boldsymbol{\Omega} \times \boldsymbol{u} \right ) = - \nabla \Phi - (\zeta/\rho^*) g(z) \boldsymbol{1}_z, \quad \nabla \boldsymbol{\cdot} \boldsymbol{u} = 0,
\tag{\theequation a,b}
\end{equation}
\end{subequations}
and the buoyancy equation
\begin{subequations}
\label{eq:mass}
\begin{equation}
    \mathrm{i} \omega \zeta = \frac{\rho_* N^2(z)}{g(z)} \left (\boldsymbol{u} \boldsymbol{\cdot} \boldsymbol{1}_z \right ), \quad N^2(z) := - \frac{g(z)}{\rho_*} \frac{\mathrm{d} \rho_0}{\mathrm{d}z},
    \tag{\theequation a,b}
\end{equation}
\end{subequations}
where $N(z)$ is the Brunt-V\"ais\"al\"a (BV) frequency. 
The latter equations can be reduced, as long as $\omega\neq0$, to a quadratic equation in $\omega$ for the velocity given by
\begin{equation}
    -\omega^2 \boldsymbol{u} + \mathrm{i} \omega  \left (2 \boldsymbol{\Omega} \times \boldsymbol{u} \right ) + N^2 (z) \left ( \boldsymbol{u} \boldsymbol{\cdot} \boldsymbol{1}_z \right ) \boldsymbol{1}_z = -\mathrm{i} \omega \nabla \Phi.
    \label{eq:waveeqU}
\end{equation}
Finally, equation (\ref{eq:waveeqU}) is supplemented with the no-penetration boundary condition (BC) given by $\left . \boldsymbol{u} \boldsymbol{\cdot} \boldsymbol{n} \right |_{\partial V}= 0$ (i.e. $\boldsymbol{u}$ is tangent to $\partial V$), where $\boldsymbol{n}$ is the unit vector normal to the boundary.\vspace{0.5em}

Our goal is to find an alternative formulation of the problem in terms of the pressure $\Phi$. 
An equation for the pressure can be obtained by taking the divergence of equation (\ref{eq:waveeqU}), which gives
\begin{equation}
    -\mathrm{i} \omega \nabla^2 \Phi = \mathrm{i} \omega  \nabla \boldsymbol{\cdot} \left (2 \boldsymbol{\Omega} \times \boldsymbol{u} \right ) + ( \boldsymbol{u} \boldsymbol{\cdot} \boldsymbol{1}_z ) \partial_z N^2 + N^2(z) \partial_z ( \boldsymbol{u} \boldsymbol{\cdot} \boldsymbol{1}_z ).
    \label{eq:laplaPu}
\end{equation}
The associated boundary condition (BC) is then obtained by taking the scalar product of equation (\ref{eq:waveeqU}) with $\boldsymbol{n}$, which gives
\begin{equation}
    -\mathrm{i} \omega (\nabla \Phi \boldsymbol{\cdot} \boldsymbol{n}) = \mathrm{i} \omega  (2 \boldsymbol{\Omega} \times \boldsymbol{u} ) \boldsymbol{\cdot} \boldsymbol{n} + N^2 (z) ( \boldsymbol{u} \boldsymbol{\cdot} \boldsymbol{1}_z ) (\boldsymbol{1}_z \boldsymbol{\cdot} \boldsymbol{n}) \quad \text{on $\partial V$}.
    \label{eq:BCPu}
\end{equation}
However, equations (\ref{eq:laplaPu})-(\ref{eq:BCPu}) still depend on $\boldsymbol{u}$. 
Therefore, an expression of $\boldsymbol{u}$ in terms of $\Phi$ only is required before obtaining the equation for the pressure (e.g. the Poincar\'e problem, see Box \ref{box:Poincare}). 
We will give below the explicit formula(s) accounting for misaligned rotation and gravity with a varying BV frequency $N(z)$.

\begin{encart}[float=t,label={box:Poincare}]{Poincar\'e problem}
\noindent When $N=0$, the velocity is given by (see formula 2.7.2 in Greenspan \cite{greenspan1968theory})
\begin{equation}
(4 |\boldsymbol{\Omega}|^2 - \omega^2) \boldsymbol{u} = -\mathrm{i} \omega \nabla \Phi - \frac{4}{\mathrm{i}\omega} (\boldsymbol{\Omega} \boldsymbol{\cdot} \nabla \Phi) \boldsymbol{\Omega} + 2 \boldsymbol{\Omega} \times \nabla \Phi.
\end{equation}
The divergence of the latter equation yields the so-called Poincar\'e equation \cite{greenspan1968theory}
\begin{equation}
\omega^2 \nabla^2 \Phi = (2 \boldsymbol{\Omega} \cdot \nabla)^2 \Phi.
\end{equation}
The associated BC, obtained from equation (\ref{eq:BCP2}) below, is given by
\begin{equation}
\omega^2 \nabla \Phi \boldsymbol{\cdot} \boldsymbol{n} =  \mathrm{i} \omega (2\boldsymbol{\Omega} \times \boldsymbol{n}) \boldsymbol{\cdot} \nabla \Phi + 4 (\boldsymbol{\Omega} \boldsymbol{\cdot} \nabla \Phi) (\boldsymbol{\Omega} \boldsymbol{\cdot} \boldsymbol{n}) \quad \text{on $\partial V$},
\end{equation}
which is identical to equation (2.7.4) in Greenspan \cite{greenspan1968theory}.
\end{encart}

\subsection{Second-order equation in the volume}
To find the equation satisfied by the pressure $\Phi$ in $V$, we first find the equation relating $\boldsymbol{u}$ and $\Phi$ (Proposition \ref{prop:uP}). Then, we enforce the incompressibility condition to obtain the pressure equation (Proposition \ref{prop:pressureEq}). 
Finally, we give in Proposition \ref{eq:laplaPNcst} the pressure equation when $N$ is constant. 
It will be of prime importance for the microlocal analysis presented in Section \ref{sec:microlocal}.

\begin{proposition}
The velocity $\boldsymbol{u}$ is related to the pressure $\Phi$ by the equation $\mathcal{D}_\omega (z) \boldsymbol{u} = \boldsymbol{f}$ when $\mathcal{D}_\omega (z) = (\omega^2 - \omega_+^2)(\omega^2 - \omega_-^2) \neq 0$ with 
\begin{equation}
    \omega_\pm^2 = \frac{1}{2} (N^2(z) + 4 |\boldsymbol{\Omega}|^2) \pm \sqrt{\frac{1}{4}(N^2(z) + 4 |\boldsymbol{\Omega}|^2)^2 - 4 N^2(z) (\boldsymbol{\Omega} \boldsymbol{\cdot} \boldsymbol{1}_z)^2},
    \label{eq:wpmNz}
\end{equation}
and with 
\begin{multline}
\boldsymbol{f} = \mathrm{i} \omega^3 \nabla \Phi - 2 \omega^2 (\boldsymbol{\Omega} \times \nabla \Phi) - 4 \mathrm{i} \omega (\boldsymbol{\Omega} \boldsymbol{\cdot} \nabla \Phi) \boldsymbol{\Omega} \\ 
+ N^2 (z) \left [ (2\boldsymbol{\Omega} \boldsymbol{\cdot} \boldsymbol{1}_z) (\boldsymbol{1}_z \times \nabla \Phi) + \mathrm{i} \omega \boldsymbol{1}_z \times (\boldsymbol{1}_z \times \nabla \Phi) \right ].
\end{multline}
\label{prop:uP}
\end{proposition}
\vspace{-0.75em}
\begin{proof}
We recast Euler equation (\ref{eq:euler}a) and buoyancy equation (\ref{eq:mass}) as a linear system given by
\begin{subequations}
\begin{equation}
\boldsymbol{M} \begin{pmatrix}
u_x \\
u_y \\
u_z \\
\zeta/\rho_* \\
\end{pmatrix}
 = - \begin{pmatrix}
\partial_x \Phi \\
\partial_y \Phi \\
\partial_z \Phi \\
0 \\
\end{pmatrix}, \quad 
\boldsymbol{M} := \begin{pmatrix}
    \mathrm{i} \omega & -2 \Omega_z & 2 \Omega_y & 0 \\
    2 \Omega_z & \mathrm{i} \omega & -2\Omega_x & 0 \\
    -2 \Omega_y & 2\Omega_x & \mathrm{i} \omega & g(z) \\
    0 & 0 & -N^2(z)/g(z) & \mathrm{i} \omega \\
\end{pmatrix},
\tag{\theequation a,b}
\end{equation}
\end{subequations}
and with $\boldsymbol{u} = (u_x, u_y, u_z)^\top$.
The system admits non-zero solutions when $\det( \boldsymbol{M}) := \mathcal{D}_\omega (z) \neq 0$.
In this case, we can invert the linear system and the result follows. Note that it reduces to equation (2.7.2) in Greenspan \cite{greenspan1968theory} when $N=0$.
\end{proof}

\begin{proposition}
When $\mathcal{D}_\omega (z) \neq 0$, the pressure obeys the second-order equation given by
\begin{equation}
    \nabla \boldsymbol{\cdot} (\boldsymbol{f}/\mathcal{D}_\omega) = 0 \iff \nabla \boldsymbol{\cdot} \boldsymbol{f} - (1/\mathcal{D}_\omega) \, f_z \partial_z {D}_\omega = 0,
    \label{eq:Pvargen}
\end{equation}
with $\partial_z \mathcal{D}_\omega = -(\omega^2-\omega_-^2) \partial_z \omega_+^2 - (\omega^2-\omega_+^2) \partial_z \omega_-^2$. 
\label{prop:pressureEq}
\end{proposition}
\vspace{0.5em}
\noindent When $N(z)$ is variable and $\boldsymbol{\Omega} = \Omega_z \boldsymbol{1}_z$, equation (\ref{eq:Pvargen}) reduces to 
\begin{equation}
    \nabla^2 \Phi + \frac{\partial}{\partial z} \left ( \frac{N^2 - 4 \Omega_z^2}{\omega^2 - N^2} \frac{\partial \Phi}{\partial z} \right ) = 0,
    \label{eq:pressEqalignedNz}
\end{equation}
which is identical to equation (5.1) in Friedlander \& Siegmann \cite{friedlander1982gafd}. 
For oscillatory motions with $\omega \neq 0$, equation (\ref{eq:laplaP}) is thus equivalent to equation (\ref{eq:waveeqU}) when $|\omega| \neq \omega_\pm(z)$.
However, the pressure must obey another equation $\mathcal{D}_\omega (z) =0$ (see below). 

\begin{proposition}
When $N$ is constant, equation (\ref{eq:Pvargen}) reduces to $\mathcal{P}_\omega (\Phi) = 0$ with
\begin{equation}
\mathcal{P}_\omega := - \nabla^2 - \frac{N^2}{\omega^2 - N^2} \left ( \boldsymbol{1}_z \boldsymbol{\cdot} \nabla \right )^2 + \frac{4}{\omega^2 - N^2} \left ( \boldsymbol{\Omega} \boldsymbol{\cdot} \nabla \right )^2.
\label{eq:laplaP}
\end{equation}
The operator $\mathcal{P}_\omega$ is elliptic when $0<|\omega| < |\omega_-$, and its principal symbol $\mathfrak{p}$ is given by
\begin{equation}
    \mathfrak{p} := A k_x^2 + B k_y^2 + C k_z^2 + 2D k_y k_z + 2E k_x k_z + 2F k_x k_y
    \label{eq:symbolPw}
\end{equation}
with 
\begin{equation*}
    A=\frac{\omega^2 - N^2 - 4 \Omega_x^2}{\omega^2 - N^2}, \ B=\frac{\omega^2 - N^2 - 4 \Omega_y^2}{\omega^2 - N^2}, \ C=\frac{\omega^2 - 4 \Omega_z^2}{\omega^2 - N^2}, 
\end{equation*}
\begin{equation*}
D=\frac{-4 \Omega_y \Omega_z}{\omega^2 - N^2}, \ E=\frac{-4\Omega_x \Omega_z}{\omega^2 - N^2}, \ F=\frac{-4\Omega_x \Omega_y}{\omega^2 - N^2}.
\end{equation*}
\label{eq:laplaPNcst}
\end{proposition}

\noindent Finally, it is worth explicitly writing the Cartesian components of $\boldsymbol{f}=(f_x,f_y,f_z)^\top$ when $N$ is constant (as we will reuse them below).
They are given by
\begin{subequations}
\label{eq:fxfyfz}
\begin{align}
\dfrac{f_x}{\omega^2-N^2} &= \mathrm{i} \omega  [A \partial_x \Phi + F \partial_y \Phi + E \partial_z \Phi] &-& \quad \frac{2 \omega^2 \Omega_y}{\omega^2 - N^2} \partial_z \Phi &+& \quad 2 \Omega_z \partial_y \Phi, \\
\dfrac{f_y}{\omega^2-N^2} &= \mathrm{i} \omega [F \partial_x \Phi + B \partial_y \Phi + D \partial_z \Phi] &+& \quad \frac{2\omega^2 \Omega_x}{\omega^2 - N^2} \partial_z \Phi &-& \quad 2 \Omega_z \partial_x \Phi, \\
\dfrac{f_z}{\omega^2-N^2} &= \mathrm{i} \omega [E \partial_x \Phi + D \partial_y \Phi + C \partial_z \Phi] &+& \quad \frac{2 \omega^2}{\omega^2 - N^2} (\Omega_y \partial_x \Phi - \Omega_x \partial_y \Phi).
\end{align}
where the coefficients $(A,B,C,D,E,F)$ appear in the principal symbol of $\mathcal{P}_\omega$ (Proposition \ref{eq:laplaPNcst}).
\end{subequations}

\subsection{Boundary condition}
\begin{encart}[float=t,label={box:aligned}]{Aligned case}
\noindent
When $\boldsymbol{\Omega} = \Omega_z \boldsymbol{1}_z$ and $N$ is constant, pressure equation (\ref{eq:pressEqalignedNz}) simplifies into
\begin{equation}
   - \nabla^2 \Phi - \frac{N^2 - 4 \Omega_z^2}{ \omega^2 - N^2 } \left ( \boldsymbol{1}_z \boldsymbol{\cdot} \nabla \right )^2 \Phi = 0,
    \label{eq:pressurealigned}
\end{equation}
which is identical to equation (2.21) in Friedlander \& Siegmann \cite{friedlander1982jfm}. 
If required, the velocity is related to the pressure as
\begin{equation}
[4 \Omega_z^2 - \omega^2] \boldsymbol{u} = -\mathrm{i} \omega \nabla \Phi + \frac{\mathrm{i} \omega}{\omega^2 - N^2} [ 4 \Omega_z^2 - N^2] (\boldsymbol{1}_z \boldsymbol{\cdot} \nabla \Phi) \boldsymbol{1}_z + 2 \Omega_z (\boldsymbol{1}_z \times \nabla \Phi).
\end{equation}
To find the pressure BC, we rearrange equation (\ref{eq:BCP2}) as
\begin{equation*}
\mathrm{i} \omega \left [ \omega^2 - N^2 \right ] \nabla \Phi \boldsymbol{\cdot} \boldsymbol{n} =  2 \Omega_z \left [ \omega^2 - N^2 \right ] (\boldsymbol{1}_z \times \nabla \Phi) \boldsymbol{\cdot} \boldsymbol{n} + \mathrm{i} \omega \left [ 4 \Omega_z^2 - N^2 \right ] (\boldsymbol{1}_z \boldsymbol{\cdot} \boldsymbol{n}) \frac{\partial \Phi}{\partial z},
\end{equation*}
which simplifies into
\begin{equation}
 \nabla \Phi \boldsymbol{\cdot} \boldsymbol{n} =  \frac{2 \Omega_z}{\mathrm{i} \omega} (\boldsymbol{n} \times \boldsymbol{1}_z ) \boldsymbol{\cdot} \nabla \Phi - \left [ \frac{N^2 -4 \Omega_z^2}{\omega^2 - N^2} \right ] (\boldsymbol{1}_z \boldsymbol{\cdot} \boldsymbol{n}) \frac{\partial \Phi}{\partial z} \quad \text{on $\partial V$}.
 \label{eq:BCP3}
\end{equation}
BC (\ref{eq:BCP3}) is identical to formula (2.22) in Friedlander \& Siegmann \cite{friedlander1982jfm}.
\end{encart}

\begin{proposition}
The pressure BC is given by $\left . \boldsymbol{f} \boldsymbol{\cdot} \boldsymbol{n} \right |_{\partial V} = 0$, where the components of $\boldsymbol{f}$ are given in equations (\ref{eq:fxfyfz}a-c).
\label{prop:BCp}
\end{proposition}
\vspace{0.5em}
\noindent For the sake of comparison with prior studies, the pressure BC can be rewritten as
\begin{multline}
\mathrm{i} \omega^3 \nabla \Phi \boldsymbol{\cdot} \boldsymbol{n} =  2 \omega^2 (\boldsymbol{\Omega} \times \nabla \Phi) \boldsymbol{\cdot} \boldsymbol{n} + 4 \mathrm{i} \omega (\boldsymbol{\Omega} \boldsymbol{\cdot} \nabla \Phi) (\boldsymbol{\Omega} \boldsymbol{\cdot} \boldsymbol{n}) \\
-  N^2(z) \left [ (2\boldsymbol{\Omega} \boldsymbol{\cdot} \boldsymbol{1}_z) (\boldsymbol{1}_z \times \nabla \Phi) + \mathrm{i} \omega  \boldsymbol{1}_z \times (\boldsymbol{1}_z \times \nabla \Phi) \right ] \boldsymbol{\cdot} \boldsymbol{n}.
\label{eq:BCP1}
\end{multline}
Since we have $\boldsymbol{1}_z \times (\boldsymbol{1}_z \times \nabla \Phi) = (\boldsymbol{1}_z \boldsymbol{\cdot} \nabla \Phi) \boldsymbol{1}_z - \nabla \Phi$, we can rewrite the pressure BC as
\begin{multline}
\mathrm{i} \omega \left [ \omega^2 - N^2(z) \right ] \nabla \Phi \boldsymbol{\cdot} \boldsymbol{n} =  2 \left [ \omega^2 (\boldsymbol{\Omega} \times \nabla \Phi) - N^2(z) (\boldsymbol{\Omega} \boldsymbol{\cdot} \boldsymbol{1}_z) (\boldsymbol{1}_z \times \nabla \Phi) \right ] \boldsymbol{\cdot} \boldsymbol{n} \\
+ \mathrm{i} \omega \left [ 4 (\boldsymbol{\Omega} \boldsymbol{\cdot} \nabla \Phi) \boldsymbol{\Omega} - N^2(z) (\boldsymbol{1}_z \boldsymbol{\cdot} \nabla \Phi) \boldsymbol{1}_z \right ] \boldsymbol{\cdot} \boldsymbol{n}.
\label{eq:BCP2}
\end{multline}
BC (\ref{eq:BCP2}) can be strongly simplified in some cases (see Box \ref{box:aligned}). 

\subsection{First-order equation when $\mathcal{D}_\omega (z) =0$}
The pressure is no longer given by equation (\ref{eq:Pvargen}) when $\mathcal{D}_\omega (z) =0$.
The latter defines the equation of turning surfaces, on which the mathematical problem change of type. 
Onto such surfaces, the pressure obeys another partial differential equation. 
Assuming that the field unknowns remain non-singular across such surfaces, this equation is found by using vector manipulations of the primitive equations (Proposition \ref{prop:eqPfirstorder}).

\begin{proposition}
When $\mathcal{D}_\omega (z) = 0$ (i.e. on turning surfaces), the pressure is given by the first-order equation
\begin{equation}
   (2 \boldsymbol{\Omega} \boldsymbol{\cdot} \boldsymbol{1}_z ) (2\boldsymbol{\Omega} \boldsymbol{\cdot} \nabla \Phi) + 2\mathrm{i} \omega \, \boldsymbol{\Omega} \boldsymbol{\cdot} (\boldsymbol{1}_z \times \nabla \Phi) - \omega^2 (\boldsymbol{1}_z \boldsymbol{\cdot} \nabla \Phi) = 0,
    \label{eq:pressureDw0}
\end{equation}
which is a particular case of equation (3.5) in Friedlander \& Siegmann \cite{friedlander1982gafd}. 
\label{prop:eqPfirstorder}
\end{proposition}
\vspace{-0.75em}
\begin{proof}
We apply $2\boldsymbol{\Omega} \times$ to equation (\ref{eq:euler}a), which gives
\begin{align*}
\mathrm{i} \omega (2\boldsymbol{\Omega} \times \boldsymbol{u}) &= -2\boldsymbol{\Omega} \times (2\boldsymbol{\Omega} \times \boldsymbol{u}) - 2\boldsymbol{\Omega} \times \nabla \Phi - g(z)(\zeta/\rho^*) (2\boldsymbol{\Omega} \times \boldsymbol{1}_z), \\
{} &= -4  \left [ (\boldsymbol{\Omega} \boldsymbol{\cdot} \boldsymbol{u}) \boldsymbol{\Omega} - |\boldsymbol{\Omega}|^2 \boldsymbol{u} \right ] - 2\boldsymbol{\Omega} \times \nabla \Phi - g(z)(\zeta/\rho^*) (2\boldsymbol{\Omega} \times \boldsymbol{1}_z).
\end{align*}
The latter equation can be rearranged to give
\begin{equation}
    \left ( 4 |\boldsymbol{\Omega}|^2 - \omega^2 \right ) \boldsymbol{u} = -\mathrm{i} \omega \nabla \Phi + 4 (\boldsymbol{\Omega} \boldsymbol{\cdot} \boldsymbol{u}) \boldsymbol{\Omega} + 2 \boldsymbol{\Omega} \times \nabla \Phi + \frac{N^2(z)}{\mathrm{i} \omega} (\boldsymbol{1}_z \boldsymbol{\cdot} \boldsymbol{u} ) \left [ 2 \boldsymbol{\Omega} \times \boldsymbol{1}_z - \mathrm{i} \omega \boldsymbol{1}_z \right ], 
    \label{eq:ufctofphi}
\end{equation}
where we have used equation (\ref{eq:euler}a) to simplify $2\boldsymbol{\Omega} \times \boldsymbol{u}$ and used equation (\ref{eq:mass}a) to simplify the buoyancy term. 
Thus, it only remains to express $\boldsymbol{\Omega} \boldsymbol{\cdot} \boldsymbol{u}$ and $\boldsymbol{1}_z \boldsymbol{\cdot} \boldsymbol{u}$ as a function of $\Phi$ only.
We take the scalar product of equation (\ref{eq:euler}a) with $\mathrm{i} \omega \boldsymbol{\Omega}$, which gives
\begin{equation}
    \omega^2 (\boldsymbol{\Omega} \boldsymbol{\cdot} \boldsymbol{u}) = \mathrm{i} \omega (\boldsymbol{\Omega} \boldsymbol{\cdot} \nabla \Phi) + N^2(z) (\boldsymbol{\Omega} \boldsymbol{\cdot} \boldsymbol{1}_z) (\boldsymbol{1}_z \boldsymbol{\cdot} \boldsymbol{u}).
    \label{eq:kdotu}
\end{equation}
Then, we take the scalar product of equation (\ref{eq:euler}a) with $\mathrm{i} \omega \boldsymbol{1}_z$ to obtain
\begin{equation}
    \left [ N^2(z) - \omega^2 \right ] (\boldsymbol{1}_z \boldsymbol{\cdot} \boldsymbol{u}) = \mathrm{i} \omega \, 2 \boldsymbol{\Omega} \boldsymbol{\cdot} (\boldsymbol{1}_z \times \boldsymbol{u}) - \mathrm{i} \omega (\boldsymbol{1}_z \boldsymbol{\cdot} \nabla \Phi).
\end{equation}
Since we have also from equation (\ref{eq:euler}a)
\begin{equation}
\mathrm{i} \omega (\boldsymbol{1}_z \times \boldsymbol{u}) = -  \boldsymbol{1}_z \times (2\boldsymbol{\Omega} \times \boldsymbol{u}) - \boldsymbol{1}_z \times \nabla \Phi = -2 \left [ (\boldsymbol{1}_z \boldsymbol{\cdot} \boldsymbol{u}) \boldsymbol{\Omega} - (\boldsymbol{1}_z \boldsymbol{\cdot} \boldsymbol{\Omega}) \boldsymbol{u} \right ] - \boldsymbol{1}_z \times \nabla \Phi,
    \label{eq:zxu}
\end{equation}
we combine equations (\ref{eq:kdotu}) to (\ref{eq:zxu}) to obtain
\begin{equation}
    -\mathcal{D}_\omega (z) (\boldsymbol{1}_z \boldsymbol{\cdot} \boldsymbol{u}) = \mathrm{i} \omega (2 \boldsymbol{\Omega} \boldsymbol{\cdot} \boldsymbol{1}_z ) (2\boldsymbol{\Omega} \boldsymbol{\cdot} \nabla \Phi) - 2 \omega^2 \, \boldsymbol{\Omega} \boldsymbol{\cdot} (\boldsymbol{1}_z \times \nabla \Phi) - \mathrm{i} \omega^3 (\boldsymbol{1}_z \boldsymbol{\cdot} \nabla \Phi).
    \label{eq:zdotu}
\end{equation}
From equation (\ref{eq:zdotu}), we see that the pressure must satisfy equation (\ref{eq:pressureDw0}) when $\mathcal{D}_\omega (z) =0$.
\end{proof}

\section{Microlocal analysis of boundary-value problems}
\label{sec:microlocal}
We investigate, using microlocal analysis, under which conditions a quite general boundary-value scalar problem is elliptic in \S\ref{subsec:maths}.
Then, we revisit the ellipsoidal model with a constant $N$ in \S\ref{sec:IGMs} to prove that the low-frequency spectrum $\sigma_2$ is essential.

\subsection{Elliptic boundary-value problem}
\label{subsec:maths}
\begin{remark}
In this subsection, the Cartesian coordinates are denoted by $(x_1,x_2,x_3)$ to make use of index notations.
\end{remark}

We consider a smooth bounded domain $V$ in $\mathbb{R}^d$, and a symmetric elliptic differential operator $\mathcal{P}_\omega$ of order 2 with smooth coefficients and of principal symbol $\mathfrak{p}$.
We also assume that the Dirichlet extension of $\mathcal{P}_\omega$ is $>0$.
The scalar $\Phi$ is governed by the boundary-value problem given by
\begin{subequations}
\label{eq:pressureelliptic}
\begin{equation}
    \mathcal{P}_\omega(\Phi) = 0 \ \,  \text{in $V$}, \quad \mathcal{B}(\Phi) := \boldsymbol{n}^\star \boldsymbol{\cdot} \nabla \Phi + \mathcal{V}(\Phi) = 0 \ \,  \text{on $\partial V$},
    \tag{\theequation a,b}
\end{equation}
\end{subequations}
where $\boldsymbol{n}^\star$ is the unit vector normal computed with respect to the metric  $\boldsymbol{g}$ induced on $\partial V$ by the principal symbol of $\mathcal{P}_\omega$ (see Box \ref{box:normalmetric}), and $\mathcal{V}$ is a differential operator of degree $1$ tangent\footnote{\emph{tangent} is here understood in the language of differential geometry \cite{schutz1980geometrical}. For instance, if $\partial V$ is described by $q_1=\text{constant}$ for a set of curvilinear coordinates $(q_1,q_2,q_3)$, then $\mathcal{V}$ only involves derivatives $(\partial_{q_2},\partial_{q_3})$ on the tangent plane at a given point.} to $\partial V$. 
To solve system (\ref{eq:pressureelliptic}), we introduce the Dirichlet-to-Neumann operator $\mathcal{N}$ in Proposition \ref{prop:D2N}. 

\begin{proposition}
If $\Psi:\partial V \to \mathbb{C}$ is a field defined on the boundary, then the Dirichlet-to-Neumann operator is defined by $\mathcal{N} (\Psi) := \partial_{n^\star} \Phi$ where $\partial_{n^\star} = \boldsymbol{n}^\star \boldsymbol{\cdot} \nabla$ is the normal derivative with respect to the vector $\boldsymbol{n}^\star$, and where $\Phi$ is the solution of the Dirichlet problem in $V$ given by $\mathcal{P}_\omega (\Phi) = 0$ in $V$ and $\left . \Phi \right |_{\partial V}= \Psi$ on $\partial V$.
\label{prop:D2N}
\end{proposition}
Note that $\mathcal{N}$ is a self-adjoint operator acting on functions defined in the space $\mathrm{L}^2(\partial V, \mathrm{d}\sigma)$ for the area $\mathrm{d} \sigma$ defined by the metric $\mathfrak{g}$ restricted on $\partial V$ \cite{CdV2024spectrum}. 
To solve problem (\ref{eq:pressureelliptic}), we first solve the boundary problem $\mathcal{N}(\Psi) + \mathcal{V} (\Psi)=0$ on $\partial V$, and then consider the Dirichlet problem in $V$ given the boundary field $\Psi$.
In practice, this amounts to solving the problem on the boundary $\partial V$ and, then, propagating the boundary solution in the interior $V$ by using the Dirichlet problem.
Actually, the nature of the boundary-value problem crucially depends on the nature of the boundary equation $\mathcal{N}(\Psi) + \mathcal{V} (\Psi)=0$ on $\partial V$.
This can be analysed using microlocal analysis. 
Indeed, $\mathcal{N}$ is a nice pseudo-differential operator of principal symbol is given in Proposition \ref{eq:symbolD2N}.
The ellipticity of boundary-value problem (\ref{eq:pressureelliptic}) is then given by Proposition \ref{theo:ellipticBC}. 
Finally, the nature is given by Proposition \ref{theo:discretespec} below. 

\begin{proposition}
The principal symbol of the Dirichlet-to-Neumann operator $\mathcal{N}$, defined for an elliptic operator with a positive principal symbol $\mathfrak{p}>0$, is given by $\sqrt{\left . \mathfrak{g}^\star\right |_{\partial V}}$, where $\left . \mathfrak{g}^\star\right |_{\partial V}$ is the dual of the restriction on $\partial V$ of the metric $\mathfrak{g}$ induced by $\mathfrak{p}$ in $V$. 
\label{eq:symbolD2N}
\end{proposition}
\vspace{-0.75em}
\begin{proof}
It is given in the Appendices of Taylor's book \cite{taylor2013partial}. Note that the principal symbol is defined as $-\sqrt{-\left . \mathfrak{g} \right |_{\partial V}}$ in \cite{taylor2013partial}. The first minus sign comes from the fact that an inward vector $\boldsymbol{n}^\star$ was considered, and the second from the fact that the other convention $\mathfrak{p}\leq 0$ was used.
\end{proof}

\begin{proposition}
Let us denote by $\mathfrak{v}$ the principal symbol of $\mathcal{V}$.
Then, boundary-value problem (\ref{eq:pressureelliptic}) is elliptic if and only if the principal symbol $\sqrt{\left . \mathfrak{g}^\star\right |_{\partial V}} + \mathfrak{v}$ is elliptic on $\partial V$, that is if it does not vanish for non-zero covectors $\boldsymbol{k} \in \mathbb{R}^2$ on the boundary $\partial V$. 
\label{theo:ellipticBC}
\end{proposition}
\vspace{-0.75em}
\begin{proof}
See Chapter 9 in \cite{grubb2008distributions}. This is also known as the Shapiro-Lopatinskii
condition \cite{grubb2008distributions}.
\end{proof}

\begin{encart}[float=t,label={box:normalmetric}]{Normal vector induced by the metric on $\partial V$}
\noindent The normal vector $\widehat{\boldsymbol{n}}$ defined from the metric on $\partial V$ is computed as follows.
The principal symbol $\mathfrak{p}$ defines a metric given by $\mathfrak{g}^\star := (\boldsymbol{\mathfrak{g}}^\star)^{ij} k_i k_j$ for the covectors $\boldsymbol{k}\in\mathbb{R}^3$ in the cotangent space, whose inverse $\boldsymbol{\mathfrak{g}} := (\boldsymbol{\mathfrak{g}}^\star)^{-1}$ induces a metric in $V$ defined by $\mathfrak{g} := \boldsymbol{\mathfrak{g}}_{ij} \mathrm{d}x_i \mathrm{d}x_j$. 
Then, the normal vector $\boldsymbol{n}^\star$ computed with respect to $\boldsymbol{\mathfrak{g}}^\star$ is given by
\begin{equation*}
\boldsymbol{n}^\star \propto \boldsymbol{\mathfrak{g}}^\star \boldsymbol{n}, \quad \boldsymbol{\mathfrak{g}}_{ij} n^\star_i n^\star_j =1,
 \end{equation*}
where $\boldsymbol{n}=(n_1,n_2,n_3)^\top$ is the unit vector normal to $\partial V$ in Cartesian coordinates with $n_1^2+n_2^2+n_3^2 = 1$.
The constant of proportionality is such that $\boldsymbol{n}^\star$ has a unit norm with respect to the metric $\boldsymbol{\mathfrak{g}}$. 
The vector $\boldsymbol{n}^\star$ is also called the \textbf{conormal} vector (with respect to $\boldsymbol{\mathfrak{g}}^\star$) in the study of partial differential equations \cite{miranda1970partial}. 
Note that $\boldsymbol{\mathfrak{g}}^\star$ is a contravariant metric tensor associated with the operator $(\boldsymbol{\mathfrak{g}}^\star)^{ij} \partial_{x_i} \partial_{x_j}$.
Thus, the conormal $\boldsymbol{n}^\star$ is the contravariant description (in the space with the metric defined by $\boldsymbol{\mathfrak{g}}$) of the covariant normal vector $\boldsymbol{n}$ to $\partial V$ (in the space with the Euclidean metric). 
\end{encart}


\begin{proposition}
If boundary-value problem (\ref{eq:pressureelliptic}) is elliptic in the interval $[\omega_1,\omega_2]$, then its spectrum is discrete (or empty) in this interval. Otherwise, the spectrum is essential in $[\omega_1,\omega_2]$.
\label{theo:discretespec}
\end{proposition}
\vspace{-0.75em}
\begin{proof}
If the boundary-value pressure problem is elliptic, the operator $\mathcal{P}_\omega$ admits a parametrix $\mathcal{L}$ (i.e. an approximate inverse) such that $\mathcal{P}_\omega \circ \mathcal{L} = \mathrm{Id} + \mathcal{R}$, where $\mathcal{R}$ is a pseudo-differential operator of degree $-1$. 
Hence, $\mathcal{R}$ is a compact operator. 
Moreover, $\mathcal{P}_\omega$ and $\mathcal{R}$ are also analytic functions.
Then, the result follows from the Fredholm analytic theorem (see Theorem VI.14 in \cite{reed1972mathsphys}). Otherwise, the values $\omega \in [\omega_1, \omega_2]$ belong to the essential spectrum \cite{colin2020spectral}.
\end{proof}

In practice, the difficult step is to evaluate the principal symbol of the Dirichlet-to-Neumann operator (Proposition \ref{eq:symbolD2N}). 
In practice, this can be done as follows.
The principal symbol $\mathfrak{p}$ induces a metric in $V$ given by $\mathfrak{g} := \boldsymbol{\mathfrak{g}}_{ij} \mathrm{d}x_i \mathrm{d}x_j$.
Let us assume that $x_i = x_i(q_j)$ with the curvilinear coordinates $(q_1,q_2,q_3)$ such that the boundary $\partial V$ is given by $q_1 = \text{constant}$ (for instance). 
Then, we can rewrite the two-dimensional metric restricted on $\partial V$ as $\left. \mathfrak{g} \right |_{\partial V} = \widetilde{\boldsymbol{\mathfrak{g}}}_{ij} \mathrm{d}q_i \mathrm{d}q_j$. 
Then, the dual of the metric induced by $\boldsymbol{\mathfrak{g}}$ on $\partial V$ is given by $\left. \mathfrak{g}^\star \right |_{\partial V} = (\widetilde{\boldsymbol{\mathfrak{g}}}^\star)^{ij} k_i k_j$ with $\widetilde{\boldsymbol{\mathfrak{g}}}^\star = \widetilde{\boldsymbol{\mathfrak{g}}}^{-1}$. 

\subsection{Back to the pressure equation}
\label{sec:IGMs}
We aim to apply the above microlocal analysis to the boundary-value pressure problem when $N$ is constant, which is elliptic in $V$ when $0 < |\omega| < \omega_-$. 
The principal symbol $\mathfrak{p}$, defined in equation (\ref{eq:symbolPw}), induces the metric $\mathfrak{g}$ in $V$ and its dual $\mathfrak{g}^\star$ characterised by the metric tensors
\begin{subequations}
\label{eq:gg*}
\begin{equation}
    \boldsymbol{\mathfrak{g}}^\star = \begin{pmatrix}
    A & F & E \\
    F & B & D \\
    E & D & C \\
    \end{pmatrix}, \quad \boldsymbol{\mathfrak{g}} = \frac{1}{\det (\boldsymbol{\mathfrak{g}}^\star)}
\begin{pmatrix}
{B C - D^{2}} & {- C F + D E} & {- B E + D F} \\
{- C F + D E} & {A C - E^{2}} & {- A D + E F} \\
{- B E + D F} & - A D + E F & A B - F^{2}
\end{pmatrix},
\tag{\theequation a,b}
\end{equation}
\end{subequations}
with $\det (\boldsymbol{\mathfrak{g}}^\star) = A B C - A D^{2} - B E^{2} - C F^{2} + 2 D E F$. 
For a given normal vector $\boldsymbol{n} = (n_1,n_2,n_3)^\top$ in Cartesian coordinates, the conormal vector $\boldsymbol{n}^\star$ is thus given by
\begin{equation}
   \boldsymbol{n}^\star = \frac{1}{\alpha} \left [ n_1 \begin{pmatrix}
        A \\ F \\ E \\
    \end{pmatrix} + n_2 \begin{pmatrix}
        F \\ B \\ D \\
    \end{pmatrix} + n_3 \begin{pmatrix}
        E \\ D \\ C \\
    \end{pmatrix} \right ],
    \label{eq:conormalP}
\end{equation}
with $\alpha = [ A n_1^2 + B n_2^2 + C n_3^2 + 2 (D n_2 n_3 + E n_1 n_3 + F n_1 n_2)]^{1/2}$ such that $\boldsymbol{\mathfrak{g}}_{ij} n_i^\star n_j^\star = 1$.  
Thanks to expression (\ref{eq:conormalP}), we clearly see that the conormal naturally appears in equations (\ref{eq:fxfyfz}a-c).
Thus, we can apply Proposition \ref{theo:ellipticBC} to determine the nature of the boundary-value pressure problem.

Let us consider here the aligned case with $\boldsymbol{\Omega}=(0,0,\Omega_z)^\top$, for which the equations are then greatly simplified (see Box \ref{box:aligned}). 
In this case, the two metrics reduce to
\begin{subequations}
\begin{equation}
    \mathfrak{g}^\star = k_x^2 + k_y^2 + C k_z^2, \quad \mathfrak{g} = \mathrm{d}x^2 + \mathrm{d}y^2 + C^{-1}\mathrm{d}z^2,
    \tag{\theequation a,b}
\end{equation}
\end{subequations}
and the conormal vector is given by $\boldsymbol{n}^\star = (1/\alpha) \, (n_1, n_2, n_3C)^\top$ with $\alpha = \sqrt{1+n_3^2(C-1)}$, where the Euclidian normal vector is $\boldsymbol{n} = (n_1, n_2, n_3)^\top$ with $n_1^2 + n_2^2 + n_3^2 = 1$. 
To demonstrate that the pressure problem is not elliptic for a given frequency $0<|\omega| < \omega_-$, it is sufficient to find one location on $\partial V$ where Proposition \ref{theo:ellipticBC} is not valid. 
We look at the behaviour at the equator, where the tangent plane is vertical with $\boldsymbol{n} = \boldsymbol{1}_x$ (for instance) such that the pressure BC is simply $f_x=0$. 
The restriction of the metrics on $\partial V$ gives 
\begin{subequations}
\begin{equation}
    \left . \mathfrak{g} \right |_{\partial V} = \mathrm{d}y^2 + C^{-1}\mathrm{d}z^2, \quad \left . \mathfrak{g}^\star \right |_{\partial V} = k_y^2 + C k_z^2,
    \tag{\theequation a,b}
\end{equation}
\end{subequations}
and the conormal vector is simply given by $\boldsymbol{n}^\star = \boldsymbol{1}_x$. 
The principal symbol of the pressure BC is then given by
\begin{equation}
\sqrt{k_y^2 + C k_z^2} = -\frac{2\Omega_z}{\omega} k_y,
\end{equation}
which reduces to
\begin{equation}
    \underbrace{\left ( \frac{4 \Omega_z^2 -\omega^2}{\omega^2} \right ) }_{\geq 0} k_y^2 = \underbrace{C}_{\geq 0} k_z^2
\end{equation}
where the two prefactors are positive when $0 < |\omega| < \omega_- = \min(N,2\Omega_z)$ in the aligned case. 
Therefore, we can find a real-valued covector $\boldsymbol{k} = (0,k_y,k_z)^\top$ such that the principal symbol of the BC vanishes. 
The boundary-value pressure problem is thus not elliptic in this case (Proposition \ref{eq:symbolD2N}), and the low-frequency spectrum is essential (Proposition \ref{theo:discretespec}).
Similarly, we conjecture that the spectrum is always essential when $0 < |\omega| < \omega_-$ when $\boldsymbol{\Omega}$ and $\boldsymbol{1}_z$ are misaligned.
The corresponding calculations (which are more lengthy) will be included in a forthcoming paper \cite{CdV2024spectrum}, in which we will also further explore the properties of the low-frequency (surface) modes.

\section{Boundary-layer analysis}
\label{appendix:BLT}
We introduce diffusion in the problem to investigate the diffusive decay rates of IGMs for a constant BV frequency in an ellipsoid.
We use the same model as in the main text except that, here, the fluid has a non-zero kinematic viscosity $\nu$ and diffusivity $\kappa$.
We employ boundary-layer theory (BLT) to approach the low-diffusive regime (relevant for geophysical vortices), and introduce dimensionless variables below. 
We take $\Omega_s^{-1}$ as the time scale, $L$ as the length scale (e.g. either $a$ or $c$), and $\rho_* L N^2/g$ as the density scale. 
For the sake of concision, the dimensionless variables are written below using the same symbols as the dimensional variables in the main text. 
We seek modal solutions as $[ \boldsymbol{v}, \pi, \rho ] (\boldsymbol{r},t)~=~[ \boldsymbol{u}, \Phi, \zeta ] (\boldsymbol{r}) \exp (\lambda t)$, where $\lambda \in \mathbb{C}$ is the diffusive eigenvalue.
The dimensionless linearised Navier-Stokes equations are then given in the rotating frame by
\begin{subequations}
\label{eq:NSBLT}
\begin{equation}
    \lambda \boldsymbol{u} + 2 (\boldsymbol{1}_\Omega \times \boldsymbol{u}) = - \nabla \Phi - \widetilde{N}^2 \zeta \boldsymbol{1}_z + E \nabla^2 \boldsymbol{u}, \quad  \lambda \zeta = (\boldsymbol{u} \boldsymbol{\cdot} \boldsymbol{1}_z) + E_\kappa \nabla^2 \zeta,
   \tag{\theequation a,b}
\end{equation}
\end{subequations}
with the (dimensionless) Ekman number $E = \nu/(\Omega_s L^2)$ and $E_\kappa = E/Pr$ where $Pr=\nu/\kappa$ is the Prandtl number, and with the normalised BV frequency $\widetilde{N} = N/\Omega_s$. 
Multiplying equation (\ref{eq:NSBLT}a) by $\boldsymbol{u}^\dagger$ and the complex-conjugate of equation (\ref{eq:NSBLT}b) by $\widetilde{N}^2 \zeta$, we obtain after volume integration
\begin{equation}
     \lambda || \boldsymbol{u}||^2 + \lambda^\dagger \widetilde{N}^2 || \zeta ||^2 = E \langle \boldsymbol{u}, \nabla^2 \boldsymbol{u} \rangle + \widetilde{N}^2 E_\kappa \langle \nabla^2 \zeta, \zeta \rangle,
     \label{eq:decayC1}
\end{equation}
where the scalar product between two scalar quantities is defined as $\langle \zeta_1, \zeta_2\rangle = \int_V\zeta_1^\dagger \boldsymbol{\cdot} \zeta_2 \, \mathrm{d} V$. 
Note that formula (\ref{eq:decayC1}) is also valid in a sphere with a radial gravity \cite{dintrans1999gravito}.
Finally, the problem is supplemented with BC. 
We assume that the velocity field satisfies the stress-free BCs
\begin{subequations}
\label{eq:SFBC}
\begin{equation}
    \left . \boldsymbol{u} \boldsymbol{\cdot} \boldsymbol{n} \right |_{\partial V} = 0, \quad \left . [ \boldsymbol{\epsilon}(\boldsymbol{u}) \boldsymbol{\cdot} \boldsymbol{n} ] \times \boldsymbol{n} \right |_{\partial V} = \boldsymbol{0}
    \tag{\theequation a,b}
\end{equation}
\end{subequations}
where $\boldsymbol{\epsilon} (\boldsymbol{u}) = (1/2) \left [ \nabla \boldsymbol{u} + (\nabla \boldsymbol{u})^\top \right]$ is the strain-rate tensor.
SF-BCs (\ref{eq:SFBC}a,b) allow the tangential velocity to slip on $\partial V$.
For the density perturbation, we enforce a Neumann BC $\left . \nabla \zeta \boldsymbol{\cdot} \boldsymbol{n} \right |_{\partial V}~=~0$. 
A density jump is thus possible between the vortex and the ambient fluid (as considered in idealised models \cite{hassanzadeh2012universal,aubert2012universal,lemasquerier2020remote}), but without exchange of mass between the two fluids. 

We can simplify equation (\ref{eq:SFBC}) using BLT when $E \ll 1$.
Formally, classical BLT for rotating fluids is not expected to be strongly modified by stratification when $\widetilde{N}$ is not too large (e.g. for rigid boundaries \cite{friedlander1989asymptotic}).
Moreover, it is known that the boundary-layer flow is $E^{1/2}$ smaller than the bulk flow for SF-BCs \cite{rieutord1997ekman,vidal2023precession}.
This considerably simplifies the BLT, because explicit expressions for the boundary-layer solutions are no longer required to estimate the decay rate of the modes. 
We seek the variables at the leading order in $E$ for the SF-BCs as
\begin{subequations}
\label{eq:BLTansatz}
\begin{equation}
    \lambda \simeq \mathrm{i} \omega_0 + E \lambda_1, \quad \left [ \boldsymbol{u}, \zeta \right ] \simeq \left [ \boldsymbol{u}_0, \zeta_0 \right ] + E^{1/2} \left [ \boldsymbol{u}_1, \zeta_1 \right ],
    \tag{\theequation a,b}
\end{equation}
\end{subequations}
where $[\omega_0,\boldsymbol{u}_0]$ is the eigenvalue-eigenvector pair of a diffusionless IGM, $\lambda_1 \in \mathbb{C}$ is the first-order correction of the eigenvalue, and $\left [ \boldsymbol{u}_1, \zeta_1 \right ]$ are the first-order corrections within the boundary layer such that $\boldsymbol{u}_0 + E^{1/2} \boldsymbol{u}_1$ satisfies SF-BCs (\ref{eq:SFBC}a,b) and $\zeta_0~+~E^{1/2}\zeta_1$ satisfies the Neumann BC for the density. 
We substitute the above asymptotic expansions into equation (\ref{eq:decayC1}) and we obtain
\begin{equation}
\lambda_1 ||\boldsymbol{u}_0||^2 + \lambda_1^\dagger \widetilde{N}^2 ||\zeta_0||^2 \simeq - \left ( 2 \int_V  \boldsymbol{\epsilon} (\boldsymbol{u}_0) : \boldsymbol{\epsilon} (\boldsymbol{u}_0^\dagger) \, \mathrm{d} V + \frac{\widetilde{N}^2}{Pr} \int_V (\nabla \zeta_0)^2 \, \mathrm{d} V \right )
\label{eq:decaySF1}
\end{equation}
at the order $E$, where we have used the SF-BCs and the Neumann BC to simplify the volume integrals on the right-hand side.
Since the latter is real-valued and negative, there is no frequency correction at the leading order in $E$ due to SF-BCs such that $\Im_m(\lambda_1) = 0$ and $\Re_e(\lambda_1)=\tau_1 \leq 0$.
We recover from equation (\ref{eq:decaySF1}) the viscous decay rate of pure inertial modes with SF-BCs \cite{vidal2023precession}.

Finally, we can crudely estimate whether forced IGMs would quickly decay or not in large-scale stratified vortices by plugging geophysical estimates into equation (\ref{eq:decaySF1}). 
Typical values for the kinematic viscosity are
$\nu \sim 10^{-6}$~m${}^2$.s${}^{-1}$ for water and $\nu \sim 4 \times 10^{-7}$~m${}^2$.s${}^{-1}$ for gas giants (according to ab-initio simulations \cite{french2012ab}).
Using Table 1 in the main text, typical values for the Ekman number are thus $E \sim 10^{-8} - 10^{-12}$ for Mediterranean eddies and $E \sim 10^{-12} - 10^{-17}$ for Jovian vortices. 
Estimating the Prandtl number depends on whether the stratification is due to thermal effects or compositional ones (for which $Pr \gg 1$ in both water and gas giants).
Heat diffusion in Mediterranean eddies is characterised by $Pr \simeq 0.7$, and typical values $Pr \sim 10^{-2} - 1$ are expected for thermal diffusion in gas giants \cite{french2012ab}.
Hence, we have $\widetilde{N} E_\kappa \sim 10^{-4}-10^{-8}$ for Mediterranean eddies and $\widetilde{N} E_\kappa \sim 10^{-6}-10^{-11}$ for Jovian vortices. 
Consequently, the largest-scale IGMs are not expected to be strongly damped by diffusion in stratified vortices.

\bibliographystyle{RS}
\bibliography{supp}